\documentclass[11pt]{article}
\usepackage{setspace}
\usepackage[utf8]{inputenc}
\usepackage[english]{babel}
\usepackage{color,amstext, amsmath,latexsym,amsbsy,amssymb,amsthm,graphicx}
\usepackage{mathrsfs}
\usepackage{txfonts}
\usepackage{amsfonts}
\usepackage{exscale,relsize,makeidx}
\textwidth=6.2in 
\textheight=9in 
\topmargin -.5in
\textwidth=6.2in 
\textheight=9in 
\topmargin -.5in
\setlength{\oddsidemargin}{0.25in}
\setlength{\headheight}{0.3in}
\setlength{\headsep}{0.4in}
\setlength{\footskip}{0.25in}
\newtheorem{Lemma}{Lemma}[section]
\newtheorem{Definition}[Lemma]{Definition}
\newtheorem{Remark}[Lemma]{Remark}
\newtheorem{Proposition}[Lemma]{Proposition}
\newtheorem{Theorem}[Lemma]{Theorem} 
\newtheorem{Corollary}[Lemma]{Corollary}

\newcounter{mnotecount}[section]

\begin{document}
\title{Applications of Fixed Point Theorems to the Vacuum Einstein Constraint Equations with Non-Constant Mean Curvature}
\author{The Cang Nguyen}
\date{May 28, 2014}
\maketitle
\begin{abstract}
In this paper, we introduce new methods for solving the vacuum Einstein constraints equations: the first one is based on Schaefer's fixed point
theorem (known methods use Schauder's fixed point theorem) while the second one uses the concept of half-continuity coupled with the introduction
of local supersolutions. These methods allow to: unify some recent existence results, simplify many proofs (for instance, the one of the main theorem
in \cite{DGH}) and weaken the assumptions of many recent results.
\end{abstract}

\tableofcontents

\section{Introduction}
The Einstein equations for a $(n+1)-$manifold $\mathscr{M}$ ($n\geq 3$) and a Lorentzian metric $h$ describe the evolution of the gravitational field. In the vacuum case, they read
\begin{equation}\label{Einstein.eq}
\text{Ric}_{\mu\upsilon}-\frac{R}{2}h_{\mu\upsilon}=0.
\end{equation}
Here $\text{Ric}$ and $\text{R}$ are respectively the Ricci and the scalar curvature of $\text{h}$. If $M$ is a two sided spacelike hypersurface with unit normal $\nu$ of this geometric
space, one can define on $M$
\begin{itemize}
\item the induced metric $g=h_{|TM}$,
\item the second fundamental form $K$: $K(X, Y) = h({}^h\nabla_X \nu, Y)$.
\end{itemize}
It follows from the Gauss and Codazzi equations that $g$ and $K$ are related by the so-called constraint equations
\begin{equation}\label{Einstein.eq2}
\begin{aligned}
R_{g}-|K|_{g}^{2}+\left(\text{tr}_{g}K\right)^{2}&=0,\\
\text{div}_{g}K-d~\text{tr}_{g}K&=0.
\end{aligned}
\end{equation}
Y. Choquet-Bruhat and R. Geroch proved in \cite{CB,CBG69} that if a $n-$manifold $M$, a Riemannian metric $g$ on $M$ and a $2-$tensor $K$ form a solution to Equations \eqref{Einstein.eq2}, we can reconstruct
the space-time $(\mathscr{M},h)$ from the knowledge of $(M,g,K)$.\\

In an effort to solve \eqref{Einstein.eq2}, A. Lichnerowicz \cite{Lichnz} and later Y. Choquet-Bruhat and J. York \cite{CBY80} introduced a very efficient approach called the conformal method. For this procedure,
let $M$ be a $n-$manifold and $g$ be a Riemannian metric on $M$. One specifies a mean curvature $\tau$ and a transverse-traceless tensor $\sigma$ (i.e. a symmetric, trace-free, divergence-free $(0,2)-$tensor)
on $M$. One looks for a conformal factor $\varphi~:~M\longrightarrow\mathbb{R}$ and a $1-$form $W$ such that
\begin{equation}
 \left\{\begin{array}{ll}
 \widetilde{g} &=\varphi^{N-2}g\\
 \widetilde{K}&=\frac{1}{n}\tau\widetilde{g}+\varphi^{-2}(\sigma+LW)
 \end{array} \right.
 \end{equation}
form a solution to the Einstein equations \eqref{Einstein.eq2}. Here $N=2n/(n-2)$ and $L$ is the conformal Killing operator defined by
$$LW_{ij}=\nabla_{i} W_{j}+\nabla_{j}W_{i}-\frac{2}{n}(\mathrm{div}W)g_{ij},$$
where $\nabla$ denotes the Levi-Civita connection associated to the metric $g$ and $\mathrm{div} W = \nabla^i W_i$ is the divergence operator.
Equations \eqref{Einstein.eq2} are then reformulated into the following coupled nonlinear elliptic system for a positive $\varphi$ and a vector field $W$:
\begin{subequations}\label{CE}
\begin{eqnarray}
\frac{4(n-1)}{n-2}\Delta \varphi+R\varphi&=&-\frac{n-1}{n}\tau^{2}\varphi^{N-1}+|\sigma+LW|^{2}\varphi^{-N-1}~~{\footnotesize [\text{Lichnerowicz equation}]}\\
-\frac{1}{2}L^{*}LW&=&\frac{n-1}{n}\varphi^{N}d\tau ~~~~~~~~~~~~~~~~~~~~~~~~~~~~~~~\qquad{\footnotesize [\text{vector equation}]},
\end{eqnarray}
\end{subequations}
where $\Delta$ is the nonnegative Laplace operator, $R$ is the scalar curvature of $g$ and $L^{*}$is the formal $L^{2}-$adjoint of $L$, so $-\frac{1}{2} (L^*L W)_j = \nabla^i (LW_{ij})$.
These coupled equations are called the \textit{conformal constraint equations}. When $\tau$ is constant, the system \eqref{CE} becomes uncoupled (since $d\tau=0$ in the vector equation) and a
complete description of the situation was achieved by Isenberg (see \cite{BI}). When $\tau$ is not constant, the problem is much harder and there are still many situations where the solvability
of the system \eqref{CE} is not known. Recently, much progress has been made by several authors. Let us cite for instance:
\begin{itemize}
\item Isenberg-Moncrief \cite{IM}, Maxwell \cite{Maxwell05} for near CMC-results (i.e. $\tau$ close to constant),
\item Holst-Nagy-Tsogtgerel \cite{HNT} and Maxwell \cite{Maxwell} for far from CMC-results with a smallness assumption on $\sigma$, depending only on $g$ and $\tau$.
\item Dahl-Gicquaud-Humbert \cite{DGH} who proved that non-existence of solutions to a certain limit equations ensures the solvability of \eqref{CE}.
\end{itemize}
In this paper, we develop two new methods for solving the coupled system \eqref{CE}. The first one is based on Schaefer's fixed points which turns out to be more efficient in this situation than an application
of Schauder's fixed point theorem as used in \cite{DGH}, \cite{HNT} and \cite{Maxwell}. This method has several applications. In particular, it greatly simplifies the proof of the main theorem in \cite{DGH}
(see Theorem \ref{theoremofDHG}) and allows to recover an existence result provided $\sigma$ is small enough in $L^\infty$ (depending only on $g$ and $\tau$) as noticed in \cite{HNT} and \cite{Maxwell} (see
Proposition \ref{farCMCofMawell}). Furthermore, it gives an unifying point of view of these results. It is also worth noting that another effort to obtain the far from CMC-result has been recently presented
in \cite{G-NQA} using the implicit function theorem.\\

The second method uses half-continuity of appropriate maps. It allows to show that the assumption of the existence of global supersolutions used in \cite{DGH}, \cite{HNT} and \cite{Maxwell} to solve
\eqref{CE} can be weakened: the existence of local supersolutions, whose definition is given in Section 4, is sufficient here. As applications of this method, we prove the solvability of a modification of
the system \eqref{CE} when $\tau$ has some zeros and we show that the smallness of $\sigma$ in $L^{2}$ leads to the solvability of \eqref{CE}. This improves the results by Holst-Nagy-Tsogtgerel
\cite{HNT} and Maxwell \cite{Maxwell}.\\

In Section 2, we introduce the notations which will appear in the whole paper and we establish some  general results used in many proofs. In Section 3, we show how Schaefer's fixed point theorem
can be used to solve \eqref{CE}. We apply it to give a simpler proof of the main result in \cite{DGH} (see Theorem \ref{theoremofDHG}) and enlighten several consequences of this method. In Section 4, we
introduce the half-continuity method and give some applications.\\

\section*{Acknowledgements}
\addcontentsline{toc}{section}{Acknowledgements}
The author would like to thank Romain Gicquaud and Emmanuel Humbert for their advice, helpful discussions and great patience in their careful reading of preliminary versions of this article. 

\section{Preliminaries}
Let  $M$ be a compact manifold of dimension $n\geq 3$, our goal is to find solutions to the vacuum Einstein equations using the conformal method. The given data on $M$ consist in
\begin{equation}\label{condintial}
\begin{aligned}
&\bullet~~\mbox{a Riemannian metric $g\in W^{2,p},$}\\
&\bullet~~\mbox{a function $\tau\in W^{1,p}$,}~~~~~~~~~~~~~~~~~~~~~~~~~~~~~~~~~~~~~~~~~~~~~~~~~~~~~~~~~~~~~~~~~~~~~~~~~~~~~~~~~~~~~~~~~~~~~~~~~~~~~~~~~~\\
&\bullet~~\mbox{a symmetric, trace- and divergence-free $(0,2)-$tensor $\sigma\in W^{1,p}$,}
\end{aligned}
\end{equation}
with $p>n$. And one is required to find
\begin{equation*}
\begin{aligned}
&\bullet~~\mbox{a positive function $\varphi\in W^{2,p}$,}~~~~~~~~~~~~~~~~~~~~~~~~~~~~~~~~~~~~~~~~~~~~~~~~~~~~~~~~~~~~~~~~~~~~~~~~~~~~~~~~~~~~~~~~~~~~~~\\
&\bullet~~\mbox{a $1-$form $W\in W^{2,p}$,}
\end{aligned}
\end{equation*}
which satisfy the conformal constraint equations \eqref{CE}. We also assume that
\begin{equation}\label{condinitial2}
\begin{aligned}
&\bullet~~\mbox{$Z(\tau)$ has zero Lebesgue measure,}\\
&\bullet~~\mbox{$(M,g)$ has no conformal Killing vector field,}~~~~~~~~~~~~~~~~~~~~~~~~~~~~~~~~~~~~~~~~~~~~~~~~~~~~~~~~~~~~~~~~~~\\
&\bullet~~\mbox{ $\sigma\nequiv 0$ if $\mathcal{Y}_{g}\geq 0$,}
\end{aligned}
\end{equation}
where $Z(\tau)=\tau^{-1}(0)$ denotes the set of zero points of $\tau$ and $\mathcal{Y}_{g}$ is the Yamabe constant of the conformal class of $g$; that is
$$\mathcal{Y}_{g}=\inf_{\substack{f\in C^{\infty}(M)\\f\nequiv 0}}\frac{\frac{4(n-1)}{n-2}\int_{M}{|\nabla f|^{2}dv}+\int_{M}{Rf^{2}}}{\|f\|^{2}_{L^{N}(M)}}.$$
 We use standard notations for function spaces, such as $L^{p}$, $C^{k}$, and Sobolev spaces $W^{k,p}$. It will be clear from the context if the notation refers to a space of functions on $M$, or a space of
 sections of some bundle over $M$. For spaces of functions which embed into $L^{\infty}$, the subscript $+$ is used to indicate the cone of positive functions.\\

 We will sometimes write, for instance, $C(\alpha_{1},\alpha_{2})$ to indicate that a constant $C$ depends only on $\alpha_{1}$ and $\alpha_{2}$.\\
 
 From now on,  we define the map $T:~L^{\infty}\rightarrow L^{\infty}$ as follows. Given data on $M$ as specified in \eqref{condintial} and assuming that \eqref{condinitial2} holds,
 for each $\varphi\in L^{\infty}$, there exists a unique $W\in W^{2,p}$ such that
 $$-\frac{1}{2}L^{*}LW=\frac{n-1}{n}\varphi^{N}d\tau,$$
 and there is a unique $\psi\in W^{2,p}_{+}$ satisfying (see \cite{IM} or \cite{Maxwell05})
 $$\frac{4(n-1)}{n-2}\Delta \psi+R\psi=-\frac{n-1}{n}\tau^{2}\psi^{N-1}+|\sigma+LW|^{2}\psi^{-N-1}.$$
 We define $$T(\varphi)=\psi.$$ 
\begin{Proposition}\label{compacityofT} (see \cite[Lemma 2.3]{DGH} or \cite{Maxwell})
T is a continuous compact operator and  $T(\varphi)>0$ for all $\varphi\in L^{\infty}$.
\end{Proposition}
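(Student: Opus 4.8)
The plan is to factor $T=T_{2}\circ T_{1}$, where $T_{1}\colon L^{\infty}\to W^{2,p}$ sends $\varphi$ to the unique $W$ with $-\tfrac{1}{2}L^{*}LW=\tfrac{2}{3}\varphi^{6}d\tau$, and $T_{2}\colon W^{2,p}\to W^{2,p}_{+}$ sends $W$ to the unique positive solution $\psi$ of the Lichnerowicz equation with source $|\sigma+LW|^{2}$. The first factor is the easy one: since $\varphi\in L^{\infty}$ and $d\tau\in L^{p}$, the right-hand side $\tfrac{2}{3}\varphi^{6}d\tau$ lies in $L^{p}$ with $L^{p}$-norm bounded by $C(g,\tau)\|\varphi\|_{\infty}^{6}$, and because $(M,g)$ carries no conformal Killing field $L^{*}L$ is an isomorphism $W^{2,p}\to L^{p}$; hence $\|T_{1}(\varphi)\|_{W^{2,p}}\le C(g,\tau)\|\varphi\|_{\infty}^{6}$ and $T_{1}$ is continuous. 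Composing with the compact embedding $W^{2,p}\hookrightarrow C^{1,\beta}$, $\beta=1-3/p$, we get: for any bounded $(\varphi_{n})\subset L^{\infty}$, after passing to a subsequence, $W_{n}:=T_{1}(\varphi_{n})\to W$ in $C^{1}$, so that $f_{n}:=|\sigma+LW_{n}|^{2}\to f:=|\sigma+LW|^{2}$ in $C^{0}$ (the $f_{n}$ also staying bounded in $W^{1,p}$, which is an algebra for $p>3$). Moreover $f\not\equiv 0$: if $\sigma+LW\equiv 0$ then, $LW$ being symmetric, trace-free and---as $\sigma$ is divergence-free---also divergence-free, we would get $L^{*}LW=0$, hence $W$ a conformal Killing field, hence $W=0$ and $\sigma\equiv 0$, contradicting the hypothesis in the case $\mathcal{Y}_{g}\ge 0$.

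The crux of the argument---and the step I expect to be the main obstacle---is a pair of uniform a priori bounds $0<c\le\psi_{n}:=T(\varphi_{n})\le C$, the lower one being delicate precisely because $|\sigma+LW_{n}|$ may vanish at points. For the upper bound, put $\Lambda:=\sup_{n}\|f_{n}\|_{\infty}<\infty$ and let $\psi_{\Lambda}\in W^{2,p}_{+}$ be the solution of $8\Delta\psi+R\psi=-\tfrac{2}{3}\tau^{2}\psi^{5}+\Lambda\psi^{-7}$ (unique, by the construction above applied to the constant source $\Lambda>0$); since $f_{n}\le\Lambda$ and the positive solution of the Lichnerowicz equation is monotone non-decreasing in its source (the standard comparison principle for the Lichnerowicz operator), $\psi_{n}\le\psi_{\Lambda}$ for all $n$, and $\psi_{\Lambda}$ depends only on $g$, $\tau$, $\Lambda$. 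For the lower bound I would distinguish two cases. If $\mathcal{Y}_{g}<0$, a conformal change brings $R$ to a negative constant; evaluating the equation at a minimum point $x_{n}$ of $\psi_{n}$, where $\Delta\psi_{n}(x_{n})\le 0$, gives $0\le f_{n}(x_{n})m_{n}^{-7}=8\Delta\psi_{n}(x_{n})+Rm_{n}+\tfrac{2}{3}\tau^{2}(x_{n})m_{n}^{5}\le Rm_{n}+\tfrac{2}{3}\|\tau\|_{\infty}^{2}m_{n}^{5}$, with $m_{n}:=\min_{M}\psi_{n}$, and since $R<0$ this forces $m_{n}^{4}\ge\tfrac{3|R|}{2\|\tau\|_{\infty}^{2}}$. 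If $\mathcal{Y}_{g}\ge 0$, I would combine two facts: first, the upper bound gives $8\Delta\psi_{n}\ge -R\psi_{n}-\tfrac{2}{3}\tau^{2}\psi_{n}^{5}\ge-\kappa\psi_{n}$ for a fixed $\kappa$, so $\psi_{n}$ is a nonnegative supersolution of the fixed linear operator $8\Delta+\kappa$, whence the weak Harnack inequality (chained over a finite cover of $M$) yields $\|\psi_{n}\|_{L^{1}(M)}\le C\inf_{M}\psi_{n}$; second, integrating the Lichnerowicz equation over $M$ gives $\int_{M}f_{n}\psi_{n}^{-7}=\int_{M}R\psi_{n}+\tfrac{2}{3}\int_{M}\tau^{2}\psi_{n}^{5}\le C$. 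If $\inf_{M}\psi_{n}\to 0$ along a subsequence, the first fact forces $\psi_{n}\to 0$ in $L^{1}(M)$, hence (being uniformly bounded above) a.e.\ along a further subsequence, so $f_{n}\psi_{n}^{-7}\to+\infty$ a.e.\ on the positive-measure set $\{f>0\}$, and Fatou's lemma contradicts the bound on $\int_{M}f_{n}\psi_{n}^{-7}$.

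Granting $c\le\psi_{n}\le C$, the rest is routine. The right-hand side $-\tfrac{2}{3}\tau^{2}\psi_{n}^{5}+f_{n}\psi_{n}^{-7}$ is then bounded in $L^{\infty}\subset L^{p}$, so elliptic $L^{p}$ estimates for $8\Delta+R$ give $\|\psi_{n}\|_{W^{2,p}}\le C$, and the compact embedding $W^{2,p}\hookrightarrow C^{0}$ produces a subsequence of $\psi_{n}$ converging in $L^{\infty}$; thus $T$ maps bounded sets to relatively compact ones. For continuity, if $\varphi_{n}\to\varphi$ in $L^{\infty}$ then $W_{n}\to W$ in $W^{2,p}$ and $f_{n}\to f$ in $C^{0}$ by the first paragraph, and any subsequence of $\psi_{n}$ has a further subsequence converging in $L^{\infty}$ to some $\psi_{*}$; passing to the limit in the equation, $\psi_{*}$ is the positive solution of the Lichnerowicz equation with source $f=|\sigma+LW|^{2}$, so by uniqueness $\psi_{*}=T_{2}(W)=T(\varphi)$, and the whole sequence converges. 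Hence $T$ is compact. Finally $T(\varphi)=\psi$ lies in $W^{2,p}_{+}$ by construction; alternatively, $\psi\ge 0$, $\psi\not\equiv 0$ and $8\Delta\psi+(R+\tfrac{2}{3}\tau^{2}\psi^{4})\psi=f\psi^{-7}\ge 0$ with bounded coefficients, so the strong maximum principle gives $\psi>0$ everywhere.
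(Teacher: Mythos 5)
The paper offers no proof of this proposition: it is quoted from \cite{DGH} (Lemma 2.3) and \cite{Maxwell}, so there is no in-paper argument to match. Your write-up is a self-contained proof along the standard lines of those references: compactness of $\varphi\mapsto W_{\varphi}$ via the isomorphism $L^{*}L:W^{2,p}\to L^{p}$ (no conformal Killing fields) and the compact embedding $W^{2,p}\hookrightarrow C^{1}$, the observation that $\sigma+LW\equiv 0$ would force $W$ to be conformal Killing and hence $\sigma\equiv 0$, uniform two-sided bounds on $\psi_{n}=T(\varphi_{n})$, elliptic $L^{p}$ estimates, and uniqueness of the positive Lichnerowicz solution to identify the limit. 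The upper bound via the solution with constant source $\Lambda$ and monotonicity in the source is exactly Lemma \ref{maxpriw}; the weak Harnack/Fatou argument for the lower bound when $\mathcal{Y}_{g}\geq 0$ is correct (and is where $\sigma\nequiv 0$ is genuinely used), and the continuity and subsequence bookkeeping are fine. Overall the approach works and uses only tools the paper itself quotes.

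Two small points deserve repair. First, in the case $\mathcal{Y}_{g}<0$ you evaluate $\Delta\psi_{n}$ pointwise at an interior minimum; but $\psi_{n}\in W^{2,p}$ with $p>3$ is only $C^{1,\beta}$, so $\Delta\psi_{n}$ is merely an $L^{p}$ function and cannot be evaluated at a point. The cleanest fix is the one the paper's own toolkit supplies: with the normalization of Remark \ref{assumptionR} ($R=-\frac{2}{3}\tau^{2}$, or $R$ a negative constant), a suitable positive constant is a subsolution for any source $w\geq 0$, and Lemma \ref{maxpriw} (or Proposition \ref{method sub-super}) then gives the uniform lower bound directly, with no pointwise use of the Laplacian. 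Second, a triviality: $\Lambda=\sup_{n}\|f_{n}\|_{\infty}$ could vanish (only possible when $\mathcal{Y}_{g}<0$), so take $\max\{\Lambda,1\}$ to guarantee existence of $\psi_{\Lambda}$ from Theorem \ref{maxwell1}; note also that existence and uniqueness of $\psi_{\Lambda}$ in the cases $\mathcal{Y}_{g}=0$ and $\mathcal{Y}_{g}<0$ uses $\tau\nequiv 0$, respectively Case 3, both guaranteed by the hypothesis that $Z(\tau)$ has zero Lebesgue measure. With these adjustments your proof is complete.
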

We now review some standard facts on the Lichnerowicz equation on a compact $n-$manifold $M$:
\begin{equation}\label{Lichnerowicz}
\frac{4(n-1)}{n-2}\Delta u+Ru+\frac{n-1}{n}\tau^{2}u^{N-1}=\frac{w^{2}}{u^{N+1}}.
\end{equation}
Given a function $w$ and $p>n$, we say that $u_{+}\in W_{+}^{2,p}$ is a \textit{supersolution} to \eqref{Lichnerowicz} if
$$\frac{4(n-1)}{n-2}\Delta u_{+}+Ru_{+}+\frac{n-1}{n}\tau^{2}u_{+}^{N-1}\geq\frac{w^{2}}{u_{+}^{N+1}}.$$
A \textit{subsolution} is defined similarly with the reverse inequality.
\begin{Proposition}[see \cite{Maxwell05}] \label{method sub-super}
Assume that $g\in W^{2,p}$ and $w,\tau\in L^{2p}$ for some $p>n$. If $u_{-},u_{+}\in W_{+}^{2,p}$ are a subsolution and a supersolution respectively to \eqref{Lichnerowicz} associated with a fixed $w$
such that $u_{-}\leq u_{+}$, then there exists a solution $u\in W_{+}^{2,p}$ to \eqref{Lichnerowicz} such that $u_{-}\leq u\leq u_{+}$.
\end{Proposition}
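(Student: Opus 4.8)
The plan is to run a monotone (sub-/super-solution) iteration scheme, keeping careful track of the low regularity of the coefficients. Since $M$ is compact and $u_-,u_+\in W^{2,p}_+\hookrightarrow C^0$, put $m:=\min_M u_->0$ and $K:=\max_M u_+<\infty$; from $u_-\leq u_+$ one checks that both $u_-$ and $u_+$ take values in $[m,K]$. Note that $R\in L^p$ (because $g\in W^{2,p}$), while $\tau^2,w^2\in L^p$ by hypothesis. I would choose $\lambda\in L^p$ with $\lambda\geq 1$ large enough that for a.e.\ $x$ the function
$$s\longmapsto F_x(s):=\bigl(\lambda(x)-R(x)\bigr)s-\tfrac{2}{3}\tau(x)^2 s^5+\frac{w(x)^2}{s^7}$$
is nondecreasing on $[m,K]$ — for instance $\lambda(x):=1+|R(x)|+C\bigl(\tau(x)^2+w(x)^2\bigr)$ with $C=C(m,K)$ works and lies in $L^p$. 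Writing $F(u)(x):=F_x(u(x))$, one has $F(u)\in L^p$, with an $L^p$ bound depending only on $m$, $K$ and the fixed data, whenever $u\in C^0$ satisfies $m\leq u\leq K$.

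Next, $\mathcal{L}:=8\Delta+\lambda$ is a positive isomorphism $W^{2,p}\to L^p$ (standard $L^p$ elliptic theory for a $W^{2,p}$ metric), and, testing against $v_-=\max(-v,0)$, one sees that $\mathcal{L}v\geq 0$ forces $v\geq 0$; i.e.\ $\mathcal{L}^{-1}$ is order-preserving. Set $u_0:=u_+$ and $u_{k+1}:=\mathcal{L}^{-1}F(u_k)$. I would then prove by induction that $u_-\leq u_{k+1}\leq u_k\leq u_+$ for all $k$. For the base case, the supersolution inequality gives $\mathcal{L}u_+\geq F(u_+)=\mathcal{L}u_1$, hence $u_1\leq u_+$; the subsolution inequality together with monotonicity give $\mathcal{L}u_1=F(u_+)\geq F(u_-)\geq \mathcal{L}u_-$, hence $u_1\geq u_-$. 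The inductive step is identical, using $F(u_{k-1})\geq F(u_k)\geq F(u_-)$ (monotonicity of each $F_x$, all iterates being valued in $[m,K]$). Thus $(u_k)$ is pointwise nonincreasing, trapped in $[m,K]$, and converges pointwise to some $u$ with $u_-\leq u\leq u_+$.

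Finally, since $m\leq u_k\leq K$, the family $\{F(u_k)\}$ is bounded in $L^p$, so $\{u_{k+1}\}=\{\mathcal{L}^{-1}F(u_k)\}$ is bounded in $W^{2,p}$; by the compact embedding $W^{2,p}\hookrightarrow C^1$ (here $p>3$) a subsequence converges in $C^1$, necessarily to $u$, so $u$ is continuous and, by Dini's theorem, $u_k\to u$ uniformly. Dominated convergence then gives $F(u_k)\to F(u)$ in $L^p$, whence $u_{k+1}=\mathcal{L}^{-1}F(u_k)\to\mathcal{L}^{-1}F(u)$ in $W^{2,p}$; comparing with $u_{k+1}\to u$ yields $u=\mathcal{L}^{-1}F(u)$, i.e.\ $\mathcal{L}u=F(u)$, which is precisely $8\Delta u+Ru+\tfrac{2}{3}\tau^2u^5=w^2/u^7$. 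As $u\geq m>0$ this $u$ lies in $W^{2,p}_+$ and satisfies $u_-\leq u\leq u_+$, as required. The scheme being classical, the only delicate points are the functional-analytic ingredients in the low-regularity setting — invertibility of $\mathcal{L}=8\Delta+\lambda$ with merely $\lambda\in L^p$, together with the order-preserving property of $\mathcal{L}^{-1}$, and the compactness $W^{2,p}\hookrightarrow C^1$ — and the verification that the chosen $\lambda$ makes every $F_x$ monotone on $[m,K]$, which is what each comparison step relies on.
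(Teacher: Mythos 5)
Your proof is correct: the choice of shift $\lambda\in L^p$ making $F_x$ monotone on $[m,K]$, the order-preserving invertibility of $8\Delta+\lambda$ in the rough-metric $L^p$ setting, and the monotone iteration with the $C^1$-compactness/Dini passage to the limit are all sound. The paper itself gives no proof of this proposition but simply cites \cite{Maxwell05}, and your monotone sub-/super-solution iteration is essentially the same classical argument carried out there, so there is nothing substantively different to compare.
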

\begin{Theorem}[see \cite{Maxwell05}]\label{maxwell1}
Assume $w,\tau\in L^{2p}$ and $g\in W^{2,p}$ for some $p>n$. Then there exists a positive solution $u\in W^{2,p}_{+}$ to \eqref{Lichnerowicz} if and only if one of the following assertions is true.
\begin{enumerate}
\item $\mathcal{Y}_{g}>0$ and $w\nequiv 0$,
\item $\mathcal{Y}_{g}=0$ and $w\nequiv 0$, $\tau\nequiv 0$,
\item $\mathcal{Y}_{g}<0$ and there exists $\hat{g}$ in the conformal class of $g$ such that $R_{\hat{g}}=-\frac{n-1}{n}\tau^{2}$,
\item $\mathcal{Y}_{g}=0$ and $w\equiv 0$, $\tau\equiv 0.$
\end{enumerate}
In Cases $1-3$ the solution is unique. In Case $4$ any two solutions are related by a scaling by a positive constant multiple. Moreover, Case $3$ holds if $\mathcal{Y}_{g}<0$ and $Z(\tau)$
has zero Lebesgue measure (see \cite[Theorem 6.12]{Au}).
\end{Theorem}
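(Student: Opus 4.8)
The plan is to establish the stated equivalence direction by direction, using the conformal covariance of \eqref{Lichnerowicz} for the necessity part and the sub/supersolution method (Proposition \ref{method sub-super}) for the sufficiency part, and to treat uniqueness and the final assertion at the end.

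For necessity, suppose $u\in W_+^{2,p}$ solves \eqref{Lichnerowicz}. The key observation is the transformation law of the scalar curvature under $\tilde g:=u^4g$, namely $R_{\tilde g}=u^{-5}(8\Delta u+Ru)=w^2u^{-12}-\tfrac23\tau^2$, the second equality being \eqref{Lichnerowicz} itself. When $\mathcal{Y}_g\ge 0$ I would instead multiply \eqref{Lichnerowicz} by $u$ and integrate, obtaining $8\int_M|\nabla u|^2+\int_MRu^2+\tfrac23\int_M\tau^2u^6=\int_Mw^2u^{-6}$; if $w\equiv 0$ the right-hand side vanishes while the first two terms sum to at least $\mathcal{Y}_g\|u\|_{L^6}^2\ge 0$ and the third is $\ge 0$, so all three vanish, forcing $\mathcal{Y}_g=0$ and (since $u>0$) $\tau\equiv 0$, which is Case $4$; in particular $\mathcal{Y}_g>0$ forces $w\not\equiv 0$, which is Case $1$. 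If $\mathcal{Y}_g=0$ and $w\not\equiv 0$, passing to the scalar-flat conformal gauge and integrating \eqref{Lichnerowicz} directly (so that $\int_M\Delta u=0$) shows $\tau\not\equiv 0$, which is Case $2$. Finally, if $\mathcal{Y}_g<0$ and $w\equiv 0$ then $\tilde g$ already satisfies $R_{\tilde g}=-\tfrac23\tau^2$, which is Case $3$; and if $\mathcal{Y}_g<0$, $w\not\equiv 0$, I would obtain such a metric in the conformal class of $\tilde g$ from Proposition \ref{method sub-super} applied to the prescribed scalar curvature equation $8\Delta_{\tilde g}\phi+R_{\tilde g}\phi=-\tfrac23\tau^2\phi^5$, using $\phi_+\equiv 1$ as a supersolution (valid since $R_{\tilde g}\ge-\tfrac23\tau^2$) and $\phi_-=\varepsilon\eta$ as a subsolution, where $\eta>0$ is the Yamabe factor realising $R\equiv -1$ in the conformal class and $\varepsilon>0$ is small.

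For sufficiency I would first use conformal covariance of \eqref{Lichnerowicz} — which replaces $(g,w)$ by $(\theta^4g,\theta^{-6}w)$ for a conformal factor $\theta$ — to normalise $g$: to a metric of constant positive scalar curvature in Case $1$, of zero scalar curvature in Cases $2$ and $4$, and with $R=-\tfrac23\tau^2$ in Case $3$. In Case $4$ the equation reduces to $8\Delta u+Ru=0$, whose positive solutions are the ground states of the conformal Laplacian, unique up to a positive constant; this also yields the scaling statement. In Cases $1$–$3$ I would produce a subsolution $u_-$ and a supersolution $u_+$ with $u_-\le u_+$ and conclude by Proposition \ref{method sub-super}. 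Large constants are supersolutions in Case $1$ and, using the $\tfrac23\tau^2u^5$ term, in Cases $2$–$3$; small constants are subsolutions wherever the $w^2u^{-7}$ term dominates; in Case $3$ one also has $\phi_-\equiv 1$ at hand as a subsolution. The genuine difficulty is that the constant barriers degenerate exactly on $Z(w)$ (for the subsolution in Case $1$) and on $Z(\tau)$ (for the supersolution in Cases $2$–$3$). I would bypass this either by solving an auxiliary linear equation $8\Delta v=\chi$ with $\chi$ chosen of the correct sign on the offending zero set — possible because its complement has positive measure under the hypotheses of the relevant case — and setting $u_\pm=C+v$; or, more uniformly, by replacing $w^2$ with $w^2+\delta$ for $\delta>0$ so that constant barriers work, solving the perturbed equations, and letting $\delta\downarrow 0$: the $u_\delta$ are monotone by uniqueness, a Fatou argument applied to $\int_M(w^2+\delta)u_\delta^{-7}=\int_M Ru_\delta+\tfrac23\int_M\tau^2u_\delta^5$ (using \eqref{Lichnerowicz} and $\int_M\Delta u_\delta=0$) keeps $\int_Mw^2u_0^{-7}$ finite, hence $u_0:=\lim u_\delta\not\equiv 0$, and the strong maximum principle applied to $8\Delta u_0+(R+\tfrac23\tau^2u_0^4)u_0\ge 0$ forces $u_0>0$; elliptic regularity then gives $u_0\in W^{2,p}_+$ solving \eqref{Lichnerowicz}.

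Uniqueness in Cases $1$–$3$ follows by passing to the conformal gauge $u_2^4g$ in which a second positive solution becomes $u_2\equiv 1$; the quotient $v=u_1/u_2$ then solves an equation having $v\equiv 1$ as a solution, and since $s\mapsto\tfrac23\tau^2s^5-w^2s^{-7}$ is nondecreasing, evaluating at an interior maximum and an interior minimum of $v$ forces $v\equiv 1$ wherever $\tau^2+w^2\not\equiv 0$, hence everywhere; the same computation in Case $4$ shows two solutions differ by a positive constant. The final clause is then immediate: if $\mathcal{Y}_g<0$ and $Z(\tau)$ has zero Lebesgue measure, then $-\tfrac23\tau^2\le 0$ and its zero set has measure zero, so Aubin's Theorem $6.12$ yields $\hat g$ conformal to $g$ with $R_{\hat g}=-\tfrac23\tau^2$, i.e. Case $3$ holds. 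I expect the main obstacle to be precisely the construction of barriers on $Z(w)$ and $Z(\tau)$ — equivalently, the analogous prescribed-scalar-curvature step in the $\mathcal{Y}_g<0$ part of the necessity argument — which is what forces the perturbation-and-limit scheme together with the strong maximum principle, and the careful handling of the degenerate set $Z(w)\cap Z(\tau)$ in the uniqueness argument.
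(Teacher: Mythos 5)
This statement is not proved in the paper: it is quoted from Maxwell \cite{Maxwell05}, with the remark that the main tool is the conformal covariance of (\ref{Lichnerowicz}) (Lemma \ref{maxwell2}). Your overall architecture — normalise the metric conformally according to the sign of $\mathcal{Y}_g$, get necessity from the transformation law $R_{\tilde g}=u^{-5}(8\Delta u+Ru)$ and integrated identities, get sufficiency from Proposition \ref{method sub-super}, and get uniqueness by passing to the gauge in which one solution is $\equiv 1$ — is exactly the strategy of the cited source, so the skeleton is fine.

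The genuine gap is in the step you yourself flag as the crux: the construction of barriers when $w,\tau$ are only in $L^{2p}$. With unbounded data, constants are not sub- or supersolutions at all (the inequalities are pointwise a.e.\ and fail where $w$ or $\tau$ is large), and neither of your two fixes repairs this. Replacing $w^2$ by $w^2+\delta$ only \emph{raises} the right-hand side, so it does nothing for the supersolution problem on $Z(\tau)$ in Cases 2--3, which is precisely where the constant barrier degenerates; and the auxiliary problem $8\Delta v=\chi$ on a closed manifold requires $\int_M\chi\,dv=0$, after which the sign and size of $\chi$ needed on the zero set cannot be arranged by scaling without destroying positivity of $C+v$ (the constant you must beat grows with the very $v$ you are constructing). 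The standard repair is to solve linear problems adapted to (\ref{Lichnerowicz}) itself: in the gauges $R>0$ resp.\ $R=0$, take $v$ with $8\Delta v+(R+\tfrac23\tau^2)v=w^2$, so $v>0$, $\epsilon v$ is a subsolution and $\beta(1+v)$, $\beta\ge1$, a supersolution; in the gauge $R=-\tfrac23\tau^2$ of Case 3, $u_-\equiv1$ is a subsolution and $u_+=1+v$ with $8\Delta v+\tfrac83\tau^2v=w^2$ is a supersolution (using $(1+v)^5\ge1+5v$). The same unboundedness issue invalidates your necessity step for $\mathcal{Y}_g<0$ with $w\nequiv0$: the subsolution $\varepsilon\eta$ needs $\tfrac23\tau^2\varepsilon^4\eta^4\le1$ a.e., which fails for $\tau\in L^{2p}\setminus L^\infty$. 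Finally, your uniqueness argument evaluates $\Delta v$ and the coefficients at an interior extremum, which is not legitimate at $W^{2,p}$/$L^{2p}$ regularity; the clean version is to write, in the gauge $\hat u_2\equiv1$, the equation $8\Delta\hat u+\hat w^2(\hat u-\hat u^{-7})+\tfrac23\tau^2(\hat u^5-\hat u)=0$, multiply by $\hat u-1$ and integrate, using the monotonicity of both nonlinear terms to force $\nabla\hat u\equiv0$ and then $\hat u\equiv1$ unless $w\equiv0$ and $\tau\equiv0$ (Case 4, where the constant scaling remains). With these replacements your outline becomes a correct proof; as written, the barrier step would fail.
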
 
The main technique used to prove the above theorem is the conformal covariance of \eqref{Lichnerowicz}.
\begin{Lemma}[see \cite{Maxwell}]\label{maxwell2}
Assume that $g\in W^{2,p}$ and that $w, \tau\in L^{2p}$ for some $p>n$. Assume also that $\psi\in W^{2,p}_{+}$. Define
$$\hat{g}=\psi^{\frac{4}{n-2}}g,~~~\hat{w}=\psi^{-N}w,~~~\hat{\tau}=\tau.$$
Then $u$ is a supersolution (resp. subsolution) to \eqref{Lichnerowicz} if and only if $\hat{u}=\psi^{-1}u$ is a supersolution (resp. subsolution) to the conformally transformed equation
\begin{equation}\label{Lichnerowicz2}
\frac{4(n-1)}{n-2}\Delta_{\hat{g}} \hat{u}+R_{\hat{g}}\hat{u}+\frac{n-1}{n}\hat{\tau}^{2}\hat{u}^{N-1}=\frac{\hat{w}^{2}}{\hat{u}^{N+1}}.
\end{equation}
In particular, $u$ is a solution to \eqref{Lichnerowicz} if and only if $\hat{u}$ is a solution to (\ref{Lichnerowicz2}).
\end{Lemma}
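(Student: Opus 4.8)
\section*{Proof plan}

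The plan is to reduce the whole statement to a single algebraic identity expressing the conformal covariance of the Lichnerowicz operator in dimension $3$: I will show that the left-hand side of (\ref{Lichnerowicz2}) evaluated at $\hat u=\psi^{-1}u$ equals $\psi^{-5}$ times the left-hand side of (\ref{Lichnerowicz}) evaluated at $u$, and likewise that the right-hand side of (\ref{Lichnerowicz2}) at $\hat u$ equals $\psi^{-5}$ times the right-hand side of (\ref{Lichnerowicz}) at $u$. Since $\psi^{-5}>0$ pointwise, multiplication by it preserves the sign of the difference of the two sides, so a supersolution (resp.\ subsolution, resp.\ solution) of one equation corresponds to a supersolution (resp.\ subsolution, resp.\ solution) of the other.

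First I would record the regularity bookkeeping. Because $p>3=\dim M$, the space $W^{2,p}$ embeds into $C^{1,\alpha}$ and is a Banach algebra; moreover, if $\psi\in W^{2,p}_{+}$ then $\psi^{-1}\in W^{2,p}_{+}$ and every power $\psi^{\pm k}$ appearing below lies in $W^{2,p}_{+}\subset L^{\infty}$. Hence $\hat g=\psi^{4}g\in W^{2,p}$ is again a Riemannian metric, $\hat\tau=\tau\in L^{2p}$, $\hat w=\psi^{-6}w\in L^{2p}$, and $\hat u=\psi^{-1}u\in W^{2,p}_{+}$, so (\ref{Lichnerowicz2}) is posed in exactly the same functional setting as (\ref{Lichnerowicz}) and the notions of sub- and supersolution transfer verbatim.

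Next I would invoke the conformal covariance of the conformal Laplacian $L_{g}:=8\Delta_{g}+R_{g}$ (the transformation law of the scalar curvature under $\hat g=\psi^{4}g$ in dimension $3$), which yields, for every $v\in W^{2,p}$,
\begin{equation*}
8\Delta_{\hat g}\!\left(\psi^{-1}v\right)+R_{\hat g}\,\psi^{-1}v=\psi^{-5}\bigl(8\Delta_{g}v+R_{g}v\bigr),
\end{equation*}
an identity valid in $L^{p}$ since both sides depend continuously on $v\in W^{2,p}$ and it holds for smooth data by the classical formula (density of $C^{\infty}$ in $W^{2,p}$ then gives the general case). Applying this with $v=u$ takes care of the linear part. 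For the nonlinear terms one computes directly $\tfrac{2}{3}\hat\tau^{2}\hat u^{5}=\tfrac{2}{3}\tau^{2}(\psi^{-1}u)^{5}=\psi^{-5}\cdot\tfrac{2}{3}\tau^{2}u^{5}$ and $\hat w^{2}\hat u^{-7}=(\psi^{-6}w)^{2}(\psi^{-1}u)^{-7}=\psi^{-12}\psi^{7}\,w^{2}u^{-7}=\psi^{-5}w^{2}u^{-7}$. Adding the three contributions gives
\begin{equation*}
\Bigl(8\Delta_{\hat g}\hat u+R_{\hat g}\hat u+\tfrac{2}{3}\hat\tau^{2}\hat u^{5}\Bigr)-\frac{\hat w^{2}}{\hat u^{7}}=\psi^{-5}\left[\Bigl(8\Delta_{g}u+R_{g}u+\tfrac{2}{3}\tau^{2}u^{5}\Bigr)-\frac{w^{2}}{u^{7}}\right].
\end{equation*}

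Finally I would conclude: as $\psi^{-5}$ is everywhere positive, the bracket on the right is $\geq 0$ (resp.\ $\leq 0$, resp.\ $=0$) almost everywhere if and only if the bracket on the left is, which is precisely the asserted equivalence for super-, sub-, and genuine solutions. The converse directions in the "if and only if" are then automatic from the symmetry of the construction: running the argument with $(g,\psi,u)$ replaced by $(\hat g,\psi^{-1},\hat u)$ returns $\hat g\mapsto g$, $\hat w\mapsto w$, $\hat\tau\mapsto\tau$, $\hat u\mapsto u$. I do not expect a genuine obstacle in this proof; the only points demanding care are the low-regularity justification of the conformal transformation formula (handled by approximation together with continuity of both sides on $W^{2,p}$) and keeping straight the sign convention for the nonnegative Laplacian $\Delta$, for which the conformal Laplacian in dimension $3$ is indeed $8\Delta+R$.
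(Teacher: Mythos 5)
Your argument is correct and is exactly the standard conformal-covariance computation: the paper itself gives no proof of this lemma (it is quoted from Maxwell's Lemma 1, and the surrounding text only remarks that conformal covariance is the key technique), and your identity $L_{\hat g}(\psi^{-1}v)=\psi^{-5}L_{g}v$ together with the pointwise scalings of the $\tau$- and $w$-terms is precisely that cited argument. The regularity bookkeeping ($W^{2,p}\hookrightarrow C^{1,\alpha}$ for $p>3$, positivity of $\psi^{-5}$, a.e.\ interpretation of the inequalities) is handled adequately, so nothing further is needed.
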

From the techniques in \cite{G-NQA}, we get the following remark.

\begin{Remark}\label{assumptionR}
Theorem \ref{maxwell1} guarantees that for any given $w \in L^{2p}\setminus \{0\}$, there exists a unique solution $u\in W^{2,p}_{+}$ to \eqref{Lichnerowicz}.
In addition, by direct calculation, we compute for any $k\geq N$
$$\int_{M}{\hat{u}^{k}dv_{\hat{g}}}=\int_{M}{\psi^{N-k}u^{k}dv_{g}}~~\mbox{and}~~\int_{M}{\hat{w}^{k}dv_{\hat{g}}}=\int_{M}{\psi^{N(1-k)}w^{k}dv_{g}},$$
where $(\hat{g},\hat{u},\hat{w})$ is as in Lemma \ref{maxwell2}. It follows that
$$(\max\psi)^{\frac{N-k}{k}}\|u\|_{L^{k}_{g}}\leq\|\hat{u}\|_{L^{k}_{\hat{g}}}\leq (\min\psi)^{\frac{N-k}{k}}\|u\|_{L^{k}_{g}}$$
and 
$$(\max\psi)^{\frac{N(1-k)}{k}}\|w\|_{L^{k}_{g}}\leq\|\hat{w}\|_{L^{k}_{\hat{g}}}\leq (\min\psi)^{\frac{N(1-k)}{k}}\|w\|_{L^{k}_{g}}.$$
Without loss of generality, we can assume moreover that $R>0$ or $R\equiv 0$ or $R=-\frac{n-1}{n}\tau^{2}$ depending on the sign of $\mathcal{Y}_{g}$
(in the case $\mathcal{Y}_{g}<0$, we refer to Case $3$ of Theorem \ref{maxwell1}). Under this assumption, it is also helpful to keep in mind that the term
$Ru^{k+1}+\frac{n-1}{n}\tau^{2}u^{k+N-1}$ is uniformly bounded from below for all positive functions $u\in L^{\infty}$ and all $k\geq 0$.
In fact, if $R\geq 0$, it is obvious that $Ru^{k+1}+\frac{n-1}{n}\tau^{2}u^{k+N-1}\geq 0$. If $R=-\frac{n-1}{n}\tau^{2}$, then
$\frac{n-1}{n}\tau^{2}u^{k+1}\left(u^{N-2}-1\right)\geq -\frac{n-1}{n}(\max|\tau|)^{2}$, which is our claim.
\end{Remark}

The following lemma will be used all along the paper.

\begin{Lemma}[Maximum principle]\label{maxpriw}
Assume that $v,~u$ are a supersolution (resp. subsolution) and a positive solution respectively to \eqref{Lichnerowicz} associated with a fixed $w$, then 
$$v\geq u~(\mbox{resp. $\leq$}).$$
In particular, assume $u_{0}$ (resp. $u_{1}$) is a positive solution to \eqref{Lichnerowicz} associated to $w=w_{0}$ (resp. $w_{1}$). Assume moreover $w_{0}\leq w_{1}$, then $u_{0}\leq u_{1}$.
\end{Lemma}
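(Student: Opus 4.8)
The plan is to exploit the conformal covariance of the Lichnerowicz equation (Lemma~\ref{maxwell2}) to renormalize the given positive solution to the constant function $1$; this puts the zeroth order terms in a form where a straightforward maximum principle argument applies. Given the positive solution $u$, apply Lemma~\ref{maxwell2} with $\psi=u$, so that $\hat{g}=u^{4}g$, $\hat{w}=u^{-6}w$, $\hat{\tau}=\tau$: then $\hat{u}:=\psi^{-1}u\equiv 1$ is a solution of the transformed equation (\ref{Lichnerowicz2}), $\hat{v}:=\psi^{-1}v=v/u>0$ is a supersolution of the same equation, and $\hat{v}\geq\hat{u}$ is equivalent to $v\geq u$; so it is enough to prove $\hat{v}\geq 1$. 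The point of this normalization is that substituting the constant $\hat{u}\equiv 1$ into (\ref{Lichnerowicz2}) yields the pointwise identity $R_{\hat{g}}=\hat{w}^{2}-\tfrac{2}{3}\hat{\tau}^{2}$, which lets one rewrite the supersolution inequality for $\hat{v}$ as
$$8\Delta_{\hat{g}}\hat{v}\ \geq\ \hat{w}^{2}\bigl(\hat{v}^{-7}-\hat{v}\bigr)+\tfrac{2}{3}\hat{\tau}^{2}\bigl(\hat{v}-\hat{v}^{5}\bigr)=:G(x,\hat{v}).$$
For $\hat{v}>0$, each of the two bracketed factors $\hat{v}^{-7}-\hat{v}$ and $\hat{v}-\hat{v}^{5}=\hat{v}(1-\hat{v}^{4})$ has the same sign as $1-\hat{v}$; in particular $G(x,\hat{v})\geq 0$ at every point where $\hat{v}\leq 1$, with $G(x,\hat{v})>0$ where moreover $\hat{\tau}\neq 0$.

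Next I would test this differential inequality against $(1-\hat{v})_{+}\in W^{1,p}$ and integrate over the closed manifold $M$; this is legitimate because $\hat{v}\in W^{2,p}$, so the inequality holds pointwise almost everywhere, and the integration by parts on $M$ has no boundary term. The left hand side becomes $8\int_{M}\langle\nabla_{\hat{g}}\hat{v},\nabla_{\hat{g}}(1-\hat{v})_{+}\rangle\,dv_{\hat{g}}=-8\int_{M}|\nabla_{\hat{g}}(1-\hat{v})_{+}|^{2}\,dv_{\hat{g}}\leq 0$, while the right hand side is $\int_{M}G(x,\hat{v})(1-\hat{v})_{+}\,dv_{\hat{g}}\geq 0$ by the sign discussion above. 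Hence both integrals vanish, so $(1-\hat{v})_{+}$ is a constant $c\geq 0$ and $G(x,\hat{v})(1-\hat{v})_{+}=0$ almost everywhere. If $c>0$ then $\hat{v}\equiv 1-c\in(0,1)$, whence $G(x,1-c)=0$ a.e.; but $G(x,1-c)$ is a strictly positive constant times $\hat{w}^{2}$ plus a strictly positive constant times $\hat{\tau}^{2}$, so this forces $\hat{w}\equiv 0$ and $\hat{\tau}\equiv\tau\equiv 0$, contradicting the standing assumption that $Z(\tau)$ has zero Lebesgue measure. Therefore $c=0$, i.e. $\hat{v}\geq 1$, which proves $v\geq u$. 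The subsolution case is symmetric: $\hat{v}$ is then a subsolution of (\ref{Lichnerowicz2}), one tests with $(\hat{v}-1)_{+}$, and on the set $\{\hat{v}>1\}$ the two bracketed factors are negative, so the same computation gives $\hat{v}\leq 1$, i.e. $v\leq u$.

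The ``in particular'' statement then follows at once from the observation that a solution of (\ref{Lichnerowicz}) for a larger right hand side is a supersolution for the smaller one: if $0\leq w_{0}\leq w_{1}$ and $u_{1}\in W^{2,p}_{+}$ solves (\ref{Lichnerowicz}) with $w=w_{1}$, then
$$8\Delta u_{1}+Ru_{1}+\tfrac{2}{3}\tau^{2}u_{1}^{5}=\frac{w_{1}^{2}}{u_{1}^{7}}\geq\frac{w_{0}^{2}}{u_{1}^{7}},$$
so $u_{1}$ is a supersolution of (\ref{Lichnerowicz}) with $w=w_{0}$, and the first part applied to the pair $(v,u)=(u_{1},u_{0})$ gives $u_{0}\leq u_{1}$.

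The step I expect to be delicate is the last one of the main argument, namely excluding the alternative $c>0$: this is precisely where one needs $\tau\not\equiv 0$ (here a consequence of $Z(\tau)$ having zero Lebesgue measure), and the statement is genuinely false without such a non-degeneracy, since when $\tau\equiv w\equiv 0$ and $R_{\hat{g}}\equiv 0$ every positive constant is at once a solution and a supersolution. The remaining points are routine bookkeeping: checking that $\hat{g}\in W^{2,p}$ and $\hat{w}\in L^{2p}$ so that Lemma~\ref{maxwell2} does apply (using that $u\in W^{2,p}_{+}$ is bounded away from $0$ on the compact manifold $M$ and that $W^{2,p}$ is a Banach algebra for $p>3$), verifying the signs of the elementary one-variable factors, and justifying the integration by parts for $W^{2,p}$ functions on a closed manifold.
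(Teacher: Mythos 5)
Your proof is correct, but it takes a genuinely different route from the one printed in the paper. The paper's proof is a short sandwich argument: since $u$ solves (\ref{Lichnerowicz}), $tu$ is a subsolution for every constant $t\in(0,1]$; choosing $t$ small enough that $tu\leq v$, Proposition~\ref{method sub-super} produces a solution $u'$ with $tu\leq u'\leq v$, and the uniqueness clause of Theorem~\ref{maxwell1} forces $u'=u$, hence $u\leq v$. You instead conformally normalize the solution to $\hat u\equiv 1$ via Lemma~\ref{maxwell2}, use the resulting identity $R_{\hat g}=\hat w^{2}-\frac{2}{3}\tau^{2}$ to rewrite the supersolution inequality, and close with an energy/maximum-principle argument testing against $(1-\hat v)_{+}$; this is much closer in spirit to the alternative proof the paper merely points to in \cite{DGH} than to the proof it gives. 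What your route buys: it bypasses both the sub/supersolution existence machinery and the uniqueness statement of Theorem~\ref{maxwell1}, and it makes explicit exactly where non-degeneracy enters — your contradiction in the case $c>0$ in fact only needs that $\tau$ or $w$ is not a.e.\ zero, which is precisely the exclusion of Case~4 of Theorem~\ref{maxwell1}. What the paper's route buys: granted the quoted results, it is shorter and needs no integration by parts or regularity bookkeeping. Note that both arguments share the same caveat, so your appeal to the standing assumption on $Z(\tau)$ is not a gap relative to the paper: the lemma as literally stated fails when $\tau\equiv w\equiv 0$ with $\mathcal{Y}_{g}=0$ (every positive constant multiple of a solution is again a solution), and the paper's proof silently excludes this through the uniqueness clause while you exclude it through the hypothesis that $Z(\tau)$ has zero measure. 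Two harmless housekeeping points: the constancy of $(1-\hat v)_{+}$ uses connectedness of $M$ (implicit throughout the paper), and in the ``in particular'' part you read $w_{0}\leq w_{1}$ as $0\leq w_{0}\leq w_{1}$ (equivalently $w_{0}^{2}\leq w_{1}^{2}$), which is the intended meaning since $w$ enters the equation only through $w^{2}$.
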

We give a simple proof of this fact based on Theorem \ref{maxwell1} (even if the proof of Theorem \ref{maxwell1} requires the maximum principle). Another proof,
independent of Theorem \ref{maxwell1}, can be found in \cite{DGH}. 
\begin{proof} 
We will prove the supersolution case, the remaining cases being similar. Assume that $v, u$ are respectively a supersolution and a positive solution to \eqref{Lichnerowicz}
associated to a fixed $w$. Since $u$ is a solution, it is also a subsolution. Hence, so is $tu$ for all constant $t\in (0,1]$. Since $\min v>0$,
we now take $t$ small enough s.t. $tu\leq v$. By Proposition \ref{method sub-super}, we then conclude that there exists a solution $u'\in W^{2,p}$ to $\eqref{Lichnerowicz}$
satisfying $tu\leq u'\leq v$. On the other hand, by uniqueness of positive solutions to \eqref{Lichnerowicz} given by Theorem \ref{maxwell1}, we obtain that $u=u'$, and
hence get the desired conclusion.
\end{proof}

\section{A New Proof for the Limit Equation}
In this section we show how Schaefer's fixed point theorem can be applied to give a simpler proof of the main result in \cite{DGH}. We first recall its statement
(see \cite[Theorem 3.4.8]{KCC} or \cite[Theorem 11.6]{DT}).

\begin{Theorem} (\textbf{Leray-Schauder's fixed point})\label{schaefer}
Let $X$ be a Banach space and assume that $T~:~X\times [0,1]\rightarrow X$ is a continuous compact mapping, satisfying $T(x,0)=0$ for all $x\in X$. If the set $K=\{x\in X|~~\exists
t\in[0,1]~\mbox{such that}~x=T(x,t)\}$ is bounded. Then $T=T(.,1)$ has a fixed point.
\end{Theorem}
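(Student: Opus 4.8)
The plan is to derive this statement from Schauder's fixed point theorem --- every continuous compact self-map of a nonempty closed, bounded, convex subset of a Banach space has a fixed point --- by a truncation argument that keeps the homotopy parameter in play. Since $K$ is bounded, fix $R$ with $\|x\|\le R$ for all $x\in K$ and choose radii $R<R'<M^{*}$. Put $C=\overline{B}_{M^{*}}(0)$ and let $\rho:X\to C$ be the radial retraction, $\rho(x)=x$ when $\|x\|\le M^{*}$ and $\rho(x)=M^{*}x/\|x\|$ otherwise, which is continuous; also fix a continuous cutoff $\mu:[0,\infty)\to[0,1]$ with $\mu\equiv 1$ on $[0,R']$ and $\mu\equiv 0$ on $[M^{*},\infty)$.

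Define $F:C\to C$ by $F(x)=\rho\bigl(T(x,\mu(\|x\|))\bigr)$. The first thing to verify is that $F$ is continuous and compact: continuity is clear, and $F(C)\subseteq\rho\bigl(T(C\times[0,1])\bigr)$, which is relatively compact because $C$ is bounded and $T$ is a compact mapping. Since $C$ is nonempty, closed, bounded and convex, Schauder's theorem gives $x_{0}\in C$ with $x_{0}=F(x_{0})$.

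The crucial step --- and the only place where the hypotheses that $K$ is bounded and that $T(\cdot,0)=0$ get used --- is to promote this $x_{0}$ to a genuine fixed point of $T(\cdot,1)$, by a case analysis on $\|x_{0}\|$. If $\|x_{0}\|\le R'$, then $\mu(\|x_{0}\|)=1$, so $x_{0}=\rho(T(x_{0},1))$; either $\|T(x_{0},1)\|\le M^{*}$, in which case $x_{0}=T(x_{0},1)$ as desired, or $\|T(x_{0},1)\|>M^{*}$, forcing $\|x_{0}\|=M^{*}>R'$, a contradiction. If instead $R'<\|x_{0}\|\le M^{*}$, set $t_{0}=\mu(\|x_{0}\|)\in[0,1]$: if $\|T(x_{0},t_{0})\|\le M^{*}$ then $x_{0}=T(x_{0},t_{0})$, hence $x_{0}\in K$ and $\|x_{0}\|\le R<R'$, a contradiction; while if $\|T(x_{0},t_{0})\|>M^{*}$ then $\|x_{0}\|=M^{*}$, so $t_{0}=\mu(M^{*})=0$ and $x_{0}=\rho(T(x_{0},0))=\rho(0)=0$, contradicting $\|x_{0}\|=M^{*}>0$. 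The only surviving possibility is $x_{0}=T(x_{0},1)$.

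I expect this case analysis to be the real content: the auxiliary map $F$ has to be designed so that a Schauder fixed point cannot escape onto the sphere $\|x\|=M^{*}$ except through the trivial value $T(\cdot,0)=0$ or through membership in the bounded set $K$, which is precisely why the cutoff $\mu(\|x\|)$ is inserted rather than the cruder rescaling $x\mapsto\rho(T(x,1))$ --- the latter would control only fixed points satisfying $x=sT(x,1)$, a relation the hypothesis on $K$ says nothing about. A shorter-to-state but heavier alternative is to invoke the Leray--Schauder degree and its homotopy invariance, computing $\deg(I-T(\cdot,1),B_{M^{*}},0)=\deg(I-T(\cdot,0),B_{M^{*}},0)=\deg(I,B_{M^{*}},0)=1\neq 0$; I would favour the argument above, which needs only Schauder's theorem.
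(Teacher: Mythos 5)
Your argument is correct, and it is worth noting that the paper itself gives no proof of Theorem \ref{schaefer}: the result is quoted from the literature (Chang, Theorem 3.4.8, or Gilbarg--Trudinger, Theorem 11.6), so the only comparison is with those classical proofs. Your reduction to Schauder is essentially the standard device, but your particular variant---leaving the argument $x$ untouched, cutting off the homotopy parameter through $\mu(\|x\|)$, and retracting the output by the radial retraction $\rho$ onto $\overline{B}_{M^{*}}$---is cleanly organized: the case analysis on $\|x_{0}\|$ uses exactly the two hypotheses in exactly the right places (a fixed point with $R'<\|x_{0}\|\le M^{*}$ and $\|T(x_{0},t_{0})\|\le M^{*}$ would lie in $K$ and hence have norm at most $R<R'$, while escape to the sphere $\|x_{0}\|=M^{*}$ forces $t_{0}=0$ and hence $x_{0}=\rho(T(x_{0},0))=0$, a contradiction), so the surviving alternative $x_{0}=T(x_{0},1)$ is forced. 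Gilbarg--Trudinger's own proof achieves the same effect by rescaling the argument radially on an annulus while running the parameter down to $0$ at the boundary, and the degree-theoretic argument you sketch is the other standard route; all three are equivalent in substance, and your insertion of the strict intermediate radius $R'$ between the bound $R$ on $K$ and $M^{*}$ is what makes the contradictions strict. Two points you should state explicitly for completeness: ``compact mapping'' must be read as continuous and sending bounded sets to precompact sets (the continuity of $T$ is what makes $F$ continuous, and compactness gives $\rho\bigl(T(C\times[0,1])\bigr)$ precompact), and the version of Schauder you invoke is the one for a continuous self-map of a nonempty closed bounded convex set whose image is precompact, which is precisely what $F$ provides.
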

\begin{Corollary}\label{corSchae}(\textbf{Schaefer's fixed point})
Assume that $T~:~X\rightarrow X$ is continuous compact and that the set
$$K=\{x\in X|~\exists t\in [0,1]~\mbox{such that}~x=tT(x)\}$$
is bounded. Then $T$ has a fixed point.
\end{Corollary}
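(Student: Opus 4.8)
The plan is to deduce the corollary directly from Theorem \ref{schaefer} by introducing the homotopy $\tilde{T} : X \times [0,1] \to X$, $\tilde{T}(x,t) = t\,T(x)$. First I would verify the hypotheses of Theorem \ref{schaefer} for $\tilde{T}$: the boundary condition $\tilde{T}(x,0) = 0$ holds for every $x \in X$ since $0 \cdot T(x) = 0$, and the set $\{x \in X \mid \exists\, t \in [0,1] \text{ such that } x = \tilde{T}(x,t)\}$ appearing in Theorem \ref{schaefer} is, by construction, literally the set $K$ assumed bounded in the statement. So the only point requiring an argument is that $\tilde{T}$ is a compact mapping.

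For this I would argue in two steps. Continuity of $\tilde{T}$ follows from the continuity of $T$ (which is part of being a compact map) together with the continuity of scalar multiplication $[0,1] \times X \to X$, since $\tilde{T}$ is the composition $(x,t) \mapsto (t, T(x)) \mapsto t\,T(x)$. For relative compactness of images of bounded sets, fix a bounded set $B \subseteq X$; then $\overline{T(B)}$ is compact because $T$ is compact, hence $[0,1] \times \overline{T(B)}$ is compact, and its image under scalar multiplication, the set $[0,1]\cdot\overline{T(B)}$, is compact as the continuous image of a compact set. Since $\tilde{T}(B \times [0,1]) \subseteq [0,1]\cdot\overline{T(B)}$, the set $\tilde{T}(B \times [0,1])$ is relatively compact. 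Therefore $\tilde{T}$ is a compact mapping, and Theorem \ref{schaefer} applies.

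The conclusion of Theorem \ref{schaefer} is that $\tilde{T}(\cdot, 1)$ has a fixed point; but $\tilde{T}(x,1) = 1 \cdot T(x) = T(x)$, so this is exactly a fixed point of $T$, as desired. The only real obstacle here is the routine bookkeeping in the compactness verification — specifically, checking that multiplying the relatively compact set $T(B)$ by the compact scalar interval $[0,1]$ preserves relative compactness; the rest of the argument is a direct translation between the two statements.
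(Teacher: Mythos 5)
Your proposal is correct and is exactly the intended derivation: the paper states the corollary without proof, as an immediate consequence of Theorem \ref{schaefer} applied to the homotopy $\widetilde{T}(x,t)=tT(x)$, which is precisely what you carry out. Your verification that $\widetilde{T}$ is compact (continuity plus $\widetilde{T}(B\times[0,1])\subseteq[0,1]\cdot\overline{T(B)}$, a continuous image of a compact set) is sound and fills in the only step the paper leaves implicit.
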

We now state the main theorem in \cite{DGH} and give an  alternative proof.
\begin{Theorem}\label{theoremofDHG}
 Let data be given on $M$ as specified in \eqref{condintial} and assume that \eqref{condinitial2} holds. Furthermore, assume that $\tau>0$, then at least one of the following assertions is true
\begin{itemize}
\item The constraint equations \eqref{CE} admits a solution $(\varphi,W)$ with $\varphi>0$. Furthermore, the set of solutions $(\varphi,W)\in W_{+}^{2,p}\times W^{2,p}$ is compact.
\item There exists a nontrivial solution $W\in W^{2,p}$ to the limit equation
\begin{equation}\label{limit}
-\frac{1}{2}L^{*}LW=\alpha_{0}\sqrt{\frac{n-1}{n}}|LW|\frac{d\tau}{\tau}
\end{equation}
for some $\alpha_{0}\in (0,1].$
\end{itemize}
\end{Theorem}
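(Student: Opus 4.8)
The plan is to apply Schaefer's fixed point theorem (Corollary~\ref{corSchae}) to the map $T:L^{\infty}\to L^{\infty}$, which is compact by Proposition~\ref{compacityofT}. By construction of $T$, a fixed point $\varphi=T(\varphi)$ yields, together with the associated $W$, a solution of the conformal constraint equations~(\ref{CE}), and it is automatically positive since $T(\varphi)>0$; so a fixed point of $T$ gives the first alternative. Thus everything reduces to the set
$$K=\{\varphi\in L^{\infty}\ :\ \exists\, t\in[0,1]\text{ such that }\varphi=tT(\varphi)\}.$$
I would distinguish two cases according to whether $K$ is bounded or not.

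If $K$ is bounded, Corollary~\ref{corSchae} directly produces a fixed point $\varphi_{0}=T(\varphi_{0})\in W^{2,p}_{+}$, hence a solution $(\varphi_{0},W_{0})$ of~(\ref{CE}). For the compactness claim, the set of such $\varphi$ is exactly the fixed point set $\mathrm{Fix}(T)$, which is closed (as $T$ is continuous) and, since $\mathrm{Fix}(T)\subset T(K)$ with $T$ compact and $K$ bounded, contained in a precompact set; hence $\mathrm{Fix}(T)$ is compact in $L^{\infty}$. A standard bootstrap — feeding this into the vector equation, then into the Lichnerowicz equation, and using continuity of the two solution maps together with elliptic regularity — upgrades this to compactness of the solution set in $W^{2,p}_{+}\times W^{2,p}$.

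If $K$ is unbounded, choose $\varphi_{n}\in K$ with $\|\varphi_{n}\|_{L^{\infty}}\to\infty$, say $\varphi_{n}=t_{n}T(\varphi_{n})$, $t_{n}\in[0,1]$. Set $\psi_{n}=T(\varphi_{n})\in W^{2,p}_{+}$ and let $W_{n}\in W^{2,p}$ be the associated solution of the vector equation, so that $\varphi_{n}=t_{n}\psi_{n}$, whence $\|\psi_{n}\|_{L^{\infty}}\ge\|\varphi_{n}\|_{L^{\infty}}\to\infty$, and
$$-\tfrac12 L^{*}LW_{n}=\tfrac23 t_{n}^{6}\psi_{n}^{6}\,d\tau,\qquad 8\Delta\psi_{n}+R\psi_{n}+\tfrac23\tau^{2}\psi_{n}^{5}=|\sigma+LW_{n}|^{2}\psi_{n}^{-7}.$$
Normalizing $R$ as in Remark~\ref{assumptionR} and evaluating the Lichnerowicz equation at a maximum point of $\psi_{n}$ (where $\Delta\psi_{n}\ge0$), together with $\tau\ge\min_{M}\tau>0$, yields $(\max\psi_{n})^{12}\le C(\|\sigma\|_{L^{\infty}}^{2}+\|LW_{n}\|_{L^{\infty}}^{2})$; since $\|LW_{n}\|_{L^{\infty}}\le C\|W_{n}\|_{W^{2,p}}$, this forces $\gamma_{n}:=\|W_{n}\|_{W^{2,p}}\to\infty$ and gives the a priori bound $\psi_{n}^{6}\le C(1+\gamma_{n})$. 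Put $\widehat{W}_{n}:=W_{n}/\gamma_{n}$, so $\|\widehat{W}_{n}\|_{W^{2,p}}=1$ and the vector equation reads $-\tfrac12 L^{*}L\widehat{W}_{n}=\tfrac23 t_{n}^{6}\,\frac{\psi_{n}^{6}}{\gamma_{n}}\,d\tau$ with right-hand side bounded in $L^{p}$. Passing to a subsequence, $\widehat{W}_{n}\to\widehat{W}_{\infty}$ in $C^{1}$ (compactness of $W^{2,p}\hookrightarrow C^{1,\beta}$) and $t_{n}^{6}\to\alpha_{0}\in[0,1]$. The crucial input is that the Lichnerowicz equation forces, to leading order, $\psi_{n}^{12}=\tfrac32\tau^{-2}|\sigma+LW_{n}|^{2}+(\text{lower order})$, so that $\psi_{n}^{6}/\gamma_{n}\to\sqrt{3/2}\,|L\widehat{W}_{\infty}|/\tau$ in $L^{p}$ (using $\sigma/\gamma_{n}\to0$); inverting $L^{*}L$ then shows $\widehat{W}_{n}\to\widehat{W}_{\infty}$ in $W^{2,p}$, so $\|\widehat{W}_{\infty}\|_{W^{2,p}}=1$, and passing to the limit gives
$$-\tfrac12 L^{*}L\widehat{W}_{\infty}=\tfrac23\alpha_{0}\sqrt{\tfrac32}\,\frac{|L\widehat{W}_{\infty}|}{\tau}\,d\tau=\alpha_{0}\sqrt{\tfrac23}\,|L\widehat{W}_{\infty}|\,\frac{d\tau}{\tau}.$$
Finally $\alpha_{0}\neq0$: otherwise $L^{*}L\widehat{W}_{\infty}=0$, so $\|L\widehat{W}_{\infty}\|_{L^{2}}^{2}=\langle L^{*}L\widehat{W}_{\infty},\widehat{W}_{\infty}\rangle=0$, hence $L\widehat{W}_{\infty}=0$, and since $(M,g)$ has no conformal Killing vector field, $\widehat{W}_{\infty}=0$, contradicting $\|\widehat{W}_{\infty}\|_{W^{2,p}}=1$. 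Thus $\widehat{W}_{\infty}$ solves~(\ref{limit}) for some $\alpha_{0}\in(0,1]$.

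The main obstacle is this blow-up analysis in the unbounded case: showing that $\psi_{n}$ and $W_{n}$ blow up at comparable rates and that, to leading order, $\psi_{n}^{6}=\sqrt{3/2}\,|\sigma+LW_{n}|/\tau$, so that after dividing the vector equation by $\gamma_{n}$ its right-hand side converges in $L^{p}$ to $\alpha_{0}\sqrt{2/3}\,|L\widehat{W}_{\infty}|\,d\tau/\tau$. The crude maximum-principle bound above only gives the upper estimate; obtaining the two-sided asymptotics of $\psi_{n}$ requires the sub/supersolution method (Proposition~\ref{method sub-super}, Lemma~\ref{maxpriw}) together with the conformal covariance of the Lichnerowicz equation (Lemma~\ref{maxwell2}), which kills the Laplacian term in the natural rescaled picture. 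Everything else — compactness of $T$, the elliptic-regularity bootstrap in the bounded case, and ruling out $\alpha_{0}=0$ — is routine.
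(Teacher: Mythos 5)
Your overall architecture is exactly that of the paper: apply Schaefer's theorem (Corollary~\ref{corSchae}) to the compact map $T$, and, when the set $K$ is unbounded, run a blow-up analysis on the sequence $\varphi_i=t_iT(\varphi_i)$ to produce a nontrivial solution of (\ref{limit}) with $\alpha_0=\lim t_i^6$. Your bounded case, your upper bound $\psi_n^{6}\leq C(1+\gamma_n)$, the rescaling of the vector equation, and the argument ruling out $\alpha_0=0$ via invertibility of $L^*L$ (no conformal Killing fields) all match the paper in substance, modulo the harmless choice of normalizing by $\|W_n\|_{W^{2,p}}$ rather than by $\gamma_i=\|\psi_i\|_{\infty}$.

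However, there is a genuine gap, and you name it yourself: the two-sided asymptotics $\psi_n^{6}/\gamma_n\to\sqrt{3/2}\,|L\widehat W_\infty|/\tau$ (the analogue of (\ref{converging})) is asserted as ``the crucial input'' but never proved; your maximum-principle computation only yields the upper bound, and saying that the lower bound ``requires the sub/supersolution method together with conformal covariance'' is not an argument --- it is precisely the step where essentially all of the work in the paper's proof lies. Concretely, the paper proves (\ref{converging}) as follows: fix $\epsilon>0$, choose a \emph{fixed} comparison function $\widetilde\omega\in C^2_+$ within $\epsilon/2$ of $\bigl(\sqrt{3/2}\,|L\widetilde W_\infty|/\tau\bigr)^{1/6}$, and argue by contradiction. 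If $\widetilde\psi_i>\widetilde\omega+\epsilon/2$ somewhere, then by Lemma~\ref{maxpriw} the function $\widetilde\omega+\epsilon/2$ is not a supersolution of the rescaled Lichnerowicz equation, so the differential inequality reverses at some point $p_i$; multiplying by $\tfrac{3}{2\tau^2}(\widetilde\omega+\epsilon/2)^{7}$ and letting $i\to\infty$, the Laplacian term dies because it is $\Delta$ of the \emph{fixed} $C^2$ function $\widetilde\omega$ multiplied by $\gamma_i^{-4}\to0$ (here $\min\tau>0$ and $\widetilde W_i\to\widetilde W_\infty$ in $C^1$ are used), giving $(\widetilde\omega+\epsilon/2)^{12}(p_\infty)\leq \tfrac32\bigl(|L\widetilde W_\infty|/\tau\bigr)^2(p_\infty)$, contradicting the $\epsilon/2$-closeness; the lower bound is symmetric, working on the set where $\widetilde\omega-\epsilon/2>0$. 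Note that no conformal covariance (Lemma~\ref{maxwell2}) is needed: the point of comparing with a fixed smooth $\widetilde\omega$, rather than manipulating $\widetilde\psi_i$ itself, is exactly to make the Laplacian term negligible without any conformal change. A second, minor, issue: your evaluation of the Lichnerowicz equation ``at a maximum point of $\psi_n$'' is not directly licit for $\psi_n\in W^{2,p}$ (only $C^{1,\beta}$); the clean route, as above, is to compare with a constant via Lemma~\ref{maxpriw} (or to invoke the estimate $u^6\leq C\max\{\|w\|_\infty,1\}$ quoted after the theorem), which gives the same upper bound without pointwise second derivatives.
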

Dahl-Gicquaud-Humbert's proof of this theorem in \cite{DGH} goes as follows: first, they apply Schauder's fixed point theorem to solve a subcritical system, that is a small perturbation of the
system \eqref{CE} where some exponent $N$ is replaced by $N-\epsilon$. This provides a sequence $(u_{\epsilon})$ of solutions to the subcritical system which is expected to converge to a solution of \eqref{CE}
when $\epsilon$ tends to $0$. A study of the sequence $(u_{\epsilon})$ shows that this actually happens when the limit equation \eqref{limit} has no non-trivial solution.\\

In the proof we present here, we show that Shaefer's fixed point theorem can be applied as soon as \eqref{limit} has no non-trivial solution, leading directly to the existence  of a solution to \eqref{CE}.
This simplifies the proof.

\begin{proof}[Proof of Theorem \ref{theoremofDHG}] Let $T$ be given as Section 2. Recall that $T$ is a continuous compact map from $L^{\infty}$ into itself and $T(\varphi)>0$ for all $\varphi\in L^{\infty}$
(see \cite[Lemma 2.2]{DGH}) . Set
$$S=\left\{\varphi\in L^{\infty}/~~\exists t\in[0,1]:~~\varphi=tT(\varphi)\right\}.$$

If $S$ is bounded, we get a solution to \eqref{CE} by Corollary \ref{corSchae}. If $S$ is not bounded, there exists an unbounded sequence $(\varphi_{i})$ in $L^{\infty}$ w.r.t. $t_{i}$ such that
\begin{subequations}\label{afterreplacing}
\begin{eqnarray}
\frac{4(n-1)}{n-2}\Delta \psi_{i}+R\psi_{i}&=&-\frac{n-1}{n}\tau^{2}\psi_{i}^{N-1}+|\sigma+LW_{i}|^{2}\psi_{i}^{-N-1}\\
-\frac{1}{2}L^{*}LW_{i}&=&\frac{n-1}{n}\varphi_{i}^{N}d\tau,
\end{eqnarray}
\end{subequations}
where $\psi_{i}=T(\varphi_{i})$ and $\varphi_{i}=t_{i}\psi_{i}$. We modify the main idea in \cite{DGH} to obtain the (non-trivial) solution to the limit equation. We set $\gamma_{i}=\|\psi_{i}\|_{\infty}$
and rescale $\psi_{i},~W_{i}$ and $\sigma$ as follows:
$$\widetilde{\psi}_{i}=\gamma_{i}^{-1}\psi_{i},~~\widetilde{W}_{i}=\gamma_{i}^{-N}W_{i},~~\widetilde{\sigma}_{i}=\gamma_{i}^{-N}\sigma.$$
It may be worth noticing that $\gamma_{i}=\|\psi_{i}\|_{\infty}=\frac{1}{t_{i}}\|\varphi_{i}\|_{\infty} \to \infty$ as $i\to\infty$. The system \eqref{afterreplacing}, with $\varphi_{i}$ replaced by
$t_{i}\psi_{i}$ in the vector equation, can be rewritten as
\begin{subequations}\label{rescale}
\begin{eqnarray}
 \frac{1}{\gamma_{i}^{N-2}}\left(\frac{4(n-1)}{n-2}\Delta\widetilde{\psi}_{i}+R\widetilde{\psi}_{i}\right)&=&-\frac{n-1}{n}\tau^{2}\widetilde{\psi}_{i}^{N-1}+|\widetilde{\sigma}+L\widetilde{W}_{i}|^{2}\widetilde{\psi}_{i}^{-N-1}\\
-\frac{1}{2}L^{*}L\widetilde{W}_{i}&=&\frac{n-1}{n}t^{N}_{i}\widetilde{\psi}^{N}_{i}d\tau.
\end{eqnarray}
\end{subequations}
Since $\|\widetilde{\psi}_{i}\|_{\infty}=1$, we conclude from the vector equation that $\left(\widetilde{W}_{i}\right)_{i}$ is bounded in $W^{2,p}$ and then by the Sobolev embedding,
(after passing to a subsequence) $\widetilde{W}_{i}$ converges in the $C^{1}$-norm to some $\widetilde{W}_{\infty}$. We now prove that 
\begin{equation}\label{converging}
\widetilde{\psi}_{i}\to\left(\sqrt{\frac{n}{n-1}}\frac{|L\widetilde{W}_{\infty}|}{\tau}\right)^{\frac{1}{N}}~~~\mbox{in $L^{\infty}$.}
\end{equation}
 Note that if such a statement is proven, passing to the limit in the vector equation, we see that $\widetilde{W}_{\infty}$ is a solution to the limit equation with (after passing to a subsequence)
 $\alpha_{0}=\lim t^{N}_{i}\in [0,1]$. On the other hand, since $\|\widetilde{\psi}_{i}\|_{\infty}=1$ for all $i$, $\widetilde{W}_{\infty}\nequiv 0$ from \eqref{converging} and then by the assumption
 that $(M,g)$ has no conformal Killing vector field, we obtain that $\alpha_{0}\ne 0$ which completes the proof.\\
 
 For any $\epsilon>0$, since $\frac{|L\widetilde{W}_{\infty}|}{\tau}\in C^{0}$, we can choose $\widetilde{\omega}\in C_{+}^{2}$ s.t.
 \begin{equation}\label{omegaandW}
 \biggl|\widetilde{\omega}-\left(\sqrt{\frac{n}{n-1}}\frac{|L\widetilde{W}_{\infty}|}{\tau}\right)^{\frac{1}{N}}\biggr|< \frac{\epsilon}{2}.
 \end{equation}
 To show \eqref{converging}, it suffices to show that
 $$|\widetilde{\psi}_{i}-\widetilde{\omega}|\leq \frac{\epsilon}{2}$$
 for all $i$ large enough. We argue by contradiction. Assume that the previous inequality is not true. We first consider the case when (after passing to a subsequence) there exists a sequence
 $(m_{i})\in M$ s.t. 
 \begin{equation}\label{omegaandpsi}
 \widetilde{\psi}_{i}(m_{i})>\widetilde{\omega}(m_{i})+\frac{\epsilon}{2}.
 \end{equation}
 By Lemma \ref{maxpriw} and Inequality (\ref{omegaandpsi}), $\widetilde{\omega}+\frac{\epsilon}{2}$ is not a supersolution to the rescaled Lichnerowicz equation. As a consequence, since $\Delta$ is here assumed to be the nonnegative Laplace, there exists a sequence
 $(p_{i})\in M$ satisfying
 \begin{align*}
 \frac{1}{\gamma_{i}^{N-2}}\left[\frac{4(n-1)}{n-2}\Delta\left(\widetilde{\omega}+\frac{\epsilon}{2}\right)(p_{i})+R\left(\widetilde{\omega}+\frac{\epsilon}{2}\right)(p_{i})\right]&+\frac{n-1}{n}\tau^{2}(p_{i})\left(\widetilde{\omega}+\frac{\epsilon}{2}\right)^{N-1}(p_{i})\\
 &<
 \left|\widetilde{\sigma}_{i}(p_{i})+L\widetilde{W}_{i}(p_{i})\right|^{2}\left(\widetilde{\omega}+\frac{\epsilon}{2}\right)^{-N-1}(p_{i}).
 \end{align*}
 Without loss of generality, we can assume that there exists $p_\infty \in M$ such that $p_i \to p_\infty$.
 Since $\left(\widetilde{\omega}+\frac{\epsilon}{2}\right)$ and $\tau$ are positive, the previous inequality can be rewritten as follows
 \begin{align*}
 \frac{n\left(\widetilde{\omega}+\frac{\epsilon}{2}\right)^{N+1}(p_{i})}{(n-1)\tau^{2}(p_{i})\gamma_{i}^{N-2}}\left[\frac{4(n-1)}{n-2}\Delta\left(\widetilde{\omega}+\frac{\epsilon}{2}\right)(p_{i})+R\left(\widetilde{\omega}+\frac{\epsilon}{2}\right)(p_{i})\right]&+\left(\widetilde{\omega}+\frac{\epsilon}{2}\right)^{2N}(p_{i})\\
 &<
 \frac{n}{n-1}\left|\widetilde{\sigma}_{i}(p_{i})+L\widetilde{W}_{i}(p_{i})\right|^{2}\tau^{-2}(p_{i}).
 \end{align*}
 Taking $i\to\infty$, due to the facts that $\widetilde{\omega}\in C^{2}_{+}$, $\min\tau>0$, $\gamma_{i}\to\infty$ and $\widetilde{W}_{i}\to\widetilde{W}_{\infty}$ in $C^{1}-$norm, we obtain that
 
$$\frac{n\left(\widetilde{\omega}+\frac{\epsilon}{2}\right)^{N+1}(p_{i})}{(n-1)\tau^{2}(p_{i})\gamma_{i}^{N-2}}\left[\frac{4(n-1)}{n-2}\Delta\left(\widetilde{\omega}+\frac{\epsilon}{2}\right)(p_{i})+R\left(\widetilde{\omega}+\frac{\epsilon}{2}\right)(p_{i})\right]\to 0,$$

  $$\left(\widetilde{\omega}+\frac{\epsilon}{2}\right)^{2N}(p_{i})\to \left(\widetilde{\omega}+\frac{\epsilon}{2}\right)^{2N}(p_{\infty})$$
and
$$\frac{n}{n-1}\left|\widetilde{\sigma}_{i}(p_{i})+L\widetilde{W}_{i}(p_{i})\right|^{2}\tau^{-2}(p_{i})\to \frac{n}{n-1}\left(\frac{|L\widetilde{W}_{i}|}{\tau}\right)^{2}(p_{\infty}),$$ 
This proves that 
$$\widetilde{\omega}(p_{\infty})+\frac{\epsilon}{2}\leq \left(\sqrt{\frac{n}{n-1}}\frac{|L\widetilde{W}_{\infty}|}{\tau}\right)^{\frac{1}{N}}(p_{\infty}),$$
which is a contradiction with (\ref{omegaandW}).\\

For the remaining case, i.e. when there exists a sequence $\left(m_{i}\right)\in M$ s.t.
$\widetilde{\omega}(m_{i})-\frac{\epsilon}{2}>\widetilde{\psi}_{i}(m_{i})$, $\widetilde{\omega}-\frac{\epsilon}{2}$ is not a subsolution to the rescaled Lichnerowicz equation on
$B_{\epsilon}=\left\{m\in M:~~\widetilde{\omega}(m)-\frac{\epsilon}{2}>0\right\}$
(here note that $\widetilde{\psi}_{i}>0$, then $\widetilde{\omega}(m_{i})-\frac{\epsilon}{2}>0$ and $\widetilde{\omega}-\frac{\epsilon}{2}<\widetilde{\psi}_{i}$ on
$\partial B_{\epsilon}$ if $B_{\epsilon}\subsetneq M$). By similar arguments to the first case, we also obtain a contradiction.
\end{proof}
 
The condition $\tau>0$ plays an important role in the proof of the main theorem in \cite{DGH} (or Theorem \ref{theoremofDHG}). Indeed, this condition implies that for any $(u,w)$ satisfying
\eqref{Lichnerowicz}, we have 
$$u^{N}\leq C(g,\tau,\sigma)\max\{\|w\|_{\infty},1\}$$ 
(it is a consequence of the maximum principle), which plays a crucial role in the proof. When $\tau$ vanishes, this inequality does not remain true as shown by the following proposition:

\begin{Proposition}\label{counterexample}
Let $\tau:~M\rightarrow \mathbb{R}$ be a $C^{0}$ function. For any $k>1$, we denote by $u_{k}>0$ the unique solution to \eqref{Lichnerowicz}
associated to $w=k$. Assume that $\tau$ vanishes somewhere, then
$$\frac{\|u_{k}\|_{\infty}^{N}}{k}\to \infty~~~\mbox{as $k\to\infty$}.$$
\end{Proposition}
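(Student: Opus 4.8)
The plan is to argue by contradiction, mimicking the rescaling used in the proof of Theorem~\ref{theoremofDHG}. Suppose the conclusion fails: then there is a constant $C>0$ and a sequence $k\to\infty$ along which $\gamma_k:=\|u_k\|_\infty$ satisfies $\gamma_k^6\le Ck$, so that in particular $k^2/\gamma_k^{12}\ge 1/C^2$. I would first record that $\gamma_k\to\infty$: by Lemma~\ref{maxpriw} the $u_k$ are nondecreasing in $k$, so if $\gamma_k$ stayed bounded by some $L$ then, integrating (\ref{Lichnerowicz}) over $M$ and using $\int_M\Delta u_k\,dv=0$, $R\in L^1$ and $\tau\in L^\infty$, we would get
$$\int_M Ru_k\,dv+\frac23\int_M\tau^2u_k^5\,dv=k^2\int_M u_k^{-7}\,dv\ge k^2L^{-7}\,\mathrm{vol}(M),$$
whose left-hand side is bounded independently of $k$ --- impossible.

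Next comes the rescaling. Set $\widetilde u_k:=\gamma_k^{-1}u_k$, so $0<\widetilde u_k\le\|\widetilde u_k\|_\infty=1$, and divide (\ref{Lichnerowicz}) by $\gamma_k^5$ to obtain
$$\frac1{\gamma_k^4}\bigl(8\Delta\widetilde u_k+R\widetilde u_k\bigr)+\frac23\tau^2\widetilde u_k^5=\frac{k^2}{\gamma_k^{12}\,\widetilde u_k^7}.$$
Since $\tau$ vanishes at some point $p\in M$ and is continuous, there is a nonempty open set $U\subset M$ on which $\frac23\tau^2\le\frac1{2C^2}$. On $U$, using $0<\widetilde u_k\le 1$ and $k^2/\gamma_k^{12}\ge1/C^2$, the right-hand side above is $\ge 1/C^2$ while $\frac23\tau^2\widetilde u_k^5\le\frac1{2C^2}$, whence
$$8\Delta\widetilde u_k+R\widetilde u_k\ge\frac{\gamma_k^4}{2C^2}\qquad\text{on }U.$$

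Finally I would localize and integrate. Choose a smooth cutoff $\chi$ with $0\le\chi\le1$, $\mathrm{supp}\,\chi\subset U$ and $\chi\equiv1$ on some nonempty open $U'$ with $\overline{U'}\subset U$. Multiplying the last inequality by $\chi$ and integrating over $M$, self-adjointness of $\Delta$ on the closed manifold $M$ gives $\int_M\chi\Delta\widetilde u_k\,dv=\int_M\widetilde u_k\Delta\chi\,dv\le\|\Delta\chi\|_{L^1}$ (here $\Delta\chi$ is bounded since $\chi$ is smooth and $g\in W^{2,p}$ with $p>3$), while $\int_M\chi R\widetilde u_k\,dv\le\|R\|_{L^1}$; hence $\int_M\chi\bigl(8\Delta\widetilde u_k+R\widetilde u_k\bigr)\,dv$ is bounded by a constant independent of $k$. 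But this same integral is $\ge\frac{\gamma_k^4}{2C^2}\,\mathrm{vol}(U')\to\infty$, a contradiction. The crux of the argument --- and the only place where the negated hypothesis $\gamma_k^6\le Ck$ is used --- is keeping the right-hand side of the rescaled equation bounded away from $0$ on a fixed neighborhood of a zero of $\tau$, which forces $\Delta\widetilde u_k$ to be of order $\gamma_k^4$ there and clashes with $\|\widetilde u_k\|_\infty=1$; the low regularity of $R$ is harmless since it enters only through $\|R\|_{L^1}$.
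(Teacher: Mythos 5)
Your proof is correct, and it takes a genuinely different route from the paper. The paper rescales by $k^{1/6}$ and compares $\widetilde u_k=k^{-1/6}u_k$ pointwise with the truncated family $\widetilde\varphi_A=\min\bigl\{\bigl(\tfrac{3}{2\tau^2}\bigr)^{1/12},A\bigr\}$: it shows $\widetilde\varphi_A\le\widetilde u_k+\epsilon$ for $k$ large by a ``fails to be a subsolution at some point $p_k$, pass to the limit'' argument, and then lets $A\to\infty$, using that $\tau$ vanishes to make $\sup\widetilde\varphi_A$ blow up. You instead rescale by $\gamma_k=\|u_k\|_\infty$, observe that under the negated conclusion the source term $k^2/(\gamma_k^{12}\widetilde u_k^{7})$ stays bounded below by $1/C^2$ while the potential term $\tfrac23\tau^2\widetilde u_k^5$ is below $1/(2C^2)$ on a fixed neighborhood $U$ of a zero of $\tau$, so $8\Delta\widetilde u_k+R\widetilde u_k\ge\gamma_k^4/(2C^2)$ a.e.\ on $U$; pairing with a cutoff and using self-adjointness of $\Delta$ then contradicts $0<\widetilde u_k\le1$ once $\gamma_k\to\infty$, which you establish by a simple integration of the equation (your use of Lemma \ref{maxpriw} there is even dispensable, since the bounded subsequence alone yields the contradiction). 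Your argument is more elementary in its key step --- no comparison/sub-supersolution machinery, only Green's identity against a smooth cutoff, and the low regularity of $R$ and of the equation (which holds a.e.\ in the $W^{2,p}$ setting) is handled transparently --- whereas the paper's comparison argument buys more: it gives the pointwise asymptotic lower bound $\widetilde u_k\gtrsim\bigl(\tfrac{3}{2\tau^2}\bigr)^{1/12}-\epsilon$, which is exactly what is recycled in the proof of Proposition \ref{estimateLp} to characterize boundedness of $\|u_k\|_{L^{6\alpha}}^6/k$ in terms of $|\tau|^{-\alpha}\in L^1$. Your method only shows divergence of the sup norm, but that is all the stated proposition requires.
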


\begin{proof}
Set $\widetilde{u}_{k}:=u_{k}/k^{\frac{1}{N}}$, then $\widetilde{u}_{k}$ is a solution to the following equation:
\begin{equation}\label{rescale equation}
\frac{1}{k^{\frac{N-2}{N}}}\left(\frac{4(n-1)}{n-2}\Delta \widetilde{u}_{k}+R\widetilde{u}_{k}\right)+\frac{n-1}{n}\tau^{2}\widetilde{u}_{k}^{N-1}=\frac{1}{\widetilde{u}_{k}^{N+1}}.
\end{equation}
Given $A>0$, we set
\begin{equation}\label{subsol}
\widetilde{\varphi}_{A}=\min\left\{\left(\frac{n}{(n-2)\tau^{2}}\right)^{\frac{1}{2N}}, A\right\}.
\end{equation}
Fix $\epsilon>0$ small enough, we first prove that 
\begin{equation}\label{*}
\widetilde{\varphi}_{A}\leq \widetilde{u}_{k}+2\epsilon, ~~~\forall k\geq k_{A},
\end{equation}
for some $k_{A}$ large enough depending on $A$. We proceed by contradiction. Assume that this is not true, so there exists a subsequence $\{m_{k}\}\in M$ s.t.
\begin{equation}\label{subA}
\widetilde{\varphi}_{A}(m_{k})-2\epsilon>\widetilde{u}_{k}(m_{k}).
\end{equation}
Next since $\widetilde{\varphi}_{A}\in C^{0}_{+}$, we can choose $\widetilde{\phi}_{A}\in C^{2}_{+}$ s.t.
\begin{equation}\label{regularC2}
|\widetilde{\phi}_{A}-\widetilde{\varphi}_{A}|\leq \epsilon/2.
\end{equation}  
Then it follows from \eqref{subA} that
\begin{equation}\label{subA2}
\widetilde{\phi}_{A}(m_{k})-\epsilon>\widetilde{u}_{k}(m_{k}).
\end{equation}
Set $B_{A}=\left\{m\in M:~ \widetilde{\phi}_{A}-\epsilon>0\right\}$. Since $\widetilde{u}_{k}>0$, we deduce from \eqref{subA2} that $\widetilde{\phi}_{A}-\epsilon$
is not a subsolution to \eqref{rescale equation} and hence there exists a sequence $\{p_{k}\}\in B_{A}$ s.t.
$$\frac{1}{k^{\frac{N-2}{N}}}\left[\frac{4(n-1)}{n-2}\Delta \left(\widetilde{\phi}_{A}-\epsilon\right)(p_{k})+R(p_{k})\left(\widetilde{\phi}_{A}-\epsilon\right)(p_{k})\right]+\frac{n-1}{n}\tau^{2}(p_{k})\left(\widetilde{\phi}_{A}-\epsilon\right)^{N-1}(p_{k})>\frac{1}{\left(\widetilde{\phi}_{A}-\epsilon\right)^{N+1}(p_{k})}$$
or equivalently,
$$\frac{\left(\widetilde{\phi}_{A}-\epsilon\right)^{N+1}(p_{k})}{k^{\frac{N-2}{N}}}\left[\frac{4(n-1)}{n-2}\Delta \left(\widetilde{\phi}_{A}-\epsilon\right)(p_{k})+R(p_{k})\left(\widetilde{\phi}_{A}-\epsilon\right)(p_{k})\right]+\frac{n-1}{n}\tau^{2}(p_{k})\left(\widetilde{\phi}_{A}-\epsilon\right)^{2N}(p_{k})>1.$$
Taking $k\to \infty$ and assuming (after passing to a subsequence) $p_i \to p_\infty$, we obtain that
$$\frac{\left(\widetilde{\phi}_{A}-\epsilon\right)^{N+1}(p_{k})}{k^{\frac{N-2}{N}}}\left[\frac{4(n-1)}{n-2}\Delta \left(\widetilde{\phi}_{A}-\epsilon\right)(p_{k})+R(p_{k})\left(\widetilde{\phi}_{A}-\epsilon\right)(p_{k})\right]\to 0$$
and
$$\frac{n-1}{n}\tau^{2}\left(\widetilde{\phi}_{A}-\epsilon\right)^{2N}(p_{k})\to \frac{n-1}{n}\tau^{2}(p_{\infty})\left(\widetilde{\phi}_{A}-\epsilon\right)^{2N}(p_{\infty}),$$
This shows that
\begin{equation}\label{counterexample2}
\frac{n-1}{n}\tau^{2}(p_{\infty})\left(\widetilde{\phi}_{A}-\epsilon\right)^{2N}(p_{\infty})\geq 1.
\end{equation}
On the other hand, we have
\begin{align*}
\frac{n-1}{n}\tau^{2}(p_{\infty})\left(\widetilde{\phi}_{A}-\epsilon\right)^{2N}(p_{\infty})&\leq \frac{n-1}{n}\tau^{2}(p_{\infty})\left(\widetilde{\varphi}_{A}-\frac{\epsilon}{2}\right)^{2N}(p_{\infty})\qquad\mbox{(by \eqref{regularC2})}\\
&\leq \frac{n-1}{n}\tau^{2}(p_{\infty})\left(\widetilde{\varphi}_{A}^{2N}(p_{\infty})-\left(\frac{\epsilon}{2}\right)^{2N}\right)\\
&\leq 1-\frac{n-1}{n}\tau^{2}(p_{\infty})\left(\frac{\epsilon}{2}\right)^{2N}\\
&<1,
\end{align*}
which is a contradiction with \eqref{counterexample2}, and then \eqref{*} holds, as claimed. Now if $\widetilde{u}_{k}\leq C$, we deduce  from
\eqref{*} that $\max\widetilde{\varphi}_{A}\leq C+2\epsilon$, which is false when $A\to +\infty$ since $\tau$ has some zeros. The proof is completed.
 \end{proof}
 
We can be more precise. This is the content of the next proposition
\begin{Proposition}\label{estimateLp}
Let $\tau:~M\rightarrow \mathbb{R}$ be a $C^{0}$ function. We set
$$L=\left\{(u,w)\in W_{+}^{2,p}\times L^{\infty}:~(u,v)~\mbox{satisfies \eqref{Lichnerowicz}}\right\}.$$ Given $\alpha\geq\frac{1}{N},$   
$\sup_{(u,w)\in L}{\frac{\|u\|_{L^{N\alpha}}^{N}}{\max\{\|w\|_{\infty},1\}}}$ is bounded if and only if  $|\tau|^{-\alpha}\in L^{1}$.
\end{Proposition}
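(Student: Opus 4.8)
# Proof Proposal

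The plan is to prove both directions by exploiting the conformal covariance of the Lichnerowicz equation (Lemma \ref{maxwell2}) together with the pointwise sub/supersolution estimates already used in Propositions \ref{counterexample} and \ref{compacityofT}, applied this time with a general exponent $\alpha \ge \frac{1}{6}$ in place of the borderline case $\alpha = \frac{1}{6}$.

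\textbf{The ``if'' direction ($|\tau|^{-\alpha}\in L^1 \implies$ bounded).} Suppose $(u,w)\in L$. As in the proof of Proposition \ref{counterexample}, I would rescale, setting $\gamma = \max\{\|w\|_\infty,1\}^{1/6}$ and $\widetilde u = u/\gamma$, $\widetilde w = w/\gamma^6$, so that $\|\widetilde w\|_\infty \le 1$ and $\widetilde u$ solves an equation of the form
\begin{equation*}
\gamma^{-4}(8\Delta \widetilde u + R\widetilde u) + \tfrac{2}{3}\tau^2 \widetilde u^5 = \widetilde w^2 \widetilde u^{-7}.
\end{equation*}
The content to extract is a pointwise bound $\widetilde u^{12} \le C(g,\tau,\alpha)$-in-the-$L^\alpha$-sense; more precisely I would show $\widetilde u \le C |\tau|^{-1/6}$ up to a lower-order correction. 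To make this rigorous I would use the comparison function $\widetilde\varphi_A = \min\{(3/(2\tau^2))^{1/12}, A\}$ from \eqref{subsol}, but now as a \emph{supersolution} estimate rather than a subsolution one: the term $\widetilde w^2 \widetilde u^{-7}$ must be balanced, and on the set where $\widetilde u$ is large the dominant balance is $\tfrac{2}{3}\tau^2 \widetilde u^5 \approx \widetilde w^2 \widetilde u^{-7}$, giving $\widetilde u^{12} \lesssim \tau^{-2}$ on that set (and $\widetilde u$ bounded where $\tau$ is bounded away from $0$). Running the same contradiction/compactness argument as in Proposition \ref{counterexample} — testing whether $C|\tau|^{-1/6}$ fails to be a supersolution and extracting a limit point — yields a pointwise inequality of the type $\widetilde u(x)^{12} \le C \tau(x)^{-2} + o(1)$ uniformly in $(u,w)\in L$. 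Raising to the power $\alpha/2$ and integrating then gives $\|\widetilde u\|_{L^{6\alpha}}^{6\alpha} \lesssim \int_M |\tau|^{-\alpha}\,dv + |M| < \infty$, hence $\|\widetilde u\|_{L^{6\alpha}}^6 \le C$; unwinding the rescaling, $\|u\|_{L^{6\alpha}}^6 \le C\max\{\|w\|_\infty,1\}$, which is the claimed bound.

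\textbf{The ``only if'' direction ($|\tau|^{-\alpha}\notin L^1 \implies$ unbounded).} Here I would manufacture, for each large $k$, the solution $u_k$ associated to $w\equiv k$, exactly as in Proposition \ref{counterexample}, and show $\|u_k\|_{L^{6\alpha}}^6 / k \to \infty$. With $\widetilde u_k = u_k/k^{1/6}$ solving \eqref{rescale equation}, the subsolution estimate \eqref{*} from Proposition \ref{counterexample} gives $\widetilde\varphi_A \le \widetilde u_k + \epsilon$ for $k \ge k_A$. Therefore for every fixed $A$ and $\epsilon$,
\begin{equation*}
\liminf_{k\to\infty} \|\widetilde u_k\|_{L^{6\alpha}} \ge \|\widetilde\varphi_A\|_{L^{6\alpha}} - \epsilon |M|^{1/(6\alpha)} \cdot(\text{const}),
\end{equation*}
and since $\|\widetilde\varphi_A\|_{L^{6\alpha}} = \big\| \min\{(3/2)^{1/12}|\tau|^{-1/6}, A\}\big\|_{L^{6\alpha}} \to \big\| (3/2)^{1/12}|\tau|^{-1/6}\big\|_{L^{6\alpha}} = +\infty$ as $A\to\infty$ (by monotone convergence, using $|\tau|^{-\alpha}\notin L^1$), we conclude $\|\widetilde u_k\|_{L^{6\alpha}} \to \infty$, i.e. $\|u_k\|_{L^{6\alpha}}^6/k = \|\widetilde u_k\|_{L^{6\alpha}}^6 \to \infty$. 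Since $(u_k, k)\in L$ and $\max\{\|w\|_\infty,1\} = k$ for $k>1$, the supremum in the statement is infinite.

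\textbf{Main obstacle.} The delicate point is the ``if'' direction: turning the heuristic balance $\widetilde u^{12}\lesssim \tau^{-2}$ into a genuine \emph{uniform} pointwise bound valid for all $(u,w)\in L$ simultaneously, despite the degeneracy of the coefficient $\tau^2$ near $Z(\tau)$ and the fact that the elliptic term $\gamma^{-4}(8\Delta\widetilde u + R\widetilde u)$ is only small, not zero. One must be careful that the contradiction argument — which passes to a limit point $p_\infty$ and concludes a pointwise inequality there — actually produces a bound \emph{with constants independent of the particular solution}. I expect this to require running the argument with the comparison function $C|\tau|^{-1/6}$ for a single fixed large constant $C = C(g,\tau,\sigma)$ and invoking Remark \ref{assumptionR} to control the lower-order curvature terms $R\widetilde u$ uniformly. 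A secondary subtlety is measurability/integrability of the resulting bound near $Z(\tau)$: one should check that $\widetilde u \in L^{6\alpha}$ in the first place (which follows since $u\in W^{2,p}\hookrightarrow L^\infty$), so the integral $\int_M |\tau|^{-\alpha}$ is really the only obstruction to boundedness, making the equivalence sharp.
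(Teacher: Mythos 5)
Your ``only if'' direction is essentially the paper's own argument (reduce to $w\equiv k$, invoke the subsolution estimate (\ref{*}) of Proposition \ref{counterexample}, let $A\to\infty$ by monotone convergence), and it is fine. The gap is in the ``if'' direction, and it is exactly the point you flag as the ``main obstacle'' without resolving it: your proof rests on a uniform pointwise estimate of the type $\widetilde u\le C|\tau|^{-1/6}+o(1)$, to be obtained by the sub/supersolution comparison machinery with comparison function $C|\tau|^{-1/6}$ (or $(3/(2\tau^2))^{1/12}+\epsilon$). That machinery cannot be run here. Lemma \ref{maxpriw} and Proposition \ref{method sub-super} require a supersolution in $W^{2,p}_{+}$, in particular bounded and with a meaningful Laplacian; $|\tau|^{-1/6}$ is unbounded at $Z(\tau)$, and since $\tau$ is only assumed to be in $L^{2p}$ in this proposition, $\Delta|\tau|^{-1/6}$ is not even defined. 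The contradiction scheme of Proposition \ref{counterexample} needs the Laplacian of the comparison function at the bad points $p_k$ to be killed by the prefactor $\gamma^{-4}$; here the $p_k$ may drift into $Z(\tau)$, where the comparison function and its derivatives blow up at a rate no fixed power of $\gamma^{-1}$ controls. Capping, i.e.\ using $\min\{(3/(2\tau^2))^{1/12},A\}$, goes the wrong way for an upper bound: on the neighborhood of $Z(\tau)$ where the cap is active, the constant $A$ is not a supersolution (there $\gamma^{-4}RA+\frac{2}{3}\tau^2A^5<A^{-7}$ for large $\gamma$), which is precisely the region where the bound is at stake. Finally, since you did not first reduce to constant $w$ via Lemma \ref{maxpriw}, pairs $(u,w)\in L$ with bounded $\|w\|_\infty$ have $\gamma$ of order one, so the elliptic term is not small and even the formal ``dominant balance'' fails in that regime.

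The paper's sufficiency proof avoids pointwise bounds entirely and is an integral argument you could adopt: with $w=k$, multiply (\ref{Lichnerowicz}) by $u_k^{6\alpha+7}$ and integrate; the Laplacian term becomes a nonnegative gradient term, Remark \ref{assumptionR} bounds the term $\int_M\bigl(Ru_k^{6\alpha+8}+\frac{1}{3}\tau^2u_k^{6(\alpha+2)}\bigr)dv$ from below by a constant independent of $k$, yielding $\int_M\tau^2u_k^{6(\alpha+2)}dv\le C_1k^2\int_M u_k^{6\alpha}dv$; then H\"older with exponents $\frac{\alpha+2}{2}$ and $\frac{\alpha+2}{\alpha}$, writing $u_k^{6\alpha}=|\tau|^{-\frac{2\alpha}{\alpha+2}}\bigl(\tau^2u_k^{6(\alpha+2)}\bigr)^{\frac{\alpha}{\alpha+2}}$ and using $|\tau|^{-\alpha}\in L^1$, gives $\int_M u_k^{6\alpha}dv\le C_2\bigl(k^2\int_M u_k^{6\alpha}dv\bigr)^{\frac{\alpha}{\alpha+2}}$, which absorbs to $\|u_k\|_{L^{6\alpha}}^{6}\le Ck$. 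As written, your ``if'' direction is incomplete: the key pointwise estimate is an unproven claim, and the tools you cite cannot deliver it near $Z(\tau)$.
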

\begin{proof} Applying Lemma \ref{maxpriw} with $w_{0}=w$ and $w_{1}=\|w\|_{\infty}$, we have
\begin{align*}
\sup_{(u,w)\in L}{\frac{\|u\|_{L^{N\alpha}}^{N}}{\max\{\|w\|_{\infty},1\}}}&=\sup_{\substack{(u,w)\in L\\w\mbox{ {\scriptsize constant}}}}{\frac{\|u\|_{L^{N\alpha}}^{N}}{\max\{|w|,1\}}}\\
&=\sup_{k>1}{\frac{\|u_{k}\|^{N}_{L^{N\alpha}}}{k}},
\end{align*}
where $u_{k}$ is the unique positive solution to \eqref{Lichnerowicz} associated to $w=k$. Therefore,
$\sup_{(u,w)\in L}{\frac{\|u\|_{L^{N\alpha}}^{N}}{\max\{\|w\|_{\infty},1\}}} < \infty$ if and only if $\frac{\|u_{k}\|^{N}_{L^{N\alpha}}}{k}$
is uniformly bounded for all $k>1$. Moreover note that with $C=C(g,\tau)$ large enough and not depending on $k$, $k^{\frac{1}{N}}/C$ is a subsolution
to \eqref{Lichnerowicz} associated to $w=k$, and hence for all $k>1$,
\begin{equation}\label{*++}
u_{k}\geq \frac{k^{\frac{1}{N}}}{C}>\frac{1}{C}
\end{equation}
We first prove that $|\tau|^{-\alpha}\in L^{1}$ is a necessary condition. Set $\widetilde{u}_{k}=u_{k}/k^{\frac{1}{N}}$ and we let $\widetilde{\varphi}_{A}$ given by \eqref{subsol}.
As in the proof of Proposition \ref{counterexample}, we obtain that for all $k$ large enough and depending on $A$,
 $$\widetilde{\varphi}_{A}\leq \widetilde{u}_{k}+\epsilon.$$
Assume that $\widetilde{u}_{k}$ is uniformly bounded in $L^{N\alpha}$, so is $\widetilde{\varphi}_{A}$ by the previous inequality. On the other hand, it is clear that $\widetilde{\varphi}_{A}$
converges pointwise a.e to $\left(\frac{n}{n-1}\right)^{\frac{1}{2N}}|\tau|^{-\frac{1}{N}}$ as $A\to\infty$. Hence the monotone convergence theorem implies that $|\tau|^{-\frac{1}{N}}\in L^{N\alpha}$,
which is our claim.\\

We now prove that the condition is sufficient. Assume that $|\tau|^{-\alpha}\in L^{1}$. Multiplying \eqref{Lichnerowicz} by $u_{k}^{N\alpha+N+1}$ and integrating over $M$, we have
\begin{equation}\label{equationwithalpha}
\frac{4(n-1)}{n-2}\int_{M}{u_{k}^{N\alpha+N+1}\Delta u_{k}dv}+\int_{M}{Ru_{k}^{N\alpha+N+2}dv}+\frac{n-1}{n}\int_{M}{\tau^{2}u_{k}^{N(\alpha+2)}dv}=k^{2}\int_{M}{u_{k}^{N\alpha}dv}.
\end{equation}
As observed in Remark \ref{assumptionR},  $Ru_{k}^{N\alpha+N+2}+\frac{n-2}{n}\tau^{2}u_{k}^{N(\alpha+2)}$ is uniformly bounded from below by a constant $\zeta=\zeta(g,\tau)$ which does not depend on $k$ since we assume that $R\geq 0$ or $R=-\frac{n-1}{n}\tau^{2}$. Moreover, we have
$$\int_{M}{u_{k}^{N\alpha+N+1}\Delta u_{k}dv}=\frac{N\alpha+N+1}{(\frac{N\alpha+N}{2}+1)^{2}}\int_{M}{|\nabla\varphi^{\frac{N\alpha+N}{2}+1}|^{2}dv}.$$
These facts combined with \eqref{*++}-\eqref{equationwithalpha} lead to
\begin{equation}\label{**}
\int_{M}{\tau^{2}u_{k}^{N(\alpha+2)dv}}\leq C_{1}(C,\zeta)k^{2}\int_{M}{u_{k}^{N\alpha}dv},
\end{equation}
 On the other hand, we get that
\begin{align*}
\int_{M}{u_{k}^{N\alpha}dv}&\leq \left(\int_{M}{|\tau|^{-\alpha}dv}\right)^{\frac{2}{\alpha+2}}\left(\int_{M}{\tau^{2}u_{k}^{N(\alpha+2)}dv}\right)^{\frac{\alpha}{\alpha+2}}~~(\mbox{by H\"{o}lder inequality})\\
&\leq C_{2}(C_{1},\tau,\alpha)\left(k^{2}\int_{M}{u_{k}^{N\alpha}dv}\right)^{\frac{\alpha}{\alpha+2}}~~~~~(\mbox{by (\ref{**})}).
\end{align*}
 It follows easily that for all $k>1$
$$\frac{\|u_{k}\|_{L^{N\alpha}}^{N}}{k}\leq C^{\frac{\alpha+2}{2\alpha}}_{2},$$ 
which completes our proof.
\end{proof}
The fixed point theorem above has some other consequences that we describe now. First, we have the following proposition.

\begin{Proposition}\label{remark 1}
 Let data be given on $M$ as specified in \eqref{condintial} and assume that $(M,g)$ has no conformal Killing vector field and $\sigma\nequiv 0$. If $\mathcal{Y}_{g}>0$, then there exists a constant $\alpha=\alpha(g,\tau,\sigma)\in (0,1]$ such that the constraint equations w.r.t. the new data $(g,\alpha\tau,\sigma)$ admits a solution.
\end{Proposition}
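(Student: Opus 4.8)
The plan is to apply Schaefer's fixed point theorem (Corollary~\ref{corSchae}) not to the original data but to the rescaled data $(g,\alpha\tau,\sigma)$, with $\alpha\in(0,1]$ chosen small at the very end. Let $T_{\alpha}\colon L^{\infty}\to L^{\infty}$ be the map built exactly as in Section~2 from $(g,\alpha\tau,\sigma)$, so that $T_{\alpha}(\varphi)=\psi$ where $-\tfrac12 L^{*}LW=\tfrac23\alpha\varphi^{6}d\tau$ and $8\Delta\psi+R\psi+\tfrac23\alpha^{2}\tau^{2}\psi^{5}=|\sigma+LW|^{2}\psi^{-7}$. First I would check that $T_{\alpha}$ is well defined and compact as in Proposition~\ref{compacityofT}: the vector equation is uniquely solvable because $(M,g)$ has no conformal Killing field, and the Lichnerowicz step uses Case~$1$ of Theorem~\ref{maxwell1} (here $\mathcal{Y}_{g}>0$) once one knows $w:=|\sigma+LW|\nequiv 0$; but if $w\equiv 0$ then $\sigma=-LW$ is divergence-free and in the range of $L$, so $0=L^{*}\sigma=-L^{*}LW$, whence $LW=0$, $W=0$, and $\sigma\equiv 0$, contradicting the hypothesis. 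By Remark~\ref{assumptionR} I may also assume $R\ge R_{0}$ for some constant $R_{0}>0$.

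The core is an a priori ``global supersolution'' bound obtained through the comparison principle of Lemma~\ref{maxpriw}. Let $c_{0}=c_{0}(g)$ be the constant in the elliptic estimate $\|LW\|_{\infty}\le c_{0}\|\varphi\|_{\infty}^{6}\|d\tau\|_{L^{p}}$ for solutions of the vector equation (valid since $p>3$). If $\varphi=tT_{\alpha}(\varphi)$ with $t\in[0,1]$, write $\psi=T_{\alpha}(\varphi)$ and $s=\|\psi\|_{\infty}$; since $\varphi=t\psi$ and $t\le 1$ we get $\|LW\|_{\infty}\le c_{0}\alpha s^{6}\|d\tau\|_{L^{p}}$, so $w\le\Lambda:=\|\sigma\|_{\infty}+c_{0}\alpha s^{6}\|d\tau\|_{L^{p}}$ pointwise. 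Comparing $\psi$ with the solution of the rescaled Lichnerowicz equation for the constant right-hand side $\Lambda$ and with the constant supersolution $(\Lambda^{2}/R_{0})^{1/8}$ of the latter, Lemma~\ref{maxpriw} yields $s\le(\Lambda^{2}/R_{0})^{1/8}$, i.e.
\[
R_{0}^{1/2}s^{4}-c_{0}\alpha\|d\tau\|_{L^{p}}s^{6}\le\|\sigma\|_{\infty}.
\]
Since $\sigma$ is continuous and $\sigma\nequiv 0$ we have $\|\sigma\|_{\infty}>0$, and for $\alpha$ small enough the maximum over $s\ge 0$ of the left-hand side exceeds $\|\sigma\|_{\infty}$; fix such an $\alpha\in(0,1]$. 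Then the admissible values of $s$ form a set $[0,s_{1}]\cup[s_{2},\infty)$ with $0<s_{1}<s_{2}$, so $\|T_{\alpha}(\varphi)\|_{\infty}\in[0,s_{1}]\cup[s_{2},\infty)$ for every $\varphi$ in the Schaefer set $S_{\alpha}=\{\varphi:\exists t\in[0,1],\ \varphi=tT_{\alpha}(\varphi)\}$.

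Finally I would show $S_{\alpha}$ is bounded, which is the step I expect to be the main obstacle, since the estimate above leaves open the ``large branch'' $[s_{2},\infty)$. To rule it out I would use a connectedness argument: $(0,0)$ solves $\varphi=tT_{\alpha}(\varphi)$ and $\|T_{\alpha}(0)\|_{\infty}\le(\|\sigma\|_{\infty}^{2}/R_{0})^{1/8}<s_{1}$ (again by Lemma~\ref{maxpriw}, using the constant supersolution $(\|\sigma\|_{\infty}^{2}/R_{0})^{1/8}$ and $w=|\sigma|$). Applying the Leray--Schauder continuation principle to the compact homotopy $(\varphi,t)\mapsto tT_{\alpha}(\varphi)$, which vanishes for $t=0$ (cf.\ Theorem~\ref{schaefer}), the connected component of its fixed-point set through $(0,0)$ is either unbounded or meets $\{t=1\}$; but on that component the continuous function $\|T_{\alpha}(\varphi)\|_{\infty}$ has connected image inside $[0,s_{1}]\cup[s_{2},\infty)$ meeting $[0,s_{1})$, hence stays $\le s_{1}$, so $\|\varphi\|_{\infty}=t\|T_{\alpha}(\varphi)\|_{\infty}\le s_{1}$ there and the component is bounded. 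It must therefore reach $t=1$, producing a fixed point of $T_{\alpha}$, i.e.\ a solution $(\varphi,W)$ of the constraint equations~(\ref{CE}) for the data $(g,\alpha\tau,\sigma)$. (Alternatively, once the displayed estimate is recognised as furnishing a constant global supersolution for $(g,\alpha\tau,\sigma)$, and noting a global subsolution is available because $\mathcal{Y}_{g}>0$, one may instead invoke the sub/super-solution existence theorem implicitly used in \cite{HNT} and \cite{Maxwell}.)
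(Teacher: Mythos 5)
Your proposal is essentially sound, but it takes a genuinely different route from the paper, and it is really a proof of the stronger small-TT statement (Proposition \ref{farCMCofMawell}) from which Proposition \ref{remark 1} follows by fixing $\alpha$ small. The paper does the opposite: it never chooses $\alpha$ in advance and needs no a priori estimate at all. It builds the homotopy $\widetilde{T}(\varphi,t)=t\psi$, where $\psi$ solves the Lichnerowicz equation with $\tau^{2}$ replaced by $t^{12}\tau^{2}$, so that the scaling of $\tau$ is absorbed into the homotopy parameter; the only work is proving compactness of $\widetilde{T}$ (via the implicit function theorem applied to the solution map of the Lichnerowicz equation, using $\min R>0$). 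Then Theorem \ref{schaefer} forces the existence of some fixed point $\varphi_{0}=\widetilde{T}(\varphi_{0},t_{0})$ with $t_{0}\in(0,1]$ — otherwise the solution set would be $\{0\}$, hence bounded, and the fixed point at $t=1$ would have to be $0$, contradicting $T(\varphi)>0$ — and $\alpha=t_{0}^{6}$ is simply whatever value comes out. What your approach buys is quantitative control ($\alpha$ small, explicitly in terms of $R_{0}$, $\|d\tau\|_{L^{p}}$, $\|\sigma\|_{\infty}$, cf.\ Remark \ref{remarkfarcmc}); what the paper's approach buys is independence from any smallness argument and from the results of \cite{HNT}, \cite{Maxwell} (the paper explicitly notes, via Remark \ref{remark2}, that the proposition also follows from Proposition \ref{farCMCofMawell}, so your ``alternative route'' at the end essentially reproduces that observation rather than giving an independent proof).

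One point in your main argument needs attention. Your a priori bound (constant supersolution via Lemma \ref{maxpriw} and Remark \ref{assumptionR}) only confines $\|T_{\alpha}(\varphi)\|_{\infty}$ to $[0,s_{1}]\cup[s_{2},\infty)$, and, as you yourself note, this does not bound the Schaefer set: solutions of $\varphi=tT_{\alpha}(\varphi)$ with small $t$ on the large branch are not excluded, so Corollary \ref{corSchae}, and Theorem \ref{schaefer} as stated, do not apply. You close this by invoking the connected-component (continuum) version of the Leray--Schauder continuation principle — the component of the solution set through $(0,0)$ is unbounded or reaches $t=1$ — and then the connectedness of the image of $\varphi\mapsto\|T_{\alpha}(\varphi)\|_{\infty}$ keeps you on the small branch. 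That principle is true (it follows from Leray--Schauder degree together with Whyburn's separation lemma, as in global continuation/bifurcation arguments), and your use of it is correct, including the check that $\|T_{\alpha}(0)\|_{\infty}<s_{1}$ for the chosen $\alpha$; but it is strictly stronger than Theorem \ref{schaefer}, so the parenthetical ``cf.\ Theorem \ref{schaefer}'' is not an adequate justification — you must either cite the continuation theorem properly or prove it. With that reference supplied (or with the sub/supersolution existence theorem of \cite{HNT}, \cite{Maxwell} invoked, at the cost of making the proposition a corollary of cited work), your argument is complete.
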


\begin{Remark}
In the proof, we apply Leray-Schauder's Theorem \ref{schaefer} and not Corollary \ref{corSchae} as in the proof of Theorem \ref{theoremofDHG}.
\end{Remark}

\begin{proof}
By Remark \ref{assumptionR}, we may assume $R>0$. We construct a compact map $\widetilde{T}: L^{\infty}\times [0,1]\rightarrow L^{\infty}$ as follows. For each
$(\varphi,t)\in L^{\infty}\times [0,1]$, there exists a unique $W_\varphi\in W^{2,p}$ s.t.
$$-\frac{1}{2}L^{*}LW_\varphi=\frac{n-1}{n}\varphi^{N}d\tau$$
and there exists a unique positive $\psi\in W^{2,p}$ satisfying 
$$\frac{4(n-1)}{n-2}\Delta \psi+R\psi=-\frac{n-1}{n}t^{2N}\tau^{2}\psi^{N-1}+|\sigma+LW_\varphi|^{2}\psi^{-N-1}$$
(see \cite[Lemma 2.2]{DGH} and notice that $R>0$).
Then we define
$$\widetilde{T}(\varphi,t)=t\psi.$$
The continuity and compactness of $\widetilde{T}$ is clearly a direct consequence of the continuity and compactness of $T'(\varphi,t)\coloneqq \frac{\widetilde{T}(\varphi,t)}{t}=\psi$.\\

Note that $T'(\varphi,t)=\widetilde{T}_{1}(G(\varphi),t)$. Here $G(\varphi)=|LW_\varphi+\sigma|\nequiv 0$ and $\widetilde{T}_{1}:~L^{\infty}\times [0,1]\rightarrow W_{+}^{2,p}$ is defined by
$\widetilde{T}_{1}(w,t)=\psi$, where

\begin{equation}\label{Line.res.t}
\frac{4(n-1)}{n-2}\Delta \psi+R\psi=-\frac{n-1}{n}t^{2N}\tau^{2}\psi^{N-1}+w^{2}\psi^{-N-1}.
\end{equation}

As proven in \cite{DGH}, $G$ is continuous compact, so the continuity and compactness of $T'$ and hence that of $\widetilde{T}$, will follow from the continuity of
$\widetilde{T}_{1}$. Actually, we prove more: $\widetilde{T}_{1}$ is a $C^{1}-$map.  Indeed, define $F:L^{\infty}\times [0,1]\times W_{+}^{2,p}\rightarrow L^{2p}$ by
$$F(w,t,\psi)= \frac{4(n-1)}{n-2}\Delta \psi+R\psi+\frac{n-1}{n}t^{2N}\tau^{2}\psi^{N-1}-w^{2}\psi^{-N-1}.$$
It is clear that $F$ is continuous and $F(w,t,\widetilde{T}_{1}(w,t))=0$ for all $(w,t)\in L^{\infty}\times [0,1]$.
A standard computation shows that the Fr\'{e}chet derivative of $F$ w.r.t. $\psi$ is given by
$$F_{\psi}(w,t)(u)=\frac{4(n-1)}{n-2}\Delta u+Ru+\frac{(N-1)(n-1)}{n}t^{2N}\tau^{2}\psi^{N-2}u+(N+1)w^{2}\psi^{-N-2}u.$$
We first note that $F_{\psi}\in C\left(L^{\infty}\times [0,1], L(W^{2,p},L^p)\right)$, where $L(W^{2,p}, L^p)$ denotes the Banach space of all linear continuous maps from $W^{2,p}$ into $L^p$.
Now, given $(w_{0},t_{0})\in L^{\infty}\times [0,1]$, setting $\psi_{0}=\widetilde{T}_{1}(w_{0},t_{0})$, we have
$$F_{\psi_{0}}(w_{0},t_{0})(u)=\frac{4(n-1)}{n-2}\Delta u+\left(R+\frac{(N-1)(n-1)}{n}t_{0}^{2N}\tau_{0}^{2}\psi_{0}^{N-2}+(N+1)w_{0}^{2}\psi_{0}^{-N-2}\right)u.$$
Since 
$$R+\frac{(N-1)(n-1)}{n}t_{0}^{2N}\tau_{0}^{2}\psi_{0}^{N-2}+(N+1)w_{0}^{2}\psi_{0}^{-N-2}\geq \min R>0,$$
we conclude that $F_{\psi_{0}}(w_{0},t_{0}):~W^{2,p}\to L^p$ is an isomorphism. The implicit function theorem then implies that $\widetilde{T}_{1}$ is a $C^{1}$ function in a
neighborhood of $(w_{0},t_{0})$, which proves our claim.\\

Next applying Leray-Schauder's Theorem \ref{schaefer} to $\widetilde{T}$, we obtain as a direct consequence that there exist $\varphi_{0}\in L^{\infty}$ and $t_{0}\in (0,1]$ s.t.
\begin{subequations}
\begin{eqnarray*}
\frac{4(n-1)}{n-2}\Delta \psi_{0}+R\psi_{0}&=&-\frac{n-1}{n}t^{2N}_{0}\tau^{2}\psi_{0}^{N-1}+|\sigma+LW_{0}|^{2}\psi_{0}^{-N-1}\\
-\frac{1}{2}L^{*}LW_{0}&=&\frac{n-1}{n}\varphi_{0}^{N}d\tau,
\end{eqnarray*}
\end{subequations}
with $\varphi_{0}=t_{0}\psi_{0}\in W^{2,p}$. Indeed, set
$$K=\left\{\varphi\in L^{\infty}|~~\exists
t\in[0,1]~\mbox{such that}~\varphi=\widetilde{T}(\varphi,t)\right\}.$$
It is clear that $\widetilde{T}(\varphi,0)=0$ for all $\varphi\in L^{\infty}$. Assume that such $(\varphi_{0},t_{0})$ does not exist. Then $K=\left\{0\right\}$. By Leray-Schauder's Theorem
\ref{schaefer}, there exists $\varphi$ s.t. $\varphi=\widetilde{T}(\varphi,1)=T(\varphi)$ which belongs to $K$. So $\varphi=0$ which is impossible since $T(\varphi)\nequiv 0$.\\

Now replacing $\varphi_{0}$ by $t_{0}\psi_{0}$ in the vector equation, we get that $(\psi_{0},W_{0})$ is a solution to \eqref{CE} w.r.t. the new data $(g,\alpha\tau,\sigma)$, with $\alpha=t_{0}^{N}$.
\end{proof}

Proposition \ref{remark 1} is a direct consequence of the small-TT case (i.e. a smallness assumption on the transverse-traceless tensor) in \cite{HNT} and \cite{Maxwell}. More precisely, we can easily
check the following, which is developed further in \cite{G-NQA}.

\begin{Remark}\label{remark2}
$(\varphi,W)$ is a solution to the constraint equations w.r.t. an initial data $(g,\tau,\sigma)$ if and only if  $( C^{-1}\varphi,C^{-\frac{N+2}{2}}W)$ is a solution to the constraint equation w.r.t.
the data $(g,C^{\frac{N-2}{2}}\tau, C^{-\frac{N+2}{2}}\sigma)$ for all constant $C>0$.
\end{Remark}

\begin{Proposition}\label{farCMCofMawell} (see \cite{HNT} or \cite{Maxwell})
Let data be given on $M$ as specified in \eqref{condintial}. Assume that $\mathcal{Y}_{g}>0$, $(M,g)$ has no conformal Killing vector field and $\sigma\nequiv 0$. If $\|\sigma\|_{L^{\infty}}$ is small
enough (only depending on $g$ and $\tau$), then the system \eqref{CE} has a solution $(\varphi,W)$.
\end{Proposition}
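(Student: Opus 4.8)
The plan is to show that when $\|\sigma\|_{L^{\infty}}$ is small a small positive \emph{constant} is a global supersolution of the coupled system, and then to run a Schaefer fixed point argument as in the proof of Theorem \ref{theoremofDHG}; the difference is that the relevant fixed point set is now automatically bounded because the supersolution confines the solutions to an order interval.

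First I would use Remark \ref{assumptionR} to assume $R>0$. Since $(M,g)$ has no conformal Killing field the operator $L^{*}L$ is invertible, so (using $d\tau\in L^{p}$) there is a constant $C_{0}=C_{0}(g,\tau)$ with $\|LW_{\varphi}\|_{\infty}\le C_{0}\|\varphi\|_{\infty}^{6}$ for every $\varphi\in L^{\infty}$, where $W_{\varphi}\in W^{2,p}$ solves $-\frac{1}{2}L^{*}LW_{\varphi}=\frac{2}{3}\varphi^{6}d\tau$. I would then fix a constant $\delta=\delta(g,\tau)>0$ with
$$\sqrt{\min R}\ \delta^{4}\ \ge\ \|\sigma\|_{L^{\infty}}+C_{0}\delta^{6},$$
which is possible exactly when $\|\sigma\|_{L^{\infty}}<\max_{\delta>0}\left(\sqrt{\min R}\ \delta^{4}-C_{0}\delta^{6}\right)=\frac{4}{27}\,(\min R)^{3/2}C_{0}^{-2}=:\varepsilon_{0}(g,\tau)$. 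For such a $\delta$ and any $\varphi$ with $0\le\varphi\le\delta$, since $\Delta\delta=0$ and $\tau^{2}\delta^{5}\ge0$,
$$8\Delta\delta+R\delta+\frac{2}{3}\tau^{2}\delta^{5}\ \ge\ (\min R)\,\delta\ \ge\ \left(\|\sigma\|_{L^{\infty}}+C_{0}\delta^{6}\right)^{2}\delta^{-7}\ \ge\ |\sigma+LW_{\varphi}|^{2}\delta^{-7},$$
so the constant $\delta$ is a supersolution of the Lichnerowicz equation (\ref{Lichnerowicz}) associated with $w=|\sigma+LW_{\varphi}|$, for \emph{every} $\varphi\in[0,\delta]$.

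Next I would bring in the map $T$ of Section 2, which by Proposition \ref{compacityofT} is compact with $T(\varphi)>0$. By Lemma \ref{maxpriw} and the previous step, $T(\varphi)\le\delta$ whenever $0\le\varphi\le\delta$, so $T$ maps the order interval $\mathcal C=\{\varphi\in L^{\infty}:0\le\varphi\le\delta\}$ into itself. Let $\pi(\varphi)=\min(\max(\varphi,0),\delta)$ be the $1$-Lipschitz truncation onto $\mathcal C$ and put $\widehat T=T\circ\pi:L^{\infty}\to L^{\infty}$, which is again compact and takes all its values in $\mathcal C$. If $\varphi=t\widehat T(\varphi)$ for some $t\in[0,1]$ then $\|\varphi\|_{L^{\infty}}\le\|\widehat T(\varphi)\|_{L^{\infty}}\le\delta$, so the set $\{\varphi\in L^{\infty}:\exists\,t\in[0,1],\ \varphi=t\widehat T(\varphi)\}$ is bounded; Corollary \ref{corSchae} then yields $\varphi_{*}=\widehat T(\varphi_{*})$. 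Since $\widehat T$ is $\mathcal C$-valued and $T(\cdot)>0$ we get $0<\varphi_{*}\le\delta$, hence $\pi(\varphi_{*})=\varphi_{*}$ and $\varphi_{*}=T(\varphi_{*})$, which by the definition of $T$ means $(\varphi_{*},W_{\varphi_{*}})\in W^{2,p}_{+}\times W^{2,p}$ solves (\ref{CE}).

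The main obstacle is the bookkeeping in the second paragraph: exhibiting the constant supersolution, fixing the threshold $\varepsilon_{0}$, and checking that it -- together with the normalization $R>0$ coming from Remark \ref{assumptionR} -- depends only on $g$ and $\tau$ (the conformal change in Remark \ref{assumptionR} also rescales $\sigma$, but by the norm comparisons stated there this only alters $\varepsilon_{0}$ by a factor depending on $g,\tau$). Everything past that is the now-standard combination of Theorem \ref{maxwell1}, Lemma \ref{maxpriw} and the compactness of $T$ with a fixed point theorem; one could equally apply Schauder's theorem directly to $T|_{\mathcal C}$, $\mathcal C$ being closed, bounded and convex in $L^{\infty}$, and skip the truncation.
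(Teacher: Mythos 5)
Your proof is correct, but it is not a variant of anything proved in the paper: Proposition \ref{farCMCofMawell} is stated there as a quoted result from \cite{HNT} and \cite{Maxwell} (no proof is given), and in the text it is only used, together with the scaling Remark \ref{remark2}, to reinterpret Proposition \ref{remark 1}. What you have written is essentially the classical global-supersolution argument of Holst--Nagy--Tsogtgerel and Maxwell: after the normalization $R>0$ of Remark \ref{assumptionR}, the elliptic estimate $\|LW_{\varphi}\|_{\infty}\le C_{0}\|\varphi\|_{\infty}^{6}$ makes the small constant $\delta$ a supersolution of the Lichnerowicz equation for every $0\le\varphi\le\delta$, Lemma \ref{maxpriw} then shows that $T$ preserves the order interval $\{0\le\varphi\le\delta\}$, and Schauder (or your truncated Schaefer via Corollary \ref{corSchae}) produces the fixed point; the threshold $\frac{4}{27}(\min R)^{3/2}C_{0}^{-2}$ is computed correctly, and your implicit appeal to the conformal covariance of the coupled system when passing to $R>0$ is the same step the paper itself takes at the start of the proofs of Proposition \ref{remark 1} and Theorem \ref{far-CMC}, so it is not a gap. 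The comparison is this: your route is short, elementary and self-contained, but it is precisely the global-supersolution mechanism that the paper sets out to avoid; the paper instead reaches the small-TT regime in two other ways --- via Proposition \ref{remark 1} combined with Remark \ref{remark2} (a Leray--Schauder argument, Theorem \ref{schaefer}, which yields solvability for the rescaled data $(g,\tau,\alpha^{2}\sigma)$ for some $\alpha\in(0,1]$ rather than for a prescribed small $\sigma$), and, more substantially, via Theorem \ref{far-CMC}, where half-continuity and local supersolutions replace the global constant barrier and the smallness hypothesis is weakened from $\|\sigma\|_{L^{\infty}}$ to $\|\sigma\|_{L^{2}}$ (indeed to $\|d\tau\|_{L^{p}}^{2}\|\sigma\|_{L^{2}}$ small, Remark \ref{remarkfarcmc}), a strengthening that your constant-barrier argument cannot deliver because an $L^{2}$-small $\sigma$ need not be pointwise dominated by any fixed constant supersolution.
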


\noindent From Remark \ref{remark2}, with $C=\alpha^{-\frac{2}{N-2}}$, Proposition \ref{remark 1} is equivalent to the fact that \eqref{CE} w.r.t. the new data $(g,\tau,\alpha^{\frac{N+2}{N-2}}\sigma)$
admits a solution, and this holds for $\alpha$ small enough by Proposition \ref{farCMCofMawell}.\\

In particular, this approach has the advantage to give an unifying point of view of the limit equation method in \cite{DGH} and the far-from CMC results in  \cite{G-NQA}, \cite{HNT} and \cite{Maxwell}.\\

The main theorem in \cite{DGH} (or Theorem \ref{theoremofDHG}) says that the non-existence of non-trivial solution to the limit equation \eqref{limit} implies the existence of a solution to
\eqref{CE}. The opposite question naturally arises whether the existence of a solution to \eqref{CE} implies the non-existence of (non-trivial) solution to the limit equation. The following
proposition shows that this is false.

\begin{Proposition}\label{remark3}
There exists an initial data $(M,g,\tau,\sigma)$ such that both the corresponding \eqref{CE} and \eqref{limit} admit (non-trivial) solutions. 
\end{Proposition}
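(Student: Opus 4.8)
The plan is to construct an explicit example by exploiting the scaling freedom described in Remark \ref{remark2} together with the dichotomy in Theorem \ref{theoremofDHG}. The key observation is that Theorem \ref{theoremofDHG} tells us that for a given $(M,g,\tau,\sigma)$ with $\tau>0$, either \eqref{CE} has a solution or \eqref{limit} has a nontrivial solution for some $\alpha_0\in(0,1]$; but the solvability of the limit equation \eqref{limit} depends only on $g$ and on the ratio $\tfrac{d\tau}{\tau}$ (not on $\sigma$ at all), since \eqref{limit} does not involve $\sigma$. So if I can find data $(M,g,\tau)$ for which the limit equation admits a nontrivial solution $W_\infty$ with parameter $\alpha_0$, then I am free to choose $\sigma$ afterwards.

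First I would fix a compact $(M,g)$ with $\mathcal{Y}_g>0$ and no conformal Killing field, and a positive function $\tau$ chosen so that the limit equation \eqref{limit} \emph{does} have a nontrivial solution for some $\alpha_0\in(0,1]$; concretely, one picks any nonzero $W_\infty\in W^{2,p}$ and sets $\tau$ by solving the ODE-type relation forced by $-\tfrac12 L^*LW_\infty=\alpha_0\sqrt{\tfrac23}\,|LW_\infty|\tfrac{d\tau}{\tau}$ — i.e. one chooses $\tau$ so that $\tfrac{d\tau}{\tau}$ is proportional (with the correct position-dependent factor) to $-\tfrac12 L^*LW_\infty / |LW_\infty|$ on the open set where $LW_\infty\neq 0$, adjusting $\alpha_0$ to make the proportionality constant lie in $(0,1]$. (This is exactly the kind of construction appearing in \cite{DGH} and \cite{G-NQA} showing the limit equation is genuinely nontrivial.) With such $(M,g,\tau)$ in hand, \eqref{limit} has a nontrivial solution by construction.

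Next, because $\mathcal{Y}_g>0$ and $(M,g)$ has no conformal Killing field, Proposition \ref{remark 1} gives a constant $\alpha\in(0,1]$ such that \eqref{CE} with data $(g,\alpha\tau,\sigma)$ is solvable for any fixed $\sigma\nequiv 0$. Equivalently, by Remark \ref{remark2} with $C=1/\sqrt{\alpha}$, the system \eqref{CE} with data $(g,\tau,\alpha^2\sigma)$ is solvable. Now the crucial point: replacing $\sigma$ by $\alpha^2\sigma$ changes neither $g$ nor $\tau$, hence does not affect the limit equation \eqref{limit}, which still has the nontrivial solution $W_\infty$ constructed above. Therefore the data $(M,g,\tau,\alpha^2\sigma)$ simultaneously makes \eqref{CE} solvable and \eqref{limit} nontrivially solvable, which is exactly the claim.

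The main obstacle is the first step: producing $(M,g,\tau)$ for which \eqref{limit} genuinely has a nontrivial solution with $\alpha_0\in(0,1]$. One must verify that the prescribed $\tfrac{d\tau}{\tau}$ is the differential of (the logarithm of) an actual positive smooth function — a closedness/integrability condition on the $1$-form $-\tfrac12 L^*LW_\infty/|LW_\infty|$ — and that after rescaling to arrange $\alpha_0\le 1$ one still has a valid configuration; choosing $W_\infty$ so that $LW_\infty$ is nowhere vanishing (e.g. on a manifold admitting such a vector field, or by a careful local-to-global patching) sidesteps the singularity issue. Alternatively one can appeal directly to the explicit examples of nontrivial limit-equation solutions already exhibited in the literature (\cite{DGH}, \cite{G-NQA}), taking their $(M,g,\tau)$ as input and then invoking Proposition \ref{remark 1} to supply the compatible $\sigma$. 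Once that input is available, the rest of the argument is immediate from Proposition \ref{remark 1} and Remark \ref{remark2}.
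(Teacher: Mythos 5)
Your proposal is correct and is essentially the paper's own argument: take the example from \cite{DGH} (Proposition 1.6) of data with $\mathcal{Y}_{g}>0$ whose limit equation (\ref{limit}) admits a nontrivial solution, invoke Proposition \ref{remark 1}, and use the fact that the limit equation is unaffected by the scaling --- the paper phrases this by keeping $\sigma$ and noting $\frac{d(\alpha\tau)}{\alpha\tau}=\frac{d\tau}{\tau}$ so that the final data is $(g,\alpha\tau,\sigma)$, while you equivalently pass to $(g,\tau,\alpha^{2}\sigma)$ via Remark \ref{remark2}. Your attempted hand-construction of $(M,g,\tau)$ in the first step is incomplete as you yourself note (integrability of the prescribed $1$-form, vanishing of $LW_{\infty}$, and the unchecked hypotheses of no conformal Killing field), but this is immaterial since your stated fallback --- citing the existing examples of nontrivial limit-equation solutions in \cite{DGH} --- is exactly what the paper does.
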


\begin{proof} In \cite{DGH}, Dahl-Gicquaud-Humbert prove that there exist $(M,g,\tau,\sigma)$ and $\alpha_{0}\in (0,1]$ s.t. $\mathcal{Y}_{g}>0$ and the corresponding limit equation
$$-\frac{1}{2}L^{*}LW=\alpha_{0}\sqrt{\frac{n-1}{n}}|LW|\frac{d\tau}{\tau}$$
admits a nontrivial solution $W\in W^{2,p}$ (see \cite[Proposition 1.6]{DGH}). Now note that for all $\alpha>0$,
$$\frac{d\alpha\tau}{\alpha\tau}=\frac{d\tau}{\tau}.$$
so the limit equation for the 4-tuple $(M,g,\alpha\tau,\sigma)$ also admits a non-trivial solution.
Taking $\alpha$ given by Proposition \ref{remark 1} provides $(M,g,\alpha\tau,\sigma)$ as desired.
\end{proof}

\section{Half-Continuous Maps and Applications}
In this section we introduce the theory of half-continuous functions and its applications to solve the constraint equations.
We summarize results on half-continuous maps in the next subsection. For the proofs we refer the reader to \cite{PB} or \cite{TK}.
\subsection{Half-Continuous Maps}
\begin{Definition}
Let $C$ be a subset of a Banach space $X$. A map $f~:~C \rightarrow X$ is said to be half-continuous if for each $x\in C$ with $x\ne f(x)$ there exists $p\in X^{*}$ and a neighborhood $W$ of $x$ in $C$ such that
$$\langle p,f(y)-y\rangle >0$$
for all $y\in W$ with $y\ne f(y)$.
\end{Definition} 
The following proposition gives a relation between  half-continuity and continuity.

\begin{Proposition}[see \cite{TK}, Proposition 3.2]
Let $X$ be a Banach space. Then every continuous map $f~:~C\rightarrow X$ is half-continuous.
\end{Proposition}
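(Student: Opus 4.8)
The plan is to combine the Hahn--Banach theorem with the continuity of $f$, and the whole argument is short. Fix $x\in C$ with $f(x)\neq x$; then $f(x)-x$ is a nonzero vector of $X$. First I would invoke the Hahn--Banach theorem to produce a functional $p\in X^{*}$ with $\|p\|_{X^{*}}=1$ and
$\langle p, f(x)-x\rangle=\|f(x)-x\|=:\delta>0$.

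Next I would use continuity. The map $y\mapsto f(y)-y$ is continuous on $C$, being the difference of the continuous map $f$ and the identity map $C\hookrightarrow X$; composing with the fixed bounded functional $p$, the scalar function $y\mapsto\langle p, f(y)-y\rangle$ is continuous on $C$ as well. Since it equals $\delta>0$ at $x$, there is a neighborhood $W$ of $x$ in $C$ on which it remains larger than $\delta/2$. In particular $\langle p, f(y)-y\rangle>0$ for every $y\in W$, and \emph{a fortiori} for every $y\in W$ with $y\neq f(y)$. This is precisely the defining condition of half-continuity at $x$, and since $x$ was an arbitrary point of $C$ with $f(x)\neq x$, the proof is complete.

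There is essentially no obstacle here: the only ingredients are the Hahn--Banach separation of a nonzero vector from the origin and the elementary fact that a continuous real-valued function which is positive at a point stays positive on a neighborhood. I would present this proposition chiefly as a consistency check, confirming that half-continuity is a genuine weakening of continuity, rather than as a statement needing a delicate argument.
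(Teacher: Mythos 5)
Your argument is correct: Hahn--Banach supplies a norming functional $p$ for the nonzero vector $f(x)-x$, and continuity of $y\mapsto\langle p,f(y)-y\rangle$ yields the required neighborhood, which is exactly the standard proof of this fact (the paper itself does not reproduce a proof but refers to \cite{TK}, where the same argument is used). Nothing is missing; this is indeed just the consistency check that half-continuity weakens continuity.
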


\begin{Remark}[see \cite{TK}]
There are some half-continuous maps which are not continuous. For example, let $f~:~\mathbb{R}\rightarrow \mathbb{R}$ be defined by
$$
f(x)=\left\{\begin{array}{ll}
3&\textrm{if $x\in[0,1)$,}\\
2&\textrm{otherwise.}
\end{array} \right.
$$
Then $f$ is half-continuous but not continuous.
\end{Remark}

\begin{Theorem}[see \cite{TK}, Theorem 3.9 or \cite{PB}, Theorem 3.1]\label{fixed point theorem}
Let $C$ be a nonempty compact convex subset of a Banach space $X$. If $f~:~C\rightarrow C$ is half-continuous, then $f$ has a fixed point. 
\end{Theorem}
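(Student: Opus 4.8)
The plan is to argue by contradiction, reducing the statement to Fan's infinite-dimensional version of the Knaster--Kuratowski--Mazurkiewicz (KKM) lemma (which itself rests on Brouwer's fixed point theorem). Suppose $f$ has no fixed point, so that $y \ne f(y)$ for every $y \in C$. Then the defining property of half-continuity applies at \emph{every} point: for each $x \in C$ there are $p_x \in X^{*}$ and an open neighborhood $W_x$ of $x$ in $C$ such that $\langle p_x, f(y) - y \rangle > 0$ for all $y \in W_x$. Since $C$ is compact, finitely many $W_{x_1}, \dots, W_{x_n}$ cover $C$; write $p_i = p_{x_i}$. Because $C$ is metrizable (it sits inside a Banach space), there is a continuous partition of unity $\rho_1, \dots, \rho_n : C \to [0,1]$ subordinate to this cover, with $\sum_i \rho_i \equiv 1$ and $\operatorname{supp}\rho_i \subset W_{x_i}$ (for instance $\rho_i(y) = d(y, C\setminus W_{x_i})/\sum_j d(y, C\setminus W_{x_j})$).

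Next I would introduce the function $\phi : C \times C \to \mathbb{R}$,
\[
\phi(y,z) = \sum_{i=1}^{n} \rho_i(y)\,\langle p_i,\, z - y \rangle .
\]
Two elementary observations drive the argument. First, $\phi(\cdot,z)$ is continuous for each $z$ and $\phi(y,\cdot)$ is affine for each $y$. Second, $\phi(y,y)=0$ for all $y$, whereas $\phi(y,f(y))>0$ for all $y$: indeed $\rho_i(y)>0$ forces $y\in W_{x_i}$, hence $\langle p_i,f(y)-y\rangle>0$, and at least one $\rho_i(y)$ is positive. Now set $F(z)=\{y\in C:\phi(y,z)\le 0\}$; each $F(z)$ is closed in the compact set $C$, hence compact. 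The family $\{F(z)\}_{z\in C}$ is a KKM family: if $y=\sum_j\lambda_j z_j$ belonged to no $F(z_j)$, then $\phi(y,z_j)>0$ for every $j$ with $\lambda_j>0$, so $\phi(y,y)=\sum_j\lambda_j\phi(y,z_j)>0$, contradicting $\phi(y,y)=0$. By the KKM lemma, $\bigcap_{z\in C}F(z)\ne\emptyset$; picking $y_0$ in this intersection and taking $z=f(y_0)$ gives $\phi(y_0,f(y_0))\le 0$, contradicting the second observation. Hence $f$ has a fixed point. (Alternatively, one can obtain the same $y_0$ by applying Ky Fan's minimax inequality to $\phi$, since $\sup_{y}\phi(y,y)=0$.)

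I expect the substantive content to be concentrated entirely in the KKM lemma (equivalently Brouwer's theorem); the remaining steps — existence of the subordinate partition of unity on the compact metric space $C$, and the verification that $\{F(z)\}$ is a KKM family — are routine once the auxiliary function $\phi$ is correctly set up. The only point requiring a little care is quoting the form of the KKM lemma valid in a general topological vector space, namely the one that needs at least one of the closed sets to be compact; this hypothesis is automatic here because $C$ is compact, so every $F(z)$ is compact. The conceptual crux is really the realization that, under the no-fixed-point assumption, half-continuity upgrades to a uniform local strict-positivity statement that can be glued by a partition of unity into the single bilinear-type function $\phi$ with $\phi(y,y)\equiv 0$ and $\phi(y,f(y))>0$.
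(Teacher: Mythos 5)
Your argument is correct. Note that the paper does not prove this theorem at all: it is quoted from the literature (the reader is referred to the cited works of Termwuttipong--Kaewtem and Bich), and your partition-of-unity plus KKM/Ky Fan argument is essentially the standard proof given there, so in substance you have reconstructed the source's proof rather than diverged from it. The only cosmetic point: with the explicit formula $\rho_i(y)=d(y,C\setminus W_{x_i})/\sum_j d(y,C\setminus W_{x_j})$ one does not literally get $\operatorname{supp}\rho_i\subset W_{x_i}$, but all you use is the implication $\rho_i(y)>0\Rightarrow y\in W_{x_i}$, which this formula does provide (after replacing each neighborhood by an open one), so the proof stands as written.
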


A direct consequence of Theorem \ref{fixed point theorem} is the following corollary, which is our main tool in the next subsection.

\begin{Corollary}\label{Fixed point theorem}
Let $C$ be a nonempty closed convex subset of a Banach space $X$. If $f:C\rightarrow ~C$ is half-continuous and $f(C)$ is precompact, then $f$ has a fixed point.
\end{Corollary}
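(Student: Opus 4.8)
The goal is to derive Corollary \ref{Fixed point theorem} from Theorem \ref{fixed point theorem}. The plan is to reduce the closed-convex-plus-precompact-image setting to the compact-convex setting for which the fixed point theorem is already available. The natural candidate for the smaller set is the closed convex hull of the image: set $K := \overline{\operatorname{conv}}\,f(C)$.

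First I would check that $K$ is a nonempty compact convex subset of $X$. It is clearly convex and closed by construction, and nonempty since $C \neq \emptyset$. Compactness follows from Mazur's theorem: the closed convex hull of a precompact (equivalently, relatively compact) subset of a Banach space is compact. Next I would verify that $K \subseteq C$: since $f(C) \subseteq C$ and $C$ is convex, $\operatorname{conv} f(C) \subseteq C$, and since $C$ is closed, $\overline{\operatorname{conv}} f(C) \subseteq C$, i.e. $K \subseteq C$. In particular $f$ is defined on $K$, and $f(K) \subseteq f(C) \subseteq \overline{\operatorname{conv}} f(C) = K$, so $f$ maps $K$ into $K$.

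It then remains to observe that the restriction $f|_K : K \to K$ inherits half-continuity from $f$. This is immediate from the definition: if $x \in K$ with $x \neq f(x)$, the functional $p \in X^*$ and the neighborhood $W$ of $x$ in $C$ supplied by half-continuity of $f$ restrict to $W \cap K$, a neighborhood of $x$ in $K$, on which the same inequality $\langle p, f(y) - y\rangle > 0$ holds for all $y \neq f(y)$. Applying Theorem \ref{fixed point theorem} to $f|_K : K \to K$ produces a fixed point $x_0 = f(x_0) \in K \subseteq C$, which is the desired conclusion.

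The only point requiring genuine care — and the one I would state explicitly rather than leave implicit — is the compactness of $K = \overline{\operatorname{conv}}\, f(C)$; everything else is bookkeeping. This rests on the Banach-space fact (Mazur) that the closed convex hull of a relatively compact set is compact, and "precompact" here should be read as "relatively compact", which is what is needed. No other obstacle is anticipated.
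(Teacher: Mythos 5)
Your proposal is correct and follows essentially the same argument as the paper: pass to $K=\overline{\mathrm{conv}}\,f(C)$, use the Banach-space fact that the closed convex hull of a precompact set is compact (the paper cites Rudin, Theorem 3.20, i.e. Mazur's theorem), note $f(K)\subset f(C)\subset K\subset C$, and apply Theorem \ref{fixed point theorem} to the restriction. Your explicit check that half-continuity passes to the restriction is a detail the paper leaves implicit, but it is the same proof.
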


\begin{proof}
Since $\overline{f(C)}$ is nonempty compact and $X$ is a Banach space, $\overline{\text{conv}}(f(C))$ is a nonempty compact convex subset of $X$
(see \cite{Rudin}, Theorem 3.20). Moreover, since $C$ is a closed convex subset of $X$ and $f(C)\subset C$, we have $\overline{\text{conv}}(f(C))\subset C,$
and hence $f\left(\overline{\text{conv}}(f(C))\right)\subset f(C)\subset \overline{\text{conv}}(f(C)).$ Now restricting $f$ to $\overline{\text{conv}}(f(C))$
and applying the previous theorem, we obtain the desired conclusion. 
\end{proof}

\subsection{Existence Results for Modified Constraint Equations}
Here we apply the concept of half-continuity to improve recent existence results for \eqref{CE} (see \cite{HNT} or \cite{Maxwell}).\\

The first non-CMC result for \eqref{CE} is  the near-CMC case, which is presented by many authors: if $\frac{\max{|d\tau|}}{\min{|\tau|}}$ is small enough,
then \eqref{CE} admits a solution (see \cite{BI}). Recently, Dahl-Gicquaud-Humbert \cite{DGH} improved this result. They show that \eqref{CE} has a solution,
provided $\|\frac{d\tau}{\tau}\|_{L^{n}}$ is small enough (see \cite[Corollary 1.3 and 14]{DGH}). However, for a smooth vanishing $\tau$, these assumptions never
hold.  Therefore, we treat a generalization of \eqref{CE}, with $d\tau$ replaced by a $1$-form $\xi\in L^{\infty}$ in the vector equation. Namely, let data be
given on $M$ as specified in \eqref{condintial} and choose also a $1-$form $\xi\in L^{\infty}$. We are interested in the following system.

\begin{subequations}\label{Modified CE}
\begin{align}
\frac{4(n-1)}{n-2}\Delta \varphi+R\varphi &= -\frac{n-1}{n}\tau^{2}\varphi^{N-1}+|\sigma+LW|^{2}\varphi^{-N-1}\\
-\frac{1}{2}L^{*}LW &=\frac{n-1}{n}\varphi^{N}\xi.
\end{align}
\end{subequations}

Note that all the methods described above apply in this context when $\tau>0$. A natural question is then whether this coupled nonlinear elliptic system has
a solution under a similar condition, i.e. $\|\frac{\xi}{\tau}\|_{L^{n}}$ is small enough. As $\tau$ vanishes, it becomes more complicated to apply the method
of global supersolution introduced by Holst-Nagy-Tsogtgerel \cite{HNT} because the construction of a supersolution to the Lichnerowicz equation fails with their
method near $Z(\tau)$, which is the zero set of $\tau$. Before going further, we establish a useful estimate for \eqref{CE}.\\

Let $\mathcal{I}$ be the family of all solutions of \eqref{CE} for fixed given data $(g,\tau,\sigma)$. Provided $\tau>0$, it was obtained in \cite{DGH} by
induction that there exists a positive constant $C=C(M,g,\tau,\sigma)$ s.t.
$$\|\varphi\|_{\infty}\leq C\max\{\|LW\|^{\frac{1}{N}}_{L^{2}},1\},~~\forall (\varphi,W)\in\mathcal{I}.$$ 
For a vanishing $\tau$, there is no reason to get the estimate above. However, by a slight change in the proof, we have the following proposition.
 
\begin{Proposition}\label{Estimate1 L2}
Let data be given on $M$ as specified in \eqref{condintial} and assume that \eqref{condinitial2} holds. Given $l>0$, there exists a positive constant
$C=C(M,g,\sigma,\tau,l)$ s.t. for any $(\varphi,W)\in \mathcal{I}$ satisfying $\|LW\|_{L^{2}}\leq l$ we have
$$\|\varphi\|_{\infty}\leq C.$$
Moreover, if $\mathcal{Y}_{g}>0$, the assumption that $Z(\tau)$ has zero Lebesgue measure can be omitted.
\end{Proposition}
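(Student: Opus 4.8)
My plan: fix a good conformal gauge, extract one integrability bound on $\varphi$ from a single energy identity, bootstrap between the two equations of (\ref{CE}) until $\varphi\in L^{\infty}$, and conclude with a local boundedness estimate for a subsolution. For the normalisation I would use Remark \ref{assumptionR} (that is, Lemma \ref{maxwell2} and Theorem \ref{maxwell1}): when $\mathcal{Y}_{g}>0$ one may take $R\geq c_{R}>0$, by conformally changing through the positive first eigenfunction $\phi_{1}$ of $8\Delta+R$ (with eigenvalue $\lambda_{1}>0$), since $\phi_{1}$ is continuous and bounded and $R_{\widehat g}=\lambda_{1}\phi_{1}^{-4}\geq\lambda_{1}(\max\phi_{1})^{-4}>0$ --- this reduction uses nothing about $\tau$, which is the reason $Z(\tau)$ being null will not be needed in that case; when $\mathcal{Y}_{g}=0$ one takes $R\equiv 0$, and when $\mathcal{Y}_{g}<0$ one takes $R=-\tfrac{2}{3}\tau^{2}$ (here using that $Z(\tau)$ is null, via Case $3$ of Theorem \ref{maxwell1}). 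Since (\ref{CE}) is conformally covariant and $LW$ changes only by a fixed positive power of the bounded, bounded-away-from-$0$ conformal factor, the hypothesis $\|LW\|_{L^{2}}\leq l$ passes to a bound of the same type, so it suffices to argue in the normalised gauge. Write $w=|\sigma+LW|$, so $\|w\|_{L^{2}}\leq l+\|\sigma\|_{L^{2}}=:l_{1}$, and recall from Remark \ref{assumptionR} that $R\varphi^{k+1}+\tfrac{2}{3}\tau^{2}\varphi^{k+5}\geq-\zeta$ uniformly in $k\geq 0$.

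Next I would establish a first bound $\|\varphi\|_{L^{24}}\leq C(l,\tau)$. Testing (\ref{Lichnerowicz}) (with this $w$) against $\varphi^{7}$ and using $\int\varphi^{7}\Delta\varphi=\tfrac{7}{16}\int|\nabla\varphi^{4}|^{2}$ gives $\tfrac{7}{2}\int|\nabla\varphi^{4}|^{2}+\int\!\big(R\varphi^{8}+\tfrac{2}{3}\tau^{2}\varphi^{12}\big)\leq\|w\|_{L^{2}}^{2}\leq l_{1}^{2}$; when $R=-\tfrac23\tau^{2}$ one first absorbs the resulting $\int\tau^{2}\varphi^{8}$ into $\int\tau^{2}\varphi^{12}$ via Young's inequality, picking up a $\tau$-dependent constant. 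If $\mathcal{Y}_{g}>0$ this bounds $\|\varphi\|_{L^{8}}$ at once; otherwise it only bounds $\|\nabla\varphi^{4}\|_{L^{2}}$ and $\int\tau^{2}\varphi^{12}$, and I would upgrade this to a bound on $\|\varphi\|_{L^{8}}$ by a compactness argument: were the supremum over the relevant family infinite, rescaling a contradicting sequence to $u_{n}=\varphi_{n}^{4}/\|\varphi_{n}^{4}\|_{L^{2}}$ would give $\|\nabla u_{n}\|_{L^{2}}\to 0$ and $\int\tau^{2}u_{n}^{3}\to 0$, so an $L^{2}$-limit of a subsequence would be a positive constant and Fatou's lemma would force $\int\tau^{2}=0$, contradicting $\tau\nequiv 0$ (which holds since $Z(\tau)$ is null). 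Either way, the Sobolev inequality then gives $\|\varphi\|_{L^{24}}^{4}=\|\varphi^{4}\|_{L^{6}}\leq C\big(\|\nabla\varphi^{4}\|_{L^{2}}+\|\varphi^{4}\|_{L^{2}}\big)\leq C$. This compactness step is the \emph{slight change} relative to \cite{DGH}, where $\tau\geq\tau_{0}>0$ makes the bound trivial; its only input is $\tau\nequiv 0$, which is exactly why the hypothesis that $Z(\tau)$ is null may be dropped when $\mathcal{Y}_{g}>0$.

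Then I would run the bootstrap. Suppose $\|\varphi\|_{L^{q_{j}}}\leq A_{j}$ with $q_{j}\geq 24$. From the vector equation, the absence of conformal Killing fields, and elliptic regularity, $\varphi^{6}d\tau\in L^{r_{j}}$ with $\tfrac1{r_{j}}=\tfrac6{q_{j}}+\tfrac1p$ (so $r_{j}>1$, as $q_{j}\geq 24$, $p>3$), hence $W\in W^{2,r_{j}}$, and therefore $LW$, and so $w$, lies in $L^{s_{j}}$ with $\tfrac1{s_{j}}=\tfrac1{r_{j}}-\tfrac13$ and $\|w\|_{L^{s_{j}}}\leq C(A_{j}^{6}+1)$; here $s_{j}>2$ (indeed $\tfrac1{s_{j}}<\tfrac1p-\tfrac1{12}<\tfrac14$), and once $r_{j}\geq 3$ one has $s_{j}=\infty$ and stops. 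Testing (\ref{Lichnerowicz}) against $\varphi^{m_{j}}$ with $m_{j}-7=q_{j}\big(1-\tfrac2{s_{j}}\big)$ makes the right-hand side $\int w^{2}\varphi^{m_{j}-7}\leq\|w\|_{L^{s_{j}}}^{2}\|\varphi\|_{L^{q_{j}}}^{m_{j}-7}$ by H\"older, while the left-hand side controls $\|\nabla\varphi^{(m_{j}+1)/2}\|_{L^{2}}^{2}$ up to $\zeta|M|$ (and a harmless $j$-dependent factor); combining with Poincar\'e--Sobolev and $\|\varphi^{(m_{j}+1)/2}\|_{L^{1}}\leq C\|\varphi\|_{L^{q_{j}}}^{(m_{j}+1)/2}$ (valid since $m_{j}+1\leq 2q_{j}$) yields $\|\varphi\|_{L^{q_{j+1}}}\leq A_{j+1}$ with $q_{j+1}=3(m_{j}+1)=\big(5-\tfrac6p\big)q_{j}-12$. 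As $p>3$, the sequence $(q_{j})$ strictly increases to $+\infty$, so after finitely many steps $r_{j}>3$, whence $W\in W^{2,r_{j}}\hookrightarrow C^{0}$ and $\|w\|_{\infty}\leq C(l)$. Finally, on $\{\varphi\geq 1\}$ the equation reads $8\Delta\varphi=w^{2}\varphi^{-7}-R\varphi-\tfrac{2}{3}\tau^{2}\varphi^{5}\leq\|w\|_{\infty}^{2}$, since there the two subtracted terms are $\leq 0$ in each normalisation (for $R=-\tfrac23\tau^{2}$ because $\tfrac23\tau^{2}\varphi(1-\varphi^{4})\leq 0$ when $\varphi\geq 1$); hence $\max(\varphi,1)$ is a global weak subsolution of $8\Delta v\leq\|w\|_{\infty}^{2}$, and the De Giorgi--Nash--Moser local boundedness estimate (the coefficients of $\Delta_{g}$ being continuous and uniformly elliptic), together with the $L^{q_{j}}$ bound on $\varphi$, gives $\|\varphi\|_{\infty}\leq C(M,g,\sigma,\tau,l)$.

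The hard part is the $\mathcal{Y}_{g}\leq 0$ half of the first bound: $\int\tau^{2}\varphi^{12}\leq C$ cannot be turned into $\int\varphi^{8}\leq C$ by H\"older, because $|\tau|^{-\gamma}$ need not be integrable under (\ref{condinitial2}) --- this is precisely where \cite{DGH} used $\tau\geq\tau_{0}>0$ --- so the compactness argument, whose only ingredient is $\tau\nequiv 0$, is what makes the estimate go through and pins down why $Z(\tau)$ must be kept null except when $\mathcal{Y}_{g}>0$. The remaining ingredients (interior elliptic estimates for $L^{*}L$, Poincar\'e--Sobolev, De Giorgi--Nash--Moser) are standard; the one point requiring genuine care is the bookkeeping of exponents so that each energy identity is non-circular, which is what forces $m_{j}-7=q_{j}(1-\tfrac2{s_{j}})$, together with the observation that only finitely many iteration steps are needed, so the $j$-dependent constants are harmless.
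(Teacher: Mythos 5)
Your proposal is correct, but in two places it takes a genuinely different route from the paper. For the key step (a uniform bound on $\|\varphi^{4}\|_{L^{2}}$ when $\mathcal{Y}_{g}\leq 0$), the paper argues quantitatively: assuming $\|\varphi^{4}\|_{L^{6}}\leq 2C_{1}\|\varphi^{4}\|_{L^{2}}$, it shows the super-level set $B_{k_{0}}=\{\varphi^{4}\geq \frac{1}{k_{0}}\|\varphi^{4}\|_{L^{2}}\}$ has volume bounded below, then uses that $Z(\tau)$ is closed (by continuity of $\tau$) and null to pick a neighborhood $B$ of $Z(\tau)$ of small volume with $\tau_{0}=\inf_{M\setminus B}|\tau|>0$, and extracts $\frac{2}{3}\frac{t_{0}\tau_{0}^{2}}{k_{0}^{3}}\|\varphi^{4}\|_{L^{2}}^{3}\lesssim \int|\sigma|^{2}+l^{2}+\|R\|_{\infty}\|\varphi^{4}\|_{L^{2}}^{2}$, giving an explicit constant. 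Your rescaling/compactness argument ($u_{n}=\varphi_{n}^{4}/\|\varphi_{n}^{4}\|_{L^{2}}$, $\nabla u_{n}\to 0$, $\int\tau^{2}u_{n}^{3}\to 0$, limit a positive constant, Fatou) reaches the same bound using only $\tau\nequiv 0$; it is softer and non-effective, but it correctly isolates that for this a priori estimate the null-measure hypothesis is only really needed through the normalisation $R=-\frac{2}{3}\tau^{2}$ when $\mathcal{Y}_{g}<0$ (and both arguments agree on why the hypothesis is dispensable when $\mathcal{Y}_{g}>0$). For the endgame, the paper simplifies by assuming $\tau\in C^{1}$, so that $\|LW\|_{\infty}\leq C\|d\tau\|_{\infty}\|\varphi^{4}\|_{L^{6}}^{3/2}$ in one Sobolev step, and then concludes by comparing $\varphi$ with the unique solution $\psi_{0}$ of (\ref{Lichnerowicz}) for the constant datum $w=C_{7}$ via Lemma \ref{maxpriw}; you instead keep the stated regularity $\tau\in W^{1,p}$, run a finite elliptic bootstrap between the two equations, and finish with a De Giorgi--Nash--Moser bound for the subsolution $\max(\varphi,1)$. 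Your version is longer and invokes heavier (though standard) machinery, but avoids the $C^{1}$ simplification and does not rely on Theorem \ref{maxwell1} in the last step; the paper's comparison argument is shorter and yields the bound $\max\psi_{0}$ directly. Only cosmetic imprecisions remain on your side (e.g.\ the stopping criterion should read $r_{j}>3$ rather than $r_{j}\geq 3$ for the embedding $W^{1,r_{j}}\hookrightarrow L^{\infty}$), none of which affects the argument.
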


\begin{proof} For simplicity, we assume that $\tau\in C^{1}(M)$. We begin with the observation that, to prove the proposition, it suffices to show that
there exists a constant $c=c(n,g,\tau,\sigma,l)>0$ s.t. for any $(\varphi,W)\in\mathcal{I}$ satisfying $\|LW\|_{L^{2}}\leq l$ we have $\|LW\|_{\infty}<c$.
In fact, assume that this is true. Then, from Lemma \ref{maxpriw}, we have that $\varphi\leq \varphi_{c}$, where $\varphi_{c}$ is a unique
positive solution to the Lichnerowicz equation \eqref{Lichnerowicz} associated to $w=c+\|\sigma\|_{\infty}$, and hence taking $C=\max\varphi_{c}$, the proposition follows.\\

Now we will prove the boundedness of $\|LW\|_{\infty}$ as mentioned above. Set $q_{i}=2\left(\frac{N+2}{4}\right)^{i}$ for all $i\in\mathbb{N}$, we first show inductively that
if $|LW|$ is uniformly bounded in $L^{q_{i}}-$norm by $r_{i}>0$, then so $|LW|$ is in $L^{q_{i+1}}$ by $r_{i+1}=r_{i+1}(n,g,\tau,\sigma,q_{i},r_{i})>0$. In fact, multiplying
the Lichnerowicz equation by $\varphi^{\frac{(N+2)q_{i}}{2}-1}$ and integrating over $M$, we have

\begin{equation}\label{multip}
\begin{aligned}
\frac{4(n-1)}{n-2}\int_{M}{\varphi^{\frac{(N+2)q_{i}}{2}-1}\Delta\varphi dv}
+\int_{M}{R\varphi^{\frac{(N+2)q_{i}}{2}} dv}+&\frac{n-1}{n}\int_{M}{\tau^{2}\varphi^{N+\frac{(N+2)q_{i}}{2}-2}dv}\\
&=\int_{M}{|\sigma+LW|^{2}\varphi^{\frac{(N+2)(q_{i}-2)}{2}}dv}\\
&\leq\|\sigma+LW\|_{L^{q_{i}}}^{2}\left(\int_{M}{\varphi^{\frac{(N+2)q_{i}}{2}}dv}\right)^{\frac{q_{i}-2}{q_{i}}}\\
&\qquad\mbox{(by $q_{i}\geq 2$ and H\"{o}lder inequality)}\\
&\leq 2\left(\|\sigma\|_{L^{q_{i}}}^{2}+\|LW\|_{L^{q_{i}}}^{2}\right)\left(\int_{M}{\varphi^{\frac{(N+2)q_{i}}{2}}dv}\right)^{\frac{q_{i}-2}{q_{i}}}.
\end{aligned}
\end{equation}
Since
\begin{equation}\label{Greenform}
\int_{M}{\varphi^{\frac{(N+2)q_{i}}{2}-1}\Delta \varphi dv}=\frac{8\left((N+2)q_{i}-2\right)}{(N+2)^{2}q_{i}^{2}}\int_{M}{|\nabla \varphi^{\frac{(N+2)q_{i}}{4}}|^{2}dv}\geq0,
\end{equation}
and since the term $\int_{M}{R\varphi^{\frac{(N+2)q_{i}}{2}} dv}+\frac{n-1}{n}\int_{M}{\tau^{2}\varphi^{N+\frac{(N+2)q_{i}}{2}-2}dv}$ is uniformly bounded from below as observed in Remark \ref{assumptionR}. we obtain from (\ref{multip}) that
$$\int_{M}{|\nabla\varphi^{\frac{(N+2)q_{i}}{4}}|^{2}dv}\leq c_{1}(g,\tau)+c_{2}(g,\tau,\sigma,r_{i})\left(\int_{M}{\varphi^{\frac{(N+2)q_{i}}{2}}dv}\right)^{\frac{q_{i}-2}{q_{i}}},$$
and then 
\begin{equation}\label{L2.0}
\begin{aligned}
\|\varphi^{\frac{(N+2)q_{i}}{4}}\|_{L^{N}}^{2}&\leq c_{3}(M,g)\left(\|\nabla\varphi^{\frac{(N+2)q_{i}}{4}}\|_{L^{2}}^{2}+\|\varphi^{\frac{(N+2)q_{i}}{4}}\|_{L^{2}}^{2}\right)~~\mbox{(by the Sobolev inequality)}\\
&\leq c_{3}\left(c_{1}+c_{2}\|\varphi^{\frac{(N+2)q_{i}}{4}}\|_{L^{2}}^{\frac{2(q_{i}-2)}{q_{i}}}+\|\varphi^{\frac{(N+2)q_{i}}{4}}\|_{L^{2}}^{2}\right).
\end{aligned}
\end{equation}
To show that $\|\varphi^{\frac{(N+2)q_{i}}{4}}\|_{L^{N}}$ is bounded, by \eqref{L2.0} it suffices to assume that
\begin{equation}\label{L2}
\|\varphi^{\frac{(N+2)q_{i}}{4}}\|_{L^{N}}\leq 3c_{3}\|\varphi^{\frac{(N+2)q_{i}}{4}}\|_{L^{2}}
\end{equation}
 and to prove that $\|\varphi^{\frac{(N+2)q_{i}}{4}}\|_{L^{2}}$ is bounded. We study  two cases.
 \begin{itemize}
 \item{Case 1. \textit{$\mathcal{Y}_{g}>0$:}} By Remark \ref{assumptionR}, we can assume that $R>0$ and then it is clear from \eqref{multip}-\eqref{Greenform} that
 \begin{align*}
 \int_{M}{\varphi^{\frac{(N+2)q_{i}}{2}}dv}&\leq \frac{2}{\min R}\left(\|\sigma\|_{L^{q_{i}}}^{2}+r_{i}^{2}\right)\left(\int_{M}{\varphi^{\frac{(N+2)q_{i}}{2}}dv}\right)^{\frac{q_{i}-2}{q_{i}}},\\
 \end{align*}
 which implies the boundedness of $\|\varphi^{\frac{(N+2)q_{i}}{4}}\|_{L^{2}}$.\\
 
\item{Case 2. \textit{$\mathcal{Y}_{g}\leq 0$:}}
 Given $k>0$, we define
$$B_{k}=\left\{m\in M:~~\varphi^{\frac{(N+2)q_{i}}{4}}(m)\geq \frac{1}{k}\|\varphi^{\frac{(N+2)q_{i}}{4}}\|_{L^{2}}\right\}.$$
Let $\chi_{B_{k}}$ denote the characteristic function of $B_{k}$. We have
\begin{align*}
1=\int_{M}{\frac{\varphi^{\frac{(N+2)q_{i}}{2}}}{\|\varphi^{\frac{(N+2)q_{i}}{4}}\|_{L^{2}}^{2}}dv}&\leq \int_{M}{\frac{\chi_{B_{k}}\varphi^{\frac{(N+2)q_{i}}{2}}}{\|\varphi^{\frac{(N+2)q_{i}}{4}}\|_{L^{2}}^{2}}dv}+\int_{M\setminus B_{k}}{\frac{\varphi^{\frac{(N+2)q_{i}}{2}}}{\|\varphi^{\frac{(N+2)q_{i}}{4}}\|_{L^{2}}^{2}}dv}\\
&\leq \frac{\|\varphi^{\frac{(N+2)q_{i}}{4}}\|_{L^{N}}^{2}}{\|\varphi^{\frac{(N+2)q_{i}}{4}}\|_{L^{2}}^{2}} Vol(B_{k})^{\frac{N-2}{N}}+\frac{1}{k^{2}}Vol(M\setminus B_{k})\\
&\qquad(\mbox{by H\"{o}lder inequality and the definition of $B_{k}$})\\
&\leq 9c_{3}^{2}Vol(B_{k})^{\frac{N-2}{N}}+\frac{1}{k^{2}}Vol(M)~~~~(\mbox{by \eqref{L2}}).
\end{align*}
Taking $k_{0}\geq 2Vol(M)+1$, it follows that
 $Vol(B_{k_{0}})\geq 2c_{4}(n,c_{3})>0$. On the other hand, since $Z(\tau)$ is a closed, zero-measurable subset of $M$, there exists a neighborhood $B_{i}$ of $Z(\tau)$, depending on
 $c_{4}$ s.t. $Vol(B_{i})\leq c_{4}$. Next we get by \eqref{multip}-\eqref{Greenform} that
\begin{equation}\label{R<0}
\int_{M}{R\varphi^{\frac{(N+2)q_{i}}{2}}dv}+\frac{n-1}{n}\int_{B_{k_{0}}\setminus B_{i}}{\tau^{2}\varphi^{N+\frac{(N+2)q_{i}}{2}-2}dv}\leq 2\left(\|\sigma\|_{L^{q_{i}}}^{2}+r_{i}^{2}\right)\left(\int_{M}{\varphi^{\frac{(N+2)q_{i}}{2}}dv}\right)^{\frac{q_{i}-2}{q_{i}}}.
\end{equation}
Set $\tau_{i}=\inf_{M\setminus B_{i}}|\tau|>0$. Since $\varphi^{\frac{(N+2)q_{i}}{4}}\geq \frac{1}{k_{0}}\|\varphi^{\frac{(N+2)q_{i}}{4}}\|_{L^{2}}$ on $B_{k_{0}}$ and since $Vol\left(B_{k_{0}}\setminus B_{i}\right)\geq c_{4}$, it follows from (\ref{R<0}) that
$$-\|R\|_{L^{\infty}}\|\varphi^{\frac{(N+2)q_{i}}{4}}\|_{L^{2}}^{2}+\frac{n-1}{n}c_{4}\tau_{i}^{2}\left(\frac{\|\varphi^{\frac{(N+2)q_{i}}{4}}\|_{L^{2}}}{k_{0}}\right)^{2\left(\frac{(q_{i}+2)(N+2)-8}{q_{i}(N+2)}\right)}\leq 2\left(\|\sigma\|_{L^{q_{i}}}^{2}+r_{i}^{2}\right)\left(\int_{M}{\varphi^{\frac{(N+2)q_{i}}{2}}dv}\right)^{\frac{q_{i}-2}{q_{i}}}.$$
Since $\frac{q_{i}-2}{q_{i}}<1<\frac{(q_{i}+2)(N+2)-8}{q_{i}(N+2)}$ for all $i\in\mathbb{N}$, we get from the previous inequality that $\|\varphi^\frac{(N+2)q_{i}}{4}\|_{L^{2}}$ is bounded by $c_{5}=c_{5}(n,g,\tau,\sigma,r_{i},c_{4},k_{0},q_{i})$.
\end{itemize}
In both cases, we have showed that 
$\|\varphi^{\frac{(N+2)q_{i}}{4}}\|_{L^{2}}\leq c_{5}$ and hence by \eqref{L2} that
\begin{equation}\label{key_for_LW}
\|\varphi^{\frac{(N+2)q_{i}}{4}}\|_{L^{N}}\leq c_{6}(c_{5},c_{3}).
\end{equation}
Now by the Sobolev embedding theorem, from vector equation, there exists $c_{7}=c_{7}(M,g)$ s.t.
\begin{equation}\label{induction}
\begin{aligned}
 \|LW\|_{L^{\frac{nq_{i}(N+2)}{\left(4n-(N+2)q_{i}\right)^{+}}}}&\leq c_{7}\|\varphi^{N}d\tau\|_{L^{\frac{(N+2)q_{i}}{4}}}\\
&\leq c_{7}\|d\tau\|_{\infty}\|\varphi^{N}\|_{L^{\frac{(N+2)q_{i}}{4}}}~~\left(\mbox{since $\tau\in C^{1}$}\right)\\
&\leq c_{8}(c_{7},\tau)\|\varphi^{\frac{(N+2)q_{i}}{4}}\|_{L^{N}}^{\frac{4N}{(N+2)q_{i}}}\\
&\leq c_{9}(c_{8},c_{6})\qquad\qquad\qquad\mbox{(by (\ref{key_for_LW}))}.
\end{aligned}
\end{equation}
Here $\left(4n-(N+2)q_{i}\right)^{+}=\max\{4n-(N+2)q_{i},0\}$ and $L^{\frac{nq_{i}(N+2)}{\left(4n-(N+2)q_{i}\right)^{+}}}$ is understood to be $L^{\infty}$ if $4n\leq (N+2)q_{i}$. Since $q_{i+1}<\frac{nq_{i}(N+2)}{\left(4n-(N+2)q_{i}\right)^{+}}$, it follows from (\ref{induction}) that $\|LW\|_{L^{q_{i+1}}}\leq r_{i+1}(n,g,\tau,\sigma,q_{i},r_{i})$ as claimed.\\

Finally, note that $\frac{N+2}{4}>1$, we can then take $i_{0}$ large enough depending only on $n$ s.t. $q_{i_{0}}\leq \left[\frac{4n}{N+2}\right]+1$. Thus, applying inductively (\ref{induction}) for $i\leq i_{0}$, provided $\|LW\|_{L^{q_{0}}}=\|LW\|_{L^{2}}\leq l$, we obtain that $|LW|$ is uniformly bounded in $L^{\infty}$ by $c=c(n,g,\tau,\sigma,l)>0$, which completes our proof.
\end{proof}

We are now ready to prove the second main result of this paper.

\begin{Theorem}[Near-CMC]\label{new-nearCMC}
Assume that $\tau\in L^{\infty}$, $\xi\in L^{\infty}$, $g\in W^{2,p}$ $(p>n)$, $(M,g)$ has no conformal Killing vector field, and $\sigma\nequiv 0$ if $\mathcal{Y}_{g}\geq 0$.
Assume further that $\|\frac{\xi}{\tau}\|_{L^{n}}$ is small enough, then the system equations \eqref{Modified CE} admits a solution $(\varphi,W)$.
\end{Theorem}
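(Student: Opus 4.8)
The plan is to recast $(\varphi,W)$ as a fixed point of the analogue $\widetilde T$ of the map $T$ of Section~2, with $\xi$ in place of $d\tau$, and then to apply the half-continuity fixed point theorem (Corollary~\ref{Fixed point theorem}) rather than Schaefer's theorem. Half-continuity is the natural tool here because, $\tau$ being allowed to vanish, the limit-equation mechanism behind Theorem~\ref{theoremofDHG} degenerates (the candidate limit $(\sqrt{3/2}\,|LW_\infty|/\tau)^{1/6}$ blows up on $Z(\tau)$) and a \emph{global} supersolution of the coupled system need not exist. Concretely, define $\widetilde T:L^\infty_+\to L^\infty_+$ by: given $\varphi$, let $W_\varphi\in W^{2,p}$ solve $-\tfrac12 L^*LW_\varphi=\tfrac23\varphi^6\xi$, and set $\widetilde T(\varphi)=\psi$, the unique positive solution of (\ref{Lichnerowicz}) with $w=|\sigma+LW_\varphi|$. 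Here $w\not\equiv0$: if it vanished then $LW_\varphi=-\sigma$, so $L^*LW_\varphi=-L^*\sigma=0$ (as $\sigma$ is divergence-free), hence $W_\varphi$ is a conformal Killing field and therefore $W_\varphi\equiv0$, forcing $\sigma\equiv0$, which is excluded when $\mathcal Y_g\ge0$; the remaining case is covered by Theorem~\ref{maxwell1}. As in Proposition~\ref{compacityofT}, $\widetilde T$ is compact and $\widetilde T(\varphi)>0$.

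The first real step is the a priori estimate exploiting the smallness of $\|\xi/\tau\|_{L^3}$. For any solution of (\ref{Modified CE}), multiply the Lichnerowicz equation by $\varphi^7$ and integrate: using $8\int\varphi^7\Delta\varphi\,dv=\tfrac72\int|\nabla\varphi^4|^2\,dv\ge0$ and the fact (Remark~\ref{assumptionR}) that $R\varphi^8+\tfrac23\tau^2\varphi^{12}$ is bounded below by a constant depending only on $(g,\tau)$, one obtains
\[
\|\tau\varphi^6\|_{L^2}^2\ \le\ C\big(\|\sigma\|_{L^2}^2+\|LW\|_{L^2}^2+1\big).
\]
On the other hand, since $(M,g)$ has no conformal Killing field, $L^*L:W^{2,6/5}\to L^{6/5}$ is an isomorphism, so by elliptic regularity, the Sobolev embedding $W^{2,6/5}\hookrightarrow W^{1,2}$, and Hölder's inequality with exponents $2$ and $3$ (note $\tfrac56=\tfrac12+\tfrac13$),
\[
\|LW\|_{L^2}\ \le\ C\,\|\varphi^6\xi\|_{L^{6/5}}\ =\ C\big\|(\tau\varphi^6)\cdot\tfrac{\xi}{\tau}\big\|_{L^{6/5}}\ \le\ C\,\|\tau\varphi^6\|_{L^2}\,\big\|\tfrac{\xi}{\tau}\big\|_{L^3}.
\]
Combining these and absorbing the $\|LW\|_{L^2}$ term, as soon as $\|\xi/\tau\|_{L^3}$ is small enough one gets a uniform bound $\|LW\|_{L^2}\le l$ with $l=l(M,g,\sigma,\tau)$. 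Proposition~\ref{Estimate1 L2} then yields $\|\varphi\|_\infty\le C_1=C_1(M,g,\sigma,\tau)$, and bootstrapping the vector equation in $L^4$ (using $\xi\in L^\infty$, exactly as in the proof of Proposition~\ref{Estimate1 L2}) gives $\|LW_\varphi\|_\infty\le C_2$.

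With these estimates in hand I would run the half-continuity machinery of Section~4. Using the control on $\|LW\|_{L^2}$ near $Z(\tau)$ together with the global bound $\|\varphi\|_\infty\le C_1$, build the local supersolution $\varphi_+$ of Section~4 and let $C$ be the associated closed convex subset of $L^\infty$; then verify (i) $\widetilde T(C)\subset C$, using Lemma~\ref{maxpriw} and Proposition~\ref{method sub-super} for the sub/supersolution comparisons together with the a priori bounds above; (ii) $\widetilde T$ is half-continuous on $C$, the only place where ordinary continuity could fail being where $|\sigma+LW_\varphi|$ degenerates, which half-continuity accommodates; and (iii) $\widetilde T(C)$ is precompact since $\widetilde T$ is compact. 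Corollary~\ref{Fixed point theorem} then produces a fixed point $\varphi_0\in C$, and $(\varphi_0,W_{\varphi_0})$ solves (\ref{Modified CE}).

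The main obstacle is step (i): producing a genuinely $\widetilde T$-invariant closed convex set in the presence of the degeneracy of the Lichnerowicz equation on $Z(\tau)$. A naive order interval $\{0\le\varphi\le\varphi_+\}$ fails, since $\varphi_+$ would then have to be a global supersolution of the coupled problem, which is impossible near $Z(\tau)$: the vector-equation contribution to the Lichnerowicz right-hand side scales like $\|\varphi\|_\infty^6$ while $\tfrac23\tau^2\varphi^5\to0$ there. This is precisely why the \emph{global} supersolution of Holst--Nagy--Tsogtgerel and Maxwell must be replaced by a \emph{local} one, and why Schauder's theorem must be replaced by the half-continuity theorem. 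The smallness of $\|\xi/\tau\|_{L^3}$ is indispensable here: it is the only way to bound $\|LW\|_{L^2}$, and hence via Proposition~\ref{Estimate1 L2} the quantity $\|\varphi\|_\infty$, by a constant depending only on the data.
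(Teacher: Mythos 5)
Your a priori estimate is sound and mirrors the computation that actually appears in the paper (multiplying the Lichnerowicz equation by $\varphi^{7}$, bounding $R\varphi^{8}+\tfrac13\tau^{2}\varphi^{12}$ from below as in Remark \ref{assumptionR}, and estimating $\|LW\|_{L^{2}}\leq C\|\tau\varphi^{6}\|_{L^{2}}\|\xi/\tau\|_{L^{3}}$ through the vector equation). But the existence part of your argument is not actually carried out, and the gap you flag yourself in the last paragraph is precisely the one that matters: you propose to apply Corollary \ref{Fixed point theorem} to $\widetilde T$ restricted to a closed convex set built from a ``local supersolution $\varphi_{+}$ of Section 4,'' yet you never construct such a $\varphi_{+}$, never exhibit a $\widetilde T$-invariant set, and indeed observe that the obvious order interval cannot work near $Z(\tau)$. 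An a priori bound on exact solutions of (\ref{Modified CE}) does not by itself produce a solution: without a concrete half-continuous self-map (or an invariant set, or a degree/homotopy argument) there is nothing for the fixed point theorem to act on. Moreover, the role you assign to the smallness of $\|\xi/\tau\|_{L^{3}}$ — obtaining the bound $\|LW\|_{L^{2}}\leq l$ for solutions — is not where the paper needs it; smallness must enter inside the fixed-point construction itself, applied to $T(\varphi)$ versus $\varphi$ for non-fixed points, not only to solutions.

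The paper resolves this by a truncation trick rather than by an invariant set: it defines on the box $\mathscr{C}=\{0\leq\varphi\leq a\}$ the map $S(\varphi)=\min\{T(\varphi),a\}$ when $\|LW_{\varphi}\|_{L^{2}}\leq\sqrt{\kappa}$ (with $\kappa$ a constant built from $\sigma$ and the lower bound of Remark \ref{assumptionR}) and $S(\varphi)=0$ otherwise, as in (\ref{define:S}). This $S$ trivially maps $\mathscr{C}$ into itself and has precompact image, but it is discontinuous; the smallness of $\|\xi/\tau\|_{L^{3}}$ is used exactly to prove half-continuity at the discontinuity locus $\{\|LW_{\varphi}\|_{L^{2}}=\sqrt{\kappa}\}$: there one shows $\int_{M}\tau^{2}T(\varphi)^{12}dv\leq 9C_{7}\|\xi/\tau\|_{L^{3}}^{2}\int_{M}\tau^{2}\varphi^{12}dv$, so that $T(\varphi)(m)<\varphi(m)$ at some point $m$, and half-continuity follows with the functional $p(f)=-f(m)$. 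Finally, one must check that the fixed point $\varphi_{0}$ of $S$ is a genuine fixed point of $T$: this is where your type of estimate reappears, but applied to the set $K=\{T(\varphi):\ \|LW_{\varphi}\|_{L^{2}}\leq\sqrt{\kappa},\ \varphi\leq T(\varphi)\}$ (via the argument of Proposition \ref{Estimate1 L2} and Lemma \ref{maxpriw}) so that $a$ can be chosen large enough to make the cap $\min\{\cdot,a\}$ inactive. Your proposal contains the estimates but not this construction, so as it stands the existence step is missing.
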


\begin{proof}
Recall that $T$, defined in Section 2 (where $d\tau$ is replaced by $\xi$ in the vector equation), is a continuous compact map and $T(\varphi)>0$ for all
$\varphi\in L^{\infty}$. As explained in Remark \ref{assumptionR}, there exists a constant $\kappa_{1}=\kappa_{1}(g,\tau)$ s.t. 

\begin{equation}\label{kappa1}
RT(\varphi)^{N+2}+\frac{n-2}{n}\tau^{2}T(\varphi)^{2N}\geq \kappa_{1},~~~\forall \varphi\in L^{\infty}.
\end{equation}

Set $\kappa=\max\left\{|\kappa_{1}|, \int_{M}{|\sigma|^{2}dv}\right\}$. Let $S$ be given by
\begin{equation}\label{define:S}
S(\varphi)=\left\{\begin{array}{ll}
\min\{T(\varphi),a\}&\textrm{if $\|LW_{\varphi}\|_{L^{2}}\leq \sqrt{\kappa}$,}\\
0&\textrm{otherwise,}
\end{array} \right.
\end{equation}
and set $\mathscr{C}=\left\{\varphi\in C^{0}:0\leq\varphi\leq a\right\}$, where $a$ will be determined later. \\
\\
Since $T$ is a continuous compact map from $C^{0}$ to $C^{0}_{+}$ and since by definition $0\leq S(\varphi)\leq a$ for all $\varphi$,
$S$ maps $\mathscr{C}$ into itself and $S(\mathscr{C})$ is precompact. Assume for the moment that the half-continuity of $S$ is proven.
By Corollary \ref{Fixed point theorem}, $S$ has a fixed point $\varphi_{0}$. Note that $\varphi_{0}$ is not zero otherwise
$0=\varphi_{0}=S(\varphi_{0})$, hence $\|LW_{\varphi_{0}}\|_{L^{2}}=0\leq \sqrt{\kappa}$. We get from the definition of $S$ that
$S(\varphi_{0})=\min\left\{T(\varphi_{0}),a\right\}>0$ which is a contradiction with $S(\varphi_{0})=0$. Since $\varphi_{0}\nequiv 0$,
so is $S(\varphi_{0})$, the definition of $S$ implies that $\|LW_{\varphi_{0}}\|_{L^{2}}\leq \sqrt{\kappa}$ and

\begin{equation}\label{equal}
\varphi_{0}=\min\{T(\varphi_{0}),a\}\leq T(\varphi_{0}).
\end{equation}

Set 
$$
K=\left\{\varphi:~\|LW_{\varphi}\|_{L^{2}}\leq \sqrt{\kappa}~~\mbox{and}~~\varphi\leq T(\varphi)\right\}.
$$
Arguing as in the proof of Theorem \ref{Estimate1 L2}, we obtain that if any $\varphi\in K$ satisfies $\|LW_{\varphi}\|_{L^{q_{i}}}\leq r_{i}$, then

\begin{equation}\label{bounded_near_CMC}
\left\|T(\varphi)^{\frac{(N+2)q_{i}}{4}}\right\|_{L^{N}}\leq \tilde{r}_{i}(n,g,\tau,\sigma,r_{i},q_{i}),
\end{equation}
where $q_{i}=2\left(\frac{N+2}{4}\right)^{i}$ for all $i\in\mathbb{N}$. Therefore, by the Sobolev embedding theorem, we have from the vector equation that

\begin{equation}\label{induction2}
\begin{aligned}
 \|LW_{\varphi}\|_{L^{\frac{nq_{i}(N+2)}{\left(4n-(N+2)q_{i}\right)^{+}}}}&\leq r(M,g)\|\varphi^{N}\xi\|_{L^{\frac{(N+2)q_{i}}{4}}}\\
&\leq r\|\xi\|_{\infty}\|\varphi^{N}\|_{L^{\frac{(N+2)q_{i}}{4}}}~~\qquad\quad~~\left(\mbox{since $\xi\in L^{\infty}$}\right)\\
&\leq r\|\xi\|_{\infty}\|\varphi^{\frac{(N+2)q_{i}}{4}}\|_{L^{N}}^{\frac{4N}{(N+2)q_{i}}}\\
&\leq r\|\xi\|_{\infty}\left\|T(\varphi)^{\frac{(N+2)q_{i}}{4}}\right\|_{L^{N}}^{\frac{4N}{(N+2)q_{i}}}\qquad  (\mbox{by $\varphi\leq T(\varphi)$})\\
&\leq r_{i+1}(\xi,r,\tilde{r}_{i})\quad\qquad\qquad\qquad\mbox{(by (\ref{bounded_near_CMC}))},
\end{aligned}
\end{equation}
where $\left(4n-(N+2)q_{i}\right)^{+}=\max\{4n-(N+2)q_{i},0\}$ and $L^{\frac{nq_{i}(N+2)}{\left(4n-(N+2)q_{i}\right)^{+}}}$ is understood to be $L^{\infty}$ if
$4n\leq (N+2)q_{i}$. Similarly to the proof of Theorem \ref{Estimate1 L2}, we obtain inductively from \eqref{induction2} that for all $\varphi\in K$, there exists
a constant $C=C(n,g,\tau,\xi,\kappa)>0$ s.t.
$$\|LW_{\varphi}\|_{L^{\infty}}\leq C,$$
and hence by Lemma \ref{maxpriw} the set $T(K)$ is bounded by $\max\psi_{C}$, where $\psi_{C}$ is the unique positive solution to the Lichnerowicz equation
\eqref{Lichnerowicz} associated to $w=\|\sigma\|_{L^{\infty}}+C$. Thus, taking $a=\max\psi_{C}+1$, since $\varphi_{0}\in K$, we also obtain from \eqref{equal}
that $\varphi_{0}=T(\varphi_{0})$, which proves the theorem.\\

We now prove the half-continuity of $S$. Since $T$ is continuous, so is $S$ at $\varphi$ satisfying $\|LW_{\varphi}\|_{L^{2}}\ne\sqrt{\kappa}$. For $\varphi$ s.t.
$\|LW_{\varphi}\|_{L^{2}}=\sqrt{\kappa}$, multiplying the Lichnerowicz equation by $T(\varphi)^{N+1}$ and integrating over $M$, we have 
 \begin{align*}
 \frac{4(n-1)\left(N+1\right)}{(n-2)\left(\frac{N}{2}+1\right)^{2}}\int_{M}{|\nabla T(\varphi)^{\frac{N+2}{2}}|^{2} dv}+\int_{M}{RT(\varphi)^{N+2} dv}+\frac{n-1}{n}\int_{M}{\tau^{2}T(\varphi)^{2N}dv}&=\int_{M}{|\sigma+LW_{\varphi}|^{2}dv}\\
 &=\int_{M}{|\sigma|^{2}dv}+\int_{M}{|LW_{\varphi}|^{2}dv}\\
 &=\int_{M}{|\sigma|^{2}dv}+\kappa.
 \end{align*}
 Therefore,
 \begin{equation}\label{nearCMC1}
 \begin{aligned}
  \int_{M}{\tau^{2} T(\varphi)^{2N}dv}&\leq n\left(\int_{M}{|\sigma|^{2}dv}+\kappa-\left(\int_{M}{RT(\varphi)^{N+2} dv}+\frac{n-2}{n}\int_{M}{\tau^{2}T(\varphi)^{2N}dv}\right)\right) \\
  &\leq n\left(\int_{M}{|\sigma|^{2}dv}+\kappa+|\kappa_{1}|\right)~~~~~~\mbox{(by (\ref{kappa1}))}\\
  &\leq 3n\kappa.
 \end{aligned}
 \end{equation}
 On the other hand, we get from the vector equation that
 \begin{equation}\label{nearCMC2}
 \begin{aligned}
 \kappa=\int_{M}{|LW_{\varphi}|^{2}dv}&\leq C_{5}(g)\left\|W_{\varphi}\right\|^{2}_{W^{2,\frac{2n}{n+2}}}~~~~~~~~~~~(\mbox{by Sobolev imbedding})\\
 &\leq C_{6}(g,C_{5}) \|L^{*}LW_{\varphi}\|_{L^{\frac{2n}{n+2}}}^{2}\\
 &\leq C_{7}(C_{6})\left(\int_{M}{|\xi|^{\frac{2n}{n+2}}\varphi^{\frac{2nN}{n+2}}dv}\right)^{\frac{n+2}{n}}\\
 &\leq C_{7}\left\|\frac{\xi}{\tau}\right\|_{L^{n}}^{2}\int_{M}{\tau^{2}\varphi^{2N}dv}~~~(\mbox{by H\"{o}lder inequality})
 \end{aligned}
 \end{equation}
 
By \eqref{nearCMC1} and \eqref{nearCMC2}, we obtain that $$\int_{M}{\tau^{2} T(\varphi)^{2N}dv}\leq 3nC_{7}\left\|\frac{\xi}{\tau}\right\|_{L^{n}}^{2}\int_{M}{\tau^{2}\varphi^{2N}dv}.$$ 
If $\left\|\frac{\xi}{\tau}\right\|_{L^{n}}$ is small enough s.t. $3nC_{7}\left\|\frac{\xi}{\tau}\right\|_{L^{n}}^{2}<1$, it follows from the previous inequality that there exists
$m\in M$ s.t.  $0<T(\varphi)(m)<\varphi(m)$ (note that $T(\varphi)\in C^{0}_{+}$). Therefore, since $T$ is continuous, there exists $\delta=\delta(\varphi)>0$ small enough s.t.
$$0<T(\psi)(m)<\psi(m),~~\forall\psi\in B(\varphi,\delta)\cap \mathscr{C},$$
and hence from the fact that

\begin{equation*}
-\bigl(S(\psi)(m)-\psi(m)\bigr)=\left\{\begin{array}{ll}
-\bigl(\min\{T(\psi)(m),a\}-\psi(m)\bigr)&\textrm{if $\|LW_{\psi}\|_{L^{2}}\leq \sqrt{\kappa},$}\\
\psi(m)&\textrm{otherwise,}
\end{array} \right.
\end{equation*}
we conclude that 
\begin{equation}\label{obtain_halfcontinuous}
-\bigl(S(\psi)(m)-\psi(m)\bigr)>0
\end{equation}
for all $\psi\in B(\varphi,\delta)\cap \mathscr{C}$.\\

Now let $p~:~C^{0}\longrightarrow \mathbb{R}$ be defined by $p(f)=-f(m)$ for all $f\in C^{0}$. It is obvious that $p\in \left(C^{0}\right)^{*}$. Moreover, Inequality
\eqref{obtain_halfcontinuous} tells us that $p\left(S(\psi)-\psi\right)>0$ for all $\psi\in B(\varphi,\delta)\cap \mathscr{C}$, and then by definition $S$ is
half-continuous at $\varphi$ as claimed. The proof is completed.
\end{proof}

Our next existence result deals with the far-from-CMC case. It makes some progresses compared with the statements of Holst-Nagy-Tsogtgerel \cite{HNT} and Maxwell \cite{Maxwell}
(see Proposition \ref{farCMCofMawell}), where the smallness assumption on $\sigma$ is in $L^{\infty}$. Here our assumption is on the $L^2$-norm of $\sigma$.

\begin{Theorem}[\textbf{Far-from-CMC}]\label{far-CMC}
Let data be given on $M$ as specified in \eqref{condintial}. Assume that $\mathcal{Y}(g)>0$, $(M,g)$ has no conformal Killing vector field and $\sigma\nequiv 0$.
If $\|\sigma\|_{L^{2}}$ is small enough (depending only on $g$ and $\tau$), then the system \eqref{CE} has a solution $(\varphi,W)$.
\end{Theorem}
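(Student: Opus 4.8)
The plan is to rerun the half-continuity scheme of the proof of Theorem~\ref{new-nearCMC}, this time exploiting $\mathcal{Y}_{g}>0$ and the smallness of $\|\sigma\|_{L^{2}}$ in place of the smallness of $\|\xi/\tau\|_{L^{3}}$. By Remark~\ref{assumptionR} I may assume $R>0$. First I would fix a constant $\kappa=\kappa(g,\tau)>0$, to be chosen \emph{small} below, and, with $a>0$ also to be chosen, set $\mathscr{C}=\{\varphi\in C^{0}:0\le\varphi\le a\}$ (a closed convex subset of $C^{0}$) and
\begin{equation*}
S(\varphi)=\left\{\begin{array}{ll}\min\{T(\varphi),a\}&\textrm{if }\|LW_{\varphi}\|_{L^{2}}\le\sqrt{\kappa},\\ 0&\textrm{otherwise,}\end{array}\right.
\end{equation*}
where $W_{\varphi}$ is the unique solution of $-\frac{1}{2}L^{*}LW_{\varphi}=\frac{2}{3}\varphi^{6}d\tau$ (well defined since $(M,g)$ has no conformal Killing vector field). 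As in Theorem~\ref{new-nearCMC}, $S$ maps $\mathscr{C}$ into itself and $S(\mathscr{C})$ is precompact by Proposition~\ref{compacityofT}; and since $\varphi\mapsto\|LW_{\varphi}\|_{L^{2}}$ is continuous, $S$ agrees with the continuous maps $\min\{T(\cdot),a\}$ and $0$ on the open sets $\{\|LW_{\varphi}\|_{L^{2}}<\sqrt{\kappa}\}$ and $\{\|LW_{\varphi}\|_{L^{2}}>\sqrt{\kappa}\}$, hence is half-continuous at every point of these sets.

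The heart of the argument is half-continuity at a point $\varphi$ with $\|LW_{\varphi}\|_{L^{2}}=\sqrt{\kappa}$, for which I claim $T(\varphi)(m)<\varphi(m)$ at some $m\in M$, provided $\kappa$ and $\|\sigma\|_{L^{2}}$ are small depending only on $g,\tau$. Suppose instead $\varphi\le T(\varphi)$ on $M$. Testing the Lichnerowicz equation for $T(\varphi)$ against $T(\varphi)^{7}$ and using $R>0$ together with $\int_{M}|\sigma+LW_{\varphi}|^{2}=\|\sigma\|_{L^{2}}^{2}+\|LW_{\varphi}\|_{L^{2}}^{2}=\|\sigma\|_{L^{2}}^{2}+\kappa$, exactly as in~(\ref{multip}), yields $\|T(\varphi)^{4}\|_{W^{1,2}}^{2}\le C(g,\tau)(\|\sigma\|_{L^{2}}^{2}+\kappa)$, hence $\|T(\varphi)\|_{L^{24}}^{8}\le C(g,\tau)(\|\sigma\|_{L^{2}}^{2}+\kappa)$ by the Sobolev embedding. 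Then, from the vector equation, the embedding $W^{2,6/5}\hookrightarrow W^{1,2}$, H\"older's inequality (using $d\tau\in L^{p}$, $p>3$), the embedding $L^{24}\hookrightarrow L^{6a}$ with $6a=\frac{6}{5/6-1/p}<12$, and $\varphi\le T(\varphi)$, I get
\begin{equation*}
\sqrt{\kappa}=\|LW_{\varphi}\|_{L^{2}}\le C(g,\tau)\,\|\varphi\|_{L^{6a}}^{6}\le C(g,\tau)\,\|T(\varphi)\|_{L^{24}}^{6}\le C'(g,\tau)\,\bigl(\|\sigma\|_{L^{2}}^{2}+\kappa\bigr)^{3/4},
\end{equation*}
so that $\kappa\le C'(g,\tau)^{2}(\|\sigma\|_{L^{2}}^{2}+\kappa)^{3/2}$. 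Choosing first $\kappa=\kappa(g,\tau)$ small enough that $C'^{2}\kappa^{3/2}<\kappa/2$, and then requiring $\|\sigma\|_{L^{2}}$ small enough (depending only on $g,\tau$) that $C'^{2}(\|\sigma\|_{L^{2}}^{2}+\kappa)^{3/2}<\kappa$, gives a contradiction. Hence $0<T(\varphi)(m)<\varphi(m)$ for some $m$ (recall $T(\varphi)\in C^{0}_{+}$), and by continuity of $T$ there is $\delta>0$ with $0<T(\psi)(m)<\psi(m)$ for all $\psi\in B(\varphi,\delta)\cap\mathscr{C}$. Taking $p\in(C^{0})^{*}$ defined by $p(f)=-f(m)$, we get $\langle p,S(\psi)-\psi\rangle=\psi(m)-S(\psi)(m)>0$ for every such $\psi$, whether $\|LW_{\psi}\|_{L^{2}}\le\sqrt{\kappa}$ (then $S(\psi)(m)\le T(\psi)(m)<\psi(m)$) or not (then $S(\psi)(m)=0<\psi(m)$). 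This establishes half-continuity of $S$, so by Corollary~\ref{Fixed point theorem} it has a fixed point $\varphi_{0}\in\mathscr{C}$.

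It remains to check $\varphi_{0}$ solves~(\ref{CE}). If $\varphi_{0}\equiv 0$ then $LW_{\varphi_{0}}=0$, so $\|LW_{\varphi_{0}}\|_{L^{2}}=0\le\sqrt{\kappa}$ and $S(\varphi_{0})=\min\{T(0),a\}>0$ (the Lichnerowicz equation with $w=|\sigma|\nequiv 0$ has a positive solution by Theorem~\ref{maxwell1}, since $\mathcal{Y}_{g}>0$), a contradiction. Thus $\varphi_{0}\nequiv 0$, whence $\|LW_{\varphi_{0}}\|_{L^{2}}\le\sqrt{\kappa}$ and $\varphi_{0}=\min\{T(\varphi_{0}),a\}\le T(\varphi_{0})$. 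Next, exactly as in the proofs of Proposition~\ref{Estimate1 L2} (Case~1) and Theorem~\ref{new-nearCMC}, I would bound $K=\{T(\varphi):\|LW_{\varphi}\|_{L^{2}}\le\sqrt{\kappa},\ \varphi\le T(\varphi)\}$ in $L^{\infty}$ by a constant $C=C(M,g,\tau,\sigma)$: the $T(\varphi)^{7}$-test bounds $\|T(\varphi)^{4}\|_{L^{6}}$, the vector equation then bounds $\|LW_{\varphi}\|_{L^{\infty}}$ (using $\varphi\le T(\varphi)$; one may assume $\tau\in C^{1}$ for simplicity, as in Proposition~\ref{Estimate1 L2}), so $|\sigma+LW_{\varphi}|\le C_{4}$, and Lemma~\ref{maxpriw} gives $T(\varphi)\le\psi_{0}$ where $\psi_{0}$ solves~(\ref{Lichnerowicz}) with $w=C_{4}$. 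Taking $a=C$, the inclusion $T(\varphi_{0})\in K$ forces $\varphi_{0}=\min\{T(\varphi_{0}),a\}=T(\varphi_{0})$, so $(\varphi_{0},W_{\varphi_{0}})$ solves~(\ref{CE}).

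The main obstacle is the threshold estimate in the second paragraph: one must show that as soon as $\|LW_{\varphi}\|_{L^{2}}$ reaches the chosen level $\sqrt{\kappa}$, the Lichnerowicz solution map $T$ must strictly decrease $\varphi$ somewhere. This is exactly where $\mathcal{Y}_{g}>0$ (which gives the clean $L^{24}$-bound on $T(\varphi)$ with no hypothesis on $Z(\tau)$) and the $L^{2}$-smallness of $\sigma$ are used, and the delicate point is the exponent bookkeeping from the vector equation down to $\|LW_{\varphi}\|_{L^{2}}\le C'(g,\tau)(\|\sigma\|_{L^{2}}^{2}+\kappa)^{3/4}$, carried out so that both $\kappa$ and the smallness threshold for $\|\sigma\|_{L^{2}}$ depend only on $g$ and $\tau$.
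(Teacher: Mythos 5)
Your argument is correct, and it follows the paper's general half-continuity framework (truncated map $\min\{T(\cdot),a\}$ with a ``gate'' condition, Corollary~\ref{Fixed point theorem}, the $K$-bound via Lemma~\ref{maxpriw}), but the key quantitative step is genuinely different. The paper gates $S$ on an integral condition on $\varphi$ itself, namely $\frac{7}{16}\mathcal{Y}_{g}\bigl(\int_{M}\varphi^{24}dv\bigr)^{1/3}\leq 2\int_{M}|\sigma|^{2}dv$, uses the smallness of $\|\sigma\|_{L^{2}}$ only to convert that condition into $\int_{M}|LW_{\varphi}|^{2}dv\leq\int_{M}|\sigma|^{2}dv$ via the vector-equation estimate (\ref{LWandSigma}), and obtains the crucial point $T(\varphi)(m)<\varphi(m)$ at the gate boundary by testing the Lichnerowicz equation with $T(\varphi)^{7}$ and invoking the Yamabe quotient applied to $T(\varphi)^{4}$, which forces $\int_{M}\tau^{2}T(\varphi)^{12}dv\leq 0$ under the assumption $\varphi\leq T(\varphi)$ --- no auxiliary parameter is needed, and this route yields the sharper quantitative condition recorded in Remark~\ref{remarkfarcmc} (smallness of $\|d\tau\|_{L^{p}}^{2}\|\sigma\|_{L^{2}}$ depending only on $(M,g)$). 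You instead gate on $\|LW_{\varphi}\|_{L^{2}}\leq\sqrt{\kappa}$ with an auxiliary small level $\kappa(g,\tau)$, exactly as in the near-CMC Theorem~\ref{new-nearCMC}, and close the threshold contradiction through a superlinear self-improvement inequality $\kappa\leq C'^{2}(\|\sigma\|_{L^{2}}^{2}+\kappa)^{3/2}$ obtained from $\min R>0$, Sobolev embedding and the vector equation; your exponent bookkeeping ($W^{2,6/5}\hookrightarrow W^{1,2}$, H\"older with $d\tau\in L^{p}$, $6r<12<24$) is right, and the two-stage choice ``first $\kappa$, then $\|\sigma\|_{L^{2}}$'' indeed keeps all thresholds depending only on $g,\tau$. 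What your route buys is a unified treatment of near- and far-from-CMC as the same threshold scheme; what the paper's route buys is a cleaner use of $\mathcal{Y}_{g}$ (no $\kappa$ to tune) and the more explicit smallness condition of Remark~\ref{remarkfarcmc}. One shared caveat: in the $K$-bound you pass from $\|T(\varphi)^{4}\|_{L^{6}}$ to $\|LW_{\varphi}\|_{\infty}$ assuming $\tau\in C^{1}$ ``for simplicity''; the paper makes the same simplification (and, for $d\tau\in L^{p}$ with $3<p\leq 12$, both proofs would need a short elliptic bootstrap there), so this is not a gap specific to your argument.
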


\begin{proof} Regarding Remark \ref{assumptionR}, we may assume that $R>0$. We define

\begin{equation}\label{definefar-CMC}
S(\varphi)=\left\{\begin{array}{ll}
\min\{T(\varphi),a\} &\textrm{if $\frac{4(N+1)}{(N+2)^{2}}\mathcal{Y}_{g}\left(\int_{M}{\varphi^{\frac{N(N+2)}{2}}dv}\right)^{\frac{2}{N}}\leq 2\int_{M}{|\sigma|^{2}dv}$}\\
0&\textrm{otherwise,}
\end{array} \right.
\end{equation}
where $a$ is to be determined later. Let $$\mathscr{C}=\left\{\varphi\in C^{0}(M):~ \|\varphi\|_{\infty}\leq a\right\}.$$

Similarly to the previous proof, $S$ maps $\mathscr{C}$ into itself and $S(\mathscr{C})$ is precompact since $T$ is a compact map from $C^{0}$ into $C^{0}_{+}$.
Assume that the half-continuity of $S$ is proven. Then Corollary \ref{Fixed point theorem} implies that $S$ admits a fixed point $\varphi_{0}$. Note that $\varphi_{0}$
is not zero. Indeed, if $0=\varphi_{0}=S(\varphi_{0})$, it follows that
$\frac{4(N+1)}{(N+2)^{2}}\mathcal{Y}_{g}\left(\int_{M}{\varphi_{0}^{\frac{N(N+2)}{2}}dv}\right)^{\frac{2}{N}}=0\leq 2\int_{M}{|\sigma|^{2}dv}$, and hence from the definition of
$S$ we get that $S(\varphi_{0})=\min\left\{T(\varphi_{0}),a\right\}>0$ which is a contradiction with $S(\varphi_{0})=0$. Since $\varphi_{0}\nequiv 0$, so is $S(\varphi_{0})$,
and the definition of $S$ implies that
$$
\frac{4(N+1)}{(N+2)^{2}}\mathcal{Y}_{g}\left(\int_{M}{\varphi_{0}^{\frac{N(N+2)}{2}}dv}\right)^{\frac{2}{N}}\leq 2\int_{M}{|\sigma|^{2}dv}
~~~\mbox{and}~~~
\varphi_{0}=S(\varphi_{0})=\min\{T(\varphi_{0}),a\}\leq  T(\varphi_{0}).
$$ 
On the other hand, the first condition on $\varphi_{0}$ and the smallness assumption on $\|\sigma\|_{L^{2}}$ implies
$$\int_{M}{|LW_{\varphi_{0}}|^{2}dv}\leq \int_{M}{|\sigma|^{2}dv}.$$
Indeed,
\begin{equation}\label{LWandSigma}
\begin{aligned}
\int_{M}{|LW_{\varphi_{0}}|^{2}dv}&\leq C(g)\|\varphi_{0}^{N} d\tau\|^{2}_{L^{\frac{2n}{n+2}}}~~\mbox{(by Sobolev imbedding theorem)}\\ 
&\leq C\|d\tau\|_{L^{p}}^{2}\left(\int_{M}{\varphi_{0}^{\frac{2nNp}{(n+2)p-2n}}dv}\right)^{\frac{(n+2)p-2n}{np}}~~\mbox{(by H\"{o}lder inequality)}\\
     &\leq C\|d\tau\|_{L^{p}}^{2}\left(\int_{M}{\varphi_{0}^{\frac{N(N+2)}{2}}dv}\right)^{\frac{4}{N+2}}~~\mbox{(by H\"{o}lder inequality and $p>n$)}\\ &\leq C\|d\tau\|_{L^{p}}^{2}\left(\frac{(N+2)^{2}}{2(N+1)\mathcal{Y}_{g}}\right)^{\frac{2N}{N+2}}\|\sigma\|_{L^{2}}^{\frac{2(N-2)}{N+2}}\int_{M}{|\sigma|^{2}dv}~~\mbox{(by the first condition on $\varphi_{0}$)}\\
     &\leq \int_{M}{|\sigma|^{2}dv},
\end{aligned}
\end{equation}
where the last inequality holds provided $\|\sigma\|_{L^{2}}$ is small enough so that
$C\|d\tau\|_{L^{p}}^{2}\left(\frac{(N+2)^{2}}{2(N+1)\mathcal{Y}_{g}}\right)^{\frac{2N}{N+2}}\|\sigma\|_{L^{2}}^{\frac{2(N-2)}{N+2}}\leq 1$.
Setting
$$K=\left\{T(\varphi):~~\|LW_{\varphi}\|_{L^{2}}\leq \|\sigma\|_{L^{2}}~~\mbox{and}~~\varphi\leq T(\varphi)\right\},$$
similarly to the proof of Theorem \ref{new-nearCMC}, we then obtain that $T(K)$
is uniformly bounded in $L^{\infty}$ by $C=C(g,\tau,\sigma)$. Thus, taking $a\geq C$, since $\varphi_{0}\in K$, we obtain from the second condition on $\varphi_{0}$ that $\varphi_{0}=T(\varphi_{0})$, which completes our proof. \\
 
Now we prove the half-continuity of $S$ on $\mathscr{C}$. Since $T$ is continuous, so is $S$ at $\varphi$ satisfying 
$$\frac{4(N+1)}{(N+2)^{2}}\mathcal{Y}_{g}\left(\int_{M}{\varphi^{\frac{N(N+2)}{2}}dv}\right)^{\frac{2}{N}}\ne 2\int_{M}{|\sigma|^{2}dv}.$$

For the remaining $\varphi$, i.e. when $\frac{4(N+1)}{(N+2)^{2}}\mathcal{Y}_{g}\left(\int_{M}{\varphi^{\frac{N(N+2)}{2}}dv}\right)^{\frac{2}{N}}= 2\int_{M}{|\sigma|^{2}dv}$,
first note that, arguing as to get \eqref{LWandSigma}, we have
\begin{equation}\label{LWandSigma2}
\int_{M}{|LW_{\varphi}|^{2}dv}\leq \int_{M}{|\sigma|^{2}dv}.
\end{equation}

Next we prove that there exists $m\in M$ s.t. $\varphi(m)> T(\varphi)(m)$. We argue by contradiction. Assume that it is not true, then
\begin{equation}\label{farkey}
\begin{aligned}
\frac{4(N+1)}{(N+2)^{2}}\mathcal{Y}_{g}\left(\int_{M}{T(\varphi)^{\frac{N(N+2)}{2}}dv}\right)^{\frac{2}{N}}&\geq\frac{4(N+1)}{(N+2)^{2}}\mathcal{Y}_{g}\left(\int_{M}{\varphi^{\frac{N(N+2)}{2}}dv}\right)^{\frac{2}{N}}= 2\int_{M}{|\sigma|^{2}dv}\\
&\geq \int_{M}{|\sigma|^{2}dv}+\int_{M}{|LW|^{2}dv}~~~~\left(\mbox{by (\ref{LWandSigma2})}\right).
\end{aligned}
\end{equation}
On the other hand, multiplying the Lichnerowicz equation by $T(\varphi)^{N+1}$ and integrating over $M$, we obtain 
\begin{equation}\label{contradiction}
\frac{16(n-1)(N+1)}{(n-2)(N+2)^{2}}\int_{M}{|\nabla T(\varphi)^{\frac{N+2}{2}}|^{2} dv}+\int_{M}{RT(\varphi)^{N+2} dv}+\frac{n-1}{n}\int_{M}{\tau^{2}T(\varphi)^{2N}dv}
=\int_{M}{|\sigma|^{2}dv}+\int_{M}{|LW_{\varphi}|^{2}dv}.
\end{equation}
Since
\begin{align*}
\frac{16(n-1)(N+1)}{(n-2)(N+2)^{2}}\int_{M}{|\nabla T(\varphi)^{\frac{N+2}{2}}|^{2} dv}&+\int_{M}{RT(\varphi)^{N+2} dv}\\
&\geq \frac{4(N+1)}{(N+2)^{2}}\left(\frac{4(n-1)}{n-2}\int_{M}{|\nabla T(\varphi)^{\frac{N+2}{2}}|^{2} dv}+\int_{M}{RT(\varphi)^{N+2} dv}\right)\quad(\mbox{since $R>0$})\\
&\geq \frac{4(N+1)}{(N+2)^{2}}\mathcal{Y}_{g}\left(\int_{M}{T(\varphi)^{\frac{N(N+2)}{2}}dv}\right)^{\frac{2}{N}}~~~\mbox{(by the definition of $\mathcal{Y}_{g}$)}\\
&\geq \int_{M}{|\sigma|^{2}dv}+\int_{M}{|LW|^{2}dv},~~~~~~~~~~~~~~~~~~~~(\mbox{by (\ref{farkey})})
\end{align*}  
it follows from \eqref{contradiction} that $\int_{M}{\tau^{2}T(\varphi)^{2N}dv}\leq 0$, which is a contradiction.\\

Now let $m\in M$ s.t. $0<T(\varphi)(m)<\varphi(m)$ (note that $T(\varphi)\in C_{+}^{0}$). By the continuity of $T$, we obtain
that there exists $\delta=\delta(\varphi)$ s.t. for all $\psi\in B(\varphi,\delta)\cap \mathscr{C}$,
$$0<T(\psi)(m)<\psi(m),$$
and hence from the fact that
\begin{equation*}
-\bigl(S(\psi)(m)-\psi(m)\bigr)=\left\{\begin{array}{ll}
-\bigl(\min\{T(\psi)(m),a\}-\psi(m)\bigr)&\textrm{if $\frac{4(N+1)}{(N+2)^{2}}\mathcal{Y}_{g}\left(\int_{M}{\psi^{\frac{N(N+2)}{2}}dv}\right)^{\frac{2}{N}}\leq 2\int_{M}{|\sigma|^{2}dv}$}\\
\psi(m)&\textrm{otherwise,}
\end{array} \right.
\end{equation*}
we conclude that $-\bigl(S(\psi)(m)-\psi(m)\bigr)>0,~\forall \psi\in B(\varphi,\delta)\cap \mathscr{C}$.

Hence, by the definition of  half-continuity applied with $p(f)=-f(m)$ for all $f\in C^{0}$, we obtain that $S$ is half-continuous at $\varphi$. The proof is completed.
\end{proof}

\begin{Remark}\label{remarkfarcmc}
From the proof above, a more precise assumption for Theorem \ref{far-CMC} is that $\|d\tau\|_{L^{p}}^{2}\|\sigma\|_{L^{2}}^{\frac{2(N-2)}{N+2}}$ is small enough, only depending on $(M,g)$. 
\end{Remark}

\subsection{A Sufficient Condition to the Existence of Solutions}
We note that the main ingredient to prove the half-continuity of $S$ in the two proofs above is the existence of $m\in M$ s.t. $T(\varphi)(m)<\varphi(m)$.
This leads us to propose a sufficient condition for the existence of a solution to \eqref{CE}, which is much weaker than the concept of a global supersolution
(see \cite{HNT} or \cite{Maxwell}). We will begin with the notion of a local supersolution.

\begin{Definition}
Let data be given on $M$ as specified in \eqref{condintial} and assume that \eqref{condinitial2} holds. We call $\psi\in L^{\infty}_{+}$ a local supersolution to
\eqref{CE} if for every positive function $\varphi$ satisfying $\varphi\leq \psi$ and $\varphi=\psi$ somewhere, then there exists $m\in M$ such that $T(\varphi)(m)\leq \varphi(m)$.
\end{Definition}
\noindent Recall that $\psi\in L^{\infty}_{+}$ is called a global supersolution to \eqref{CE} if for all $m\in M$,
$$\sup_{\substack{\varphi\leq \psi,\\ \varphi\in L^{\infty}_{+}}}{T(\varphi)(m)}\leq \psi(m).$$
It follows immediately that
\begin{Proposition}\label{local-global}
A global supersolution is a local supersolution.
\end{Proposition}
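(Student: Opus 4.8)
The statement is essentially a matter of unwinding the two definitions at a single well-chosen point, so the plan is short. Let $\psi\in L^{\infty}_{+}$ be a global supersolution of (\ref{CE}), i.e. $\sup_{\varphi\leq\psi,\ \varphi\in L^{\infty}_{+}}T(\varphi)(m)\leq\psi(m)$ for every $m\in M$. To verify that $\psi$ is a local supersolution, I would take an arbitrary positive function $\varphi$ with $\varphi\leq\psi$ and $\varphi=\psi$ somewhere, say $\varphi(m_{0})=\psi(m_{0})$ for some $m_{0}\in M$ (note $T(\varphi)\in W^{2,p}\hookrightarrow C^{0}$, so evaluation at $m_{0}$ is meaningful, and $\psi$ may be taken continuous as in the applications).

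The key step is then simply to test the global supersolution inequality at the point $m_{0}$: since $\varphi\leq\psi$ and $\varphi\in L^{\infty}_{+}$, it is one of the competitors in the supremum defining a global supersolution, whence
$$T(\varphi)(m_{0})\leq\sup_{\substack{\phi\leq\psi,\\ \phi\in L^{\infty}_{+}}}T(\phi)(m_{0})\leq\psi(m_{0})=\varphi(m_{0}).$$
Thus $m=m_{0}$ is a point with $T(\varphi)(m)\leq\varphi(m)$, which is exactly the condition required in the definition of a local supersolution. Since $\varphi$ was arbitrary among positive functions dominated by $\psi$ and touching $\psi$, this shows $\psi$ is a local supersolution.

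There is no real obstacle here: the only point worth a word is that the hypothesis ``$\varphi=\psi$ somewhere'' hands us the location $m_{0}$ at which the (much stronger, because uniform over $m$) global inequality gets used. The contrast the proposition records is that a local supersolution needs such a point only when $\varphi$ actually touches $\psi$, whereas a global supersolution controls $T(\varphi)$ pointwise everywhere and for every $\varphi\leq\psi$; hence the implication is one-directional, and no converse is claimed.
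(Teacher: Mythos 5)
Your proof is correct and is essentially identical to the paper's: in both cases one takes a point $m_{0}$ where $\varphi$ touches $\psi$ and applies the global supersolution inequality there to get $T(\varphi)(m_{0})\leq\psi(m_{0})=\varphi(m_{0})$, which is exactly the local supersolution condition. No gaps.
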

\begin{proof} Assume that $\psi$ is a global supersolution to \eqref{CE}. Let $\varphi$ be an arbitrary positive function satisfying $\varphi\leq \psi$ and $\varphi=\psi$ somewhere.
Taking $m\in M$ s.t. $\varphi(m)=\psi(m)$, by definition of a global supersolution, it is clear that 
$$T(\varphi)(m)\leq \psi(m)= \varphi(m),$$
and hence $\psi$ is a local supersolution.
\end{proof}
\begin{Theorem}\label{localsuper}
 Let data be given on $M$ as specified in \eqref{condintial} and assume that \eqref{condinitial2} holds. Assume that $\psi\in L^{\infty}_{+}$ is a local supersolution to \eqref{CE},
 then \eqref{CE} admits a solution.
\end{Theorem}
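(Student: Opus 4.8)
The plan is to mimic the proofs of Theorems \ref{new-nearCMC} and \ref{far-CMC}: build a truncated self-map $S$ of a closed convex subset of $C^{0}(M)$ all of whose fixed points are genuine solutions of (\ref{CE}), and produce a fixed point from Corollary \ref{Fixed point theorem}, the key subtlety being that $S$ is only half-continuous (not continuous) at the truncation boundary. Let $C=C(M,g,\tau,\sigma,\psi)$ be the a priori bound constructed below, fix $a=\max\{C,\|\psi\|_{L^{\infty}}\}$, set $\mathscr{C}=\{\varphi\in C^{0}(M):0\le\varphi\le a\}$, and define $S:\mathscr{C}\to\mathscr{C}$ by
\[
S(\varphi)=\min\{T(\varphi),a\}\ \text{ if }\ \varphi\le\psi\ \text{ a.e.,}\qquad S(\varphi)=0\ \text{ otherwise.}
\]
Since $T$ is compact (Proposition \ref{compacityofT}) and $f\mapsto\min\{f,a\}$ is $1$-Lipschitz for the sup norm, $S(\mathscr{C})$ is precompact in $C^{0}(M)$ and $S$ maps $\mathscr{C}$ into itself.

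First I would record the a priori bound: if $\varphi$ is a positive function with $\varphi\le\psi$, then $\|\varphi\|_{\infty}\le\|\psi\|_{\infty}$, so the vector equation together with elliptic regularity for $L^{*}L$ (recall $(M,g)$ has no conformal Killing vector field) and the Sobolev embedding $W^{2,p}\hookrightarrow C^{1}$ give $\|LW_{\varphi}\|_{\infty}\le C_{1}$; hence $|\sigma+LW_{\varphi}|\le C_{2}$ in $L^{\infty}$, and Lemma \ref{maxpriw} yields $T(\varphi)\le\psi_{0}$, where $\psi_{0}$ is the solution of (\ref{Lichnerowicz}) associated with $w=C_{2}$, so one may take $C=\max\psi_{0}$. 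Next I would verify that every fixed point $\varphi_{0}=S(\varphi_{0})$ solves (\ref{CE}). Indeed $\varphi_{0}$ cannot belong to the second branch, for that would force $\varphi_{0}=0$, while $0\le\psi$ places $0$ in the first branch, where $S(0)=\min\{T(0),a\}>0$ by Proposition \ref{compacityofT}; the same observation shows $\varphi_{0}$ is not identically $0$. Hence $\varphi_{0}\le\psi$ and $\varphi_{0}=\min\{T(\varphi_{0}),a\}\le T(\varphi_{0})$, so the a priori bound gives $T(\varphi_{0})\le C\le a$, whence $\varphi_{0}=T(\varphi_{0})$ and $(\varphi_{0},W_{\varphi_{0}})$ is a solution of (\ref{CE}).

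It then remains to show that $S$ is half-continuous, and this is where the hypothesis that $\psi$ is a local supersolution is used. Away from the set $\Gamma$ of those $\varphi\le\psi$ (a.e.) that touch $\psi$ — i.e. those with $\varphi\le\psi$ a.e. but not $\varphi\le\psi-\varepsilon$ a.e. for any $\varepsilon>0$ — the map $S$ is continuous, hence half-continuous. For $\varphi\in\Gamma$ with $\varphi\ne S(\varphi)$, $\varphi$ is a positive function satisfying $\varphi\le\psi$ and $\varphi=\psi$ somewhere, so the local supersolution property furnishes $m\in M$ with $T(\varphi)(m)\le\varphi(m)$; arguing exactly as in the proofs of Theorems \ref{new-nearCMC} and \ref{far-CMC}, one deduces that $S(\phi)(m)<\phi(m)$ for every $\phi$ in a $C^{0}$-neighborhood of $\varphi$ with $\phi\ne S(\phi)$, so that $p(f)=-f(m)\in(C^{0}(M))^{*}$ witnesses half-continuity at $\varphi$. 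Corollary \ref{Fixed point theorem} then yields a fixed point of $S$, which by the previous paragraph is the desired solution.

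I expect the main obstacle to be precisely this last step. One must pass from the non-strict inequality $T(\varphi)(m)\le\varphi(m)$ provided by the definition of a local supersolution to the strict inequality $S(\phi)(m)<\phi(m)$ needed on a whole neighborhood of $\varphi$ (in the proofs of Theorems \ref{new-nearCMC} and \ref{far-CMC} strictness came for free from the strict inequality in the threshold condition), and one must make precise the notion of ``$\varphi$ touches $\psi$'' together with the relevant point evaluations when $\psi$ belongs only to $L^{\infty}$. Everything else is a routine adaptation of arguments already carried out above.
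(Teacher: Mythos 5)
Your setup (the truncated self-map $S$ on $\mathscr{C}=\{0\le\varphi\le a\}$, the a priori bound via the vector equation and Lemma \ref{maxpriw}, and the verification that any fixed point of $S$ is a genuine solution of (\ref{CE})) matches the paper's proof. The problem is exactly the step you flag and leave open: the half-continuity of $S$ at a $\varphi$ that touches $\psi$. The local supersolution hypothesis only gives a point $m$ with $T(\varphi)(m)\le\varphi(m)$, and in the hard case --- $\varphi\le T(\varphi)$ everywhere with $\varphi\ne T(\varphi)$ --- this inequality is necessarily an equality, $T(\varphi)(m)=\varphi(m)$. Then the one-point functional $p(f)=-f(m)$ cannot witness half-continuity: already at $y=\varphi$ itself (which lies in every neighborhood and satisfies $\varphi\ne S(\varphi)$) one has $\langle p,S(\varphi)-\varphi\rangle=-\bigl(T(\varphi)(m)-\varphi(m)\bigr)=0$, not $>0$, so the claim that ``arguing exactly as in Theorems \ref{new-nearCMC} and \ref{far-CMC} one deduces $S(\phi)(m)<\phi(m)$ on a neighborhood'' fails. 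In those earlier theorems strictness was produced by an integral argument; here no such mechanism is available, and your proposal supplies none.

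The paper closes this gap with a different choice of functional. In the case $\varphi\le T(\varphi)$, $T(\varphi)\nequiv\varphi$, pick $q\in M$ with $T(\varphi)(q)>\varphi(q)$ and constants $A,B>0$ with $A\varphi(m)-B\varphi(q)>0$ (possible since $\varphi(m)=T(\varphi)(m)>0$), and use $p(f)=-Af(m)+Bf(q)$. At $\varphi$ one has $-A\bigl(T(\varphi)(m)-\varphi(m)\bigr)+B\bigl(T(\varphi)(q)-\varphi(q)\bigr)=B\bigl(T(\varphi)(q)-\varphi(q)\bigr)>0$, and by continuity of $T$ both this quantity and $A\eta(m)-B\eta(q)$ stay positive for $\eta$ in a small ball around $\varphi$, which covers both branches of $S$ (the branch $\eta\le\psi$ and the branch $S(\eta)=0$). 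This two-point functional is the one genuinely new idea in the proof of Theorem \ref{localsuper}, and it is precisely the ingredient missing from your argument; the one-point argument you describe only suffices in the easier subcase where some $m_{0}$ satisfies the strict inequality $T(\varphi)(m_{0})<\varphi(m_{0})$, which the paper treats separately. (Your secondary worry about making ``$\varphi$ touches $\psi$'' precise for $\psi\in L^{\infty}$ is legitimate but minor; the paper itself glosses over it.)
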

\begin{proof}  Let $\mathscr{C}$ be given by
$$\mathscr{C}=\left\{\varphi\in C^{0}:0\leq\varphi\leq b\right\},$$
with $b$ large enough s.t.
$$\sup_{\varphi\leq \psi}{\|T(\varphi)\|_{\infty}}<b.$$
Here recall that from the vector equation, the set  $\left\{LW_{\varphi}:\varphi\leq \psi\right \}$ is uniformly bounded in $L^{\infty}$ by $b_{1}=b_{1}(M,g,\psi,\tau)$. Then, by
Lemma \ref{maxpriw}, $\{T(\varphi):~\varphi\leq \psi\}$ is uniformly bounded (in $L^{\infty}$) by $\max{\psi_{0}}$, where $\psi_{0}$ is the unique solution to \eqref{Lichnerowicz}
associated to $w= b_{1}+\|\sigma\|_{\infty}$, and hence $b$ is well-defined.\\

We define
\begin{equation}\label{define:S1}
S(\varphi)=\left\{\begin{array}{ll}
T(\varphi)&\textrm{if $\varphi\leq \psi$}\\
0&\textrm{otherwise.}
\end{array} \right.
\end{equation}
By Proposition \ref{compacityofT}, $T$ is a compact map from $C^{0}$ into $C^{0}_{+}$. Then $S$ maps $\mathscr{C}$ into itself and $S(\mathscr{C})$ is precompact. Assume for the moment that
the half-continuity of $S$ is proven. By Corollary \ref{Fixed point theorem}, $S$ has a fixed point $\varphi_{0}$. We claim that  $\varphi_{0}\nequiv 0$. Indeed, if is not true, then
$0=\varphi_{0}=S(\varphi_{0})$, hence $\varphi_{0}=0\leq \psi$. We get from the definition of $S$ that $S(\varphi_{0})=T(\varphi_{0})>0$ which is a contradiction with $S(\varphi_{0})=0$.
Since $\varphi_{0}\nequiv 0$, so is $S(\varphi_{0})$, and the definition of $S$ implies that  $\varphi_{0}=S(\varphi_{0})=T(\varphi_{0})$.\\

Now we prove the half-continuity of $S$ on $\mathscr{C}$. Since $T$ is continuous, so is $S$ at $\varphi$ satisfying $\varphi<\psi$ everywhere or $\varphi>\psi$ somewhere. The only
remaining work is to show that $S$ is half-continuous at $\varphi$ s.t. $\varphi\leq \psi$ and $\varphi=\psi$ somewhere.\\

For such a $\varphi$, assume that there exists $m_{0}\in M$ s.t.
$$T(\varphi)(m_{0})<\varphi(m_{0})$$
By the continuity of $T$, we can choose $\delta=\delta(\varphi)>0$ s.t. for all $\eta\in B(\varphi,\delta)\cap \mathscr{C}$,
$$T(\eta)(m_{0})<\eta(m_{0}),$$
and hence from the fact that
\begin{equation*}
-\bigl(S(\eta)(m_{0})-\eta(m_{0})\bigr)=\left\{\begin{array}{ll}
-\bigl(T(\eta)(m_{0})-\eta(m_{0})\bigr)&\textrm{if $\eta\leq \psi$}\\
\eta(m_{0})&\textrm{otherwise,}
\end{array} \right.
\end{equation*}
we obtain that $-\bigl(S(\eta)(m_{0})-\eta(m_{0})\bigr)>0,~\forall \eta\in B(\varphi,\delta)\cap \mathscr{C}$. Now, by the definition of half-continuity applied with $p(f)=-f(m_{0})$
for all $f\in C^{0}$, we conclude that $S$ is half-continuous at $\varphi$.\\

It remains to study the case when $\varphi\leq T(\varphi)$. Since $\psi$ is a local supersolution, there exists $m$ s.t. $T(\varphi)(m)\leq\varphi(m)$ and since $\varphi\leq T(\varphi)$,
we have $T(\varphi)(m)=\varphi(m)$. Because the case $T(\varphi)\equiv \varphi$ is trivial $\bigl[$$(\varphi,W_{\varphi})$ is then a solution to \eqref{CE}$\bigr]$, we can assume that
there exists $q\in M$ s.t. $T(\varphi)(q)> \varphi(q)$. Let $A,B>0$ satisfying
\begin{equation}\label{sfcd1}
A\varphi(m)-B\varphi(q)>0.
\end{equation}
Note that since $\varphi(m)=T(\varphi)(m)>0$ ($T(\varphi)\in C_{+}^{0}$), such $A,B$ exist. On the other hand, by the assumptions on $q$ and $m$,
\begin{equation}\label{sfcd2}
-A\bigl(T(\varphi)(m)-\varphi(m)\bigr)+B\bigl(T(\varphi)(q)-\varphi(q)\bigr)=-A.0+B\bigl(T(\varphi)(q)-\varphi(q)\bigr)>0.
\end{equation}
By (\ref{sfcd1}), (\ref{sfcd2}) and the continuity of $T$, there exists $\delta_{1}=\delta_{1}(\varphi)>0$ small enough s.t. for all $\eta\in B(\varphi,\delta_{1})\cap \mathscr{C}$
$$A\eta(m)-B\eta(q)>0$$
and
$$-A\left(T(\eta)(m)-\eta(m)\right)+B\left(T(\eta)(q)-\eta(q)\right)>0.$$
Therefore, by the fact that
\begin{equation*}
-A\bigl(S(\eta)(m)-\eta(m)\bigr)+B\bigl(S(\eta)(q)-\eta(q)\bigr)=\left\{\begin{array}{ll}
-A\left(T(\eta)(m)-\eta(m)\right)+B\left(T(\eta)(q)-\eta(q)\right)&\textrm{if $\eta\leq \psi$}\\
A\eta(m)-B\eta(q)&\textrm{otherwise,}
\end{array} \right.
\end{equation*}
we obtain that $-A\bigl(S(\eta)(m)-\eta(m)\bigr)+B\bigl(S(\eta)(q)-\eta(q)\bigr)>0$ for all $\eta\in B(\varphi,\delta_{1})\cap\mathscr{C}$.
Now, by the definition of half-continuity applied with $p(f)=-Af(m)+Bf(q)$ for all $f\in C^{0}$, we can conclude that $S$ is half-continuous at
$\varphi$. The proof is completed.
\end{proof}

A direct consequence of Theorem \ref{localsuper} is the following:
\begin{Corollary}
For every $\varphi\in L^{\infty}$ large enough, if $T(\varphi)\leq \varphi$ somewhere,  then \eqref{CE} admits a solution. 
\end{Corollary}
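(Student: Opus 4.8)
The plan is to read this off Theorem~\ref{localsuper}: I would show that a function $\varphi$ which is ``large enough'' and satisfies $T(\varphi)\leq\varphi$ at one point is a local supersolution of (\ref{CE}), and then apply Theorem~\ref{localsuper} with $\psi=\varphi$. Thus the whole proof reduces to checking the defining property of a local supersolution — given any positive $\varphi'\leq\varphi$ with $\varphi'=\varphi$ at some point $m_{1}$, exhibit $m\in M$ with $T(\varphi')(m)\leq\varphi'(m)$.

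First I would set up a uniform a priori bound over the order interval below $\varphi$. From the vector equation and the Sobolev embedding, $\|LW_{\varphi'}\|_{\infty}\leq C(M,g)\|d\tau\|_{L^{p}}\|\varphi'\|_{\infty}^{6}\leq C(M,g)\|d\tau\|_{L^{p}}\|\varphi\|_{\infty}^{6}$ for every positive $\varphi'\leq\varphi$, so $|\sigma+LW_{\varphi'}|$ is bounded by a constant $\bar{w}=\bar{w}(M,g,\tau,\sigma,\|\varphi\|_{\infty})$. By Lemma~\ref{maxpriw} applied to the Lichnerowicz equation (monotonicity of the positive solution in the source), $T(\varphi')\leq\psi_{0}$, where $\psi_{0}\in W^{2,p}_{+}$ is the unique solution of (\ref{Lichnerowicz}) associated with the constant $w=\bar{w}$ (it exists by Theorem~\ref{maxwell1}, invoking Remark~\ref{assumptionR} for the sign of $R$). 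I would then take ``$\varphi$ large enough'' to mean precisely $\varphi\geq\psi_{0}$; this is an honest (if implicit) condition, since $\psi_{0}$ depends only on the data and on $\|\varphi\|_{\infty}$. Under it, for $\varphi'\leq\varphi$ with $\varphi'(m_{1})=\varphi(m_{1})$,
$$T(\varphi')(m_{1})\leq\psi_{0}(m_{1})\leq\varphi(m_{1})=\varphi'(m_{1}),$$
so $m=m_{1}$ does the job, $\varphi$ is a local supersolution, and Theorem~\ref{localsuper} produces a solution of (\ref{CE}). In this formulation the hypothesis $T(\varphi)\leq\varphi$ somewhere is subsumed by largeness; it is kept in the statement because, as in the remark preceding the corollary, ``$T(\varphi)\leq\varphi$ at a point'' is exactly the ingredient that drives the half-continuity argument in the proof of Theorem~\ref{localsuper}.

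The step I expect to be the main obstacle is the uniform $L^{\infty}$ bound on $T(\varphi')$ for all $0<\varphi'\leq\varphi$ together with the correct calibration of ``large enough'': one needs a single barrier $\psi_{0}$ lying above every $T(\varphi')$ with $\varphi'\leq\varphi$ yet below $\varphi$, so the growth of $\psi_{0}$ as a function of $\bar{w}$ (hence of $\|\varphi\|_{\infty}$) must be controlled. This is where the sign of $\mathcal{Y}_{g}$ enters and where one may have to split cases exactly as in the proof of Proposition~\ref{Estimate1 L2}; once the barrier is secured, the conclusion is immediate from Lemma~\ref{maxpriw} and Theorem~\ref{localsuper}.
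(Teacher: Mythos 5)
There is a genuine gap, and it starts with the reading of the statement. The corollary is stated as a direct consequence of Theorem \ref{localsuper}, and the intended hypothesis is quantified over \emph{all} sufficiently large $\varphi$: if every $\varphi\in L^{\infty}$ that is large enough (say, every positive $\varphi$ lying below and touching a sufficiently large constant $b$, equivalently $\|\varphi\|_{\infty}=b$ large) satisfies $T(\varphi)\leq\varphi$ at some point, then the constant function $\psi\equiv b$ is, by definition, a local supersolution of (\ref{CE}), and Theorem \ref{localsuper} immediately yields a solution. That is the whole (one-line) argument. You instead read the hypothesis as concerning a single large $\varphi$ with $T(\varphi)\leq\varphi$ at one point; that version does not follow from Theorem \ref{localsuper} with $\psi=\varphi$, because the local-supersolution property requires the touching inequality for \emph{every} positive $\varphi'\leq\varphi$ with $\varphi'=\varphi$ somewhere, about which a hypothesis on $T(\varphi)$ alone says nothing. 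You implicitly acknowledge this by declaring the hypothesis ``subsumed by largeness,'' i.e.\ you discard the very assumption the corollary is about.

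What you prove in its place is both different and self-defeating. Redefining ``large enough'' as $\varphi\geq\psi_{0}$, where $\psi_{0}$ (via Lemma \ref{maxpriw}) dominates $T(\varphi')$ for all positive $\varphi'\leq\varphi$, is precisely the statement that $\varphi$ is a \emph{global} supersolution; existence then already follows from the classical global sub/supersolution method, and the point of Section 4 --- that local supersolutions suffice --- is lost. Worse, the calibration is self-referential and may be vacuous: $\psi_{0}$ is the solution of (\ref{Lichnerowicz}) with $w=\bar{w}\sim\|\varphi\|_{\infty}^{6}$, and you then require $\varphi\geq\psi_{0}$, hence $\|\varphi\|_{\infty}\geq\|\psi_{0}\|_{\infty}$. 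When $\tau$ has zeros, Proposition \ref{counterexample} gives $\|\psi_{0}\|_{\infty}^{6}/\bar{w}\to\infty$, so $\|\psi_{0}\|_{\infty}>\|\varphi\|_{\infty}$ once $\|\varphi\|_{\infty}$ is large and no admissible $\varphi$ exists; even when $\tau>0$, one only has $\psi_{0}^{6}\leq C(g,\tau,\sigma)\max\{\bar{w},1\}$, so $\psi_{0}$ is of the same order as $\|\varphi\|_{\infty}$ up to a constant that need not be $\leq 1$, and again your class of ``large enough'' $\varphi$ may be empty. So the argument, as written, proves a possibly vacuous statement that is not the corollary; the correct route is simply to observe that the hypothesis makes a large constant a local supersolution and to quote Theorem \ref{localsuper}.
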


\renewcommand{\bibname}{References}


{\small Laboratoire de Mathématiques et Physique Théorique Université de Tours,\\
UFR Sciences et Techniques,\\
Parc de Grandmont,\\
37200 Tours - FRANCE\\
\textit{E-mail:  The-Cang.Nguyen@lmpt.univ-tours.fr}}

\end{document}